\title{Structured Cooperative Multi-Agent Reinforcement Learning: a Bayesian Network Perspective}
\author{Shahbaz P Qadri Syed and He Bai}
\newtheorem{thm}{Theorem}[section]
\newtheorem{defin}[thm]{Definition}
\newtheorem{assume}[thm]{Assumption}
\newtheorem{remark}[thm]{Remark}
\newtheorem{property}[thm]{Property}
\NewDocumentCommand{\longdash}{ O{2} }
  { \prg_replicate:nn { #1 - 1 } { \negthinspace -- } }
\newcommand{\R}{\mathbb{R}}
\newcommand{\Z}{\mathbb{Z}}
\newcommand{\I}{\mathcal{I}}
\newcommand{\mS}{\mathcal{S}}
\newcommand{\mZ}{\mathcal{Z}}
\newcommand{\A}{\mathcal{A}}
\newcommand{\gq}{\mathbf{g}^i_Q}
\newcommand{\gc}{\mathbf{g}^i_C}
\newcommand{\gpi}{\mathbf{g}^\pi_i}
\newcommand{\var}{\textbf{tVar}}
\newcommand{\Var}{\textbf{Var}}
\newcommand{\tr}[1]{\text{Tr}\left(#1\right)}
\newcommand{\dq}{\delta_Q}
\newcommand{\dqh}{\delta_{\widehat{Q}}}
\newcommand{\tbf}[1]
\begin{document}

\maketitle
\begin{abstract}
The empirical success of multi-agent reinforcement learning (MARL) has motivated the search for more efficient and scalable algorithms for large scale multi-agent systems. 
However,  existing state-of-the-art algorithms do not fully exploit inter-agent coupling information to develop MARL algorithms. In this paper, we propose a systematic approach to leverage structures in the inter-agent couplings for efficient model-free reinforcement learning. We model the cooperative MARL problem via a Bayesian network and characterize the subset of agents, termed as the \textit{value dependency set}, whose information is required by each agent to estimate its local action value function exactly. 
Moreover, we propose a \textit{partially decentralized training decentralized execution} (P-DTDE) paradigm based on the value dependency set. 
We theoretically establish that the total variance of our P-DTDE policy gradient estimator is
less than the centralized training decentralized execution (CTDE) policy gradient estimator. We derive a multi-agent policy gradient theorem based on the P-DTDE scheme and develop a scalable actor-critic algorithm. We demonstrate the efficiency and scalability of the proposed algorithm on multi-warehouse resource allocation and multi-zone temperature control examples. For dense value dependency sets, we propose an approximation scheme based on truncation of the Bayesian network and empirically show that it achieves a faster convergence than the exact value dependence set {for applications with a large number of agents}.  
\end{abstract}

\section{Introduction}
The recent success of reinforcement learning (RL) has demonstrated strong evidence in achieving and exceeding human-level performance in complex tasks. 
RL has been studied extensively in a wide range of application domains such as games, robotics, logistics, transportation systems, and power systems. When multiple learning agents interact with the environment, the RL problem becomes much more challenging  due to non-unique learning goals, non-stationarity, and the curse of dimensionality~(\cite{zhang2021multi}).  
The main hindrance to addressing the scalability issue in multi-agent reinforcement learning (MARL) is the \textit{combinatorial nature} of MARL~(\cite{hleal2019,zhang2021multi,marl-book}). Owing to the non-stationarity of the MARL problem, each agent needs to account for the joint-action value space which increases exponentially in the number of agents. 
In addition, for applications involving a large number of agents with limited communication such as networked systems, it is practically infeasible for an agent to have access to the global state, i.e., the states of all the agents. Instead, each agent has access to its local observations, which are a subset of the global state. This is referred to as \textit{partial observability}\footnote{An alternative interpretation of partial observability has been used in the POMDP literature (see e.g.,~\cite{spaan2012partially}), where each agent observes the state through noisy sensors. However, we do not use this interpretation in this paper.} in MARL, which hinders the agents from learning the underlying policies of other agents, exacerbating the non-stationarity viewed by individual agents. 

Scalability issues and partial observability have motivated the quest for finding more computationally efficient and scalable algorithms. Several information structures have been proposed to address these issues. The extreme case of partial observability is known as the \textit{independent learning} (IL) scheme where each agent has access to only its own local action, observation and reward. This scheme suffers from non-convergence in general~(\cite{tan1993}). To address the non-stationarity caused by the partial observability, it is common in the MARL literature to assume that the agents have access to a centralized entity that provides joint actions, joint observations and joint rewards during training and each agent uses its local observations during execution. This is referred to as the \textit{centralized training and decentralized execution} (CTDE) scheme, which has been shown to be effective in a wide range of applications~(see e.g., \cite{lowe2017,foerster2018counterfactual,rashid2018qmix}). In the cooperative setting, the CTDE paradigm greatly simplifies the analysis, facilitating the use of tools developed in the analysis of single-agent setting~(\cite{zhang2021multi}). However, in many real-world scenarios the existence of a central entity either is  infeasible or incurs a significant communication overhead due to a large amount of information exchange, degrading the scalability of the approach. 

 In this work, we focus on the scalability issue in the cooperative MARL setting, where the objective of the MARL problem is to find an optimal policy for each agent that maximizes the total accumulated rewards of all the agents. In many real-world applications, {e.g., vehicle platooning, warehouse resource allocation, and multi-zone temperature control, structural information of the couplings between cooperative agents is known even in the model-free setting, either by design (e.g., placement of sensors) or by expert knowledge (e.g., the physical interactions/constraints among agents~\cite{yang2024physics}). 
How to capture such information in designing efficient MARL algorithms is the main subject of this paper.

\textbf{Contributions.}{This paper introduces a Multi-agent Bayesian Network (MABN) framework that enables exact Q-function decomposition for cooperative multi-agent reinforcement learning. Our key contributions are as follows:
\begin{itemize}
    \item We develop a systematic approach to leverage inter-agent couplings for \textit{exact} Q-function decomposition without approximation errors. Unlike existing methods that approximate the value function through separable MDPs~\cite{jin2024approximate}, approximate factorization~\cite{lu2024overcoming}, spatial exponential decay~\cite{qu2019value,Qu2022}, or monotonicity constraints~\cite{sunehag2017value,rashid2018qmix}, our MABN framework identifies the minimal set of agents whose information is needed to compute the Q-function of each agent, thereby reducing the curse of dimensionality while maintaining optimality.
    \item  This paper extends our recent work~\cite{syed2025exploiting} from the linear quadratic regulator (LQR) setting to a generic setting, where agents' dynamics, rewards and observations can be nonlinear. Departing from the set-theoretic properties employed in~\cite{syed2025exploiting}, we develop the MABN, a unified probabilistic graphical model that explicitly encodes the individual inter-agent couplings in a single graph,  allowing for efficient deduction of value dependency structures. The Bayesian network approach is more general and applicable to nonlinear systems with structural constraints. 
    \item The MABN framework applies to general partially observable stochastic cooperative games (POSCG) without restrictive assumptions on network topology, coupling density. It explicitly handles partial observability and remains effective even in densely coupled systems where convergence guarantees for approximation methods may not hold.
    \item While the generality of MABN prevents from obtaining convergence guarantees or sample complexity bounds as in~\cite{syed2025exploiting}, we establish that our exact decomposition yields lower total variance compared to centralized training decentralized execution (CTDE) scheme suggesting improved sample complexity for gradient-based algorithms. We develop a cooperative policy gradient theorem for both stochastic and deterministic policies, which can be used to derive various RL algorithms.
    \item We design a partially decentralized training and decentralized execution architecture (P-DTDE) based on the MABN structure for scalable learning. Further, we develop a \textit{multi-agent structured actor critic (MAStAC) algorithm} based on the P-DTDE scheme which demonstrates superior performance across benchmark tasks.
\end{itemize}

\noindent
\textbf{Outline.} The paper is outlined as follows. Section~\ref{sec:coopmarl} introduces the problem formulation. Section~\ref{sec:main} presents the main technical content, including the MABN approach for decomposition, our multi-agent policy gradient theorem, and the proposed MARL algorithm. Section~\ref{sec:variance} analyzes the variance in the gradient estimation and compares it with the CTDE scheme. Experimental results are presented in Section~\ref{sec:exp}, followed by conclusions and limitations in Section~\ref{sec:conclusion}.

\section{Problem formulation}\label{sec:coopmarl}
We consider a {partially-observable stochastic cooperative game} described by the tuple $ \mathcal{M} = (\mathcal{V}, \mathcal{E}_S, \mathcal{E}_O, \mathcal{E}_R,  $ $ \prod_{i \in \mathcal{V}}\mathcal{M}_i,\prod_{i \in \mathcal{V}}\mathcal{O}_i,\prod_{i \in \mathcal{V}}r_i, \gamma)$, where 
 
 - $\mathcal{V} = \{1,\cdots,N\}$ is the set of agent indices in the various graphs describing inter-agent couplings.
 
- $\mathcal{E}_S \subseteq \mathcal{V} \times \mathcal{V}$ is the edgeset of the \textit{state graph} $\mathcal{G}_S = \{\mathcal{V},\mathcal{E}_S\}$ which defines the dynamics coupling between agents. An edge $(j,i) \in \mathcal{E}_S$, if agent $j$ affects the state evolution of agent $i$. Define the state index set of agent $i$ as $\mathcal{I}^i_S = \{j|(j,i) \in \mathcal{E}_S\} \cup \{i\}$. 

- $\mathcal{E}_O \subseteq \mathcal{V} \times \mathcal{V}$ is the edgeset of the \textit{observation graph} $\mathcal{G}_O = \{\mathcal{V},\mathcal{E}_O\}$ which defines the partial observability of agents. An edge $(j,i) \in \mathcal{E}_O$, if agent $j$ is observed by agent $i$. Define the observation index set of agent $i$ as $\mathcal{I}^i_O = \{j|(j,i) \in \mathcal{E}_O\} \cup \{i\}$. {\textit{Partial observability} in this paper means that each agent observes the states of its in-neighbors in $\mathcal{E}_O$, \textit{instead of all the agents}}.

- $\mathcal{E}_R \subseteq \mathcal{V} \times \mathcal{V}$ is the edgeset of the \textit{reward graph} $\mathcal{G}_R = \{\mathcal{V},\mathcal{E}_R\}$ which defines the reward coupling among agents. An edge $(j,i) \in \mathcal{E}_R$, if the state and action of agent $j$ affect the reward of agent $i$. Define the reward index set of agent $i$ as $\mathcal{I}^i_R = \{j|(j,i) \in \mathcal{E}_R\} \cup \{i\}$.

- $\mathcal{M}_i = (\mathcal{S}_i, \mathcal{A}_i, \mathcal{P}_i)$ is the model that governs the state evolution of agent $i$ in time.

- $\mathcal{S}_i, \mathcal{A}_i$ are the state space and action space (discrete or continuous) of agent $i$ respectively. 
  
- $\mathcal{P}_i:\prod_{j \in \mathcal{I}^i_S} \mathcal{S}_j \times \prod_{j \in \mathcal{I}^i_S} \mathcal{A}_j \rightarrow \mathcal{P}(\mathcal{S}_i)$
is the transition probability function of agent $i$ in the next time step, i.e., it specifies the probability distribution of agent $i$ transitioning to $s'_i \in \mathcal{S}_i$ in the next time step when each agent $j \in \mathcal{I}^i_S$ takes action $a_j \in \mathcal{A}_j$ in the current state $s_j \in \mathcal{S}_j$ in the current time step.

- $\mathcal{O}_i = \prod_{j \in \mathcal{I}^i_O} \mathcal{S}_j$ is defined as the partial observation of agent $i$ which includes the state information of all the agents $j$ that are observable by $i$.

- $r_j: \prod_{j \in \mathcal{I}^i_R} \mathcal{S}_j \times \prod_{j \in \mathcal{I}^i_R} \mathcal{A}_j \rightarrow \mathbb{R}$ is the instantaneous reward received by agent $i$ when each agent $j \in \mathcal{I}^i_R$ takes an action $a_j \in \mathcal{A}_j$ in the current state $s_j \in \mathcal{S}_j$. 

- $\gamma \in [0,1]$ is the discount factor that trades off the immediate and future rewards received by the agents.

To illustrate the three graphs $\mathcal{G}_S$, $\mathcal{G}_O$, and $\mathcal{G}_R$ in a real-world setting, consider a multi-zone temperature control problem (with details in Appendix I). The temperature in each zone is affected by the heat conduction through walls from adjacent zones resulting in inherent physical couplings in the state transition dynamics. Thus, $\mathcal{G}_S$ is an undirected graph with edges between zones that share a wall. If each zone can only measure its own temperature, the corresponding observation graph $\mathcal{G}_O$ is fully decentralized. If the objective of the controller in each zone is to maintain the temperature in that zone, the resulting reward graph $\mathcal{G}_R$ is also fully decentralized.

Let $s(t) = \begin{bmatrix}
    s_1(t)&s_2(t)&\cdots&s_N(t)
\end{bmatrix}$ and $a(t) = \begin{bmatrix}
    a_1(t)&a_2(t)&\cdots&a_N(t)
\end{bmatrix}$ denote the global state and global action of the multi-agent system (MAS) at time $t$, respectively. Let $\pi:\mathcal{S} \rightarrow \mathcal{P}(\mathcal{A})$ and $\pi_i:\mathcal{O}_i \rightarrow \mathcal{P}(\mathcal{A}_i)$ be the global policy function of the entire MAS and the local policy function of agent $i$, respectively. To simplify exposition, we have chosen the policies $\pi_i$'s to be reactive. The formulation in this paper can be extended to allow history-based policies.
We define the observation function for each agent $i \in V$, as a mapping $f_o: \bigcup_{j\in \mathcal{I}^i_O} \mS_j \rightarrow \mathcal{O}_i$ such that $o_i(t) = f_o(s_{\mathcal{I}^i_O}(t)),~\forall~i \in \mathcal{V}$. 

At time $t$ of the game, agent $i \in \mathcal{V}$ observes the partial state $o_i(t)$, takes an action $a_i(t) \in \mathcal{A}_i$ following its local policy function $\pi_i(a_i(t)|o_i(t))$ and receives a local reward $r_i(s_{\mathcal{I}^i_R}(t), a_{\mathcal{I}^i_R}(t))$. The accumulated discounted global reward is defined as
\begin{align}
    R = \sum_{t=0}^\infty \gamma^t r(s(t),a(t)),
   \label{eq:accgr}
\end{align}
where {$r(s(t),a(t)) = \sum_{i=1}^N r_i(s_{\mathcal{I}^i_R}(t), a_{\mathcal{I}^i_R}(t))$ }is the global reward of the MAS at time $t$.
The global state value function and the global state-action value function  for a given policy $\pi$ are defined as {$V^\pi(s)=\mathbb{E}\left[R | s(0) = s\right]$, and $Q^\pi(s,a) =$ $ \mathbb{E}\left[R| s(0) = s, a(0) = a\right]$}, respectively, where the expectation is taken over the state and state-action distributions, respectively. Define $V^\pi_i(s) = \mathbb{E}\left[\sum_{t=0}^\infty \gamma^t r_i(s_{\mathcal{I}^i_R}(t), a_{\mathcal{I}^i_R}(t)) | s(0) = s\right]$ as agent $i$'s local state value function and $Q^\pi_i(s,a) = $ $\mathbb{E}\left[\sum_{t=0}^\infty \gamma^t r_i(s_{\mathcal{I}^i_R}(t), a_{\mathcal{I}^i_R}(t)) | s(0) = s, a(0) = a\right]$ as its local state-action value function. 

The objective of the cooperative MARL problem is to find a globally optimal policy $\pi^*$ that maximizes the expected long term global reward $J(\pi)$ when the MAS starts from a global state $s$, takes a global action $a$ and follows a global policy $\pi$, i.e.,
\begin{align}
    \pi^* &= \underset{\pi}{\text{argmax }} J(\pi)
    = \underset{\pi}{\text{argmax }}
    \mathbb{E}_{s,a \sim \mathcal{D}} \left[R|s(0) = s\right]
    = \underset{\pi}{\text{argmax }} \mathbb{E}_{s \sim \mathcal{D}} ~V^\pi(s),
    \label{eq:objpi}
\end{align}
where $\mathcal{D}$ is the distribution over the initial global state and global action. We parameterize the global  policy $\pi$ by $\theta = \begin{bmatrix}
    \theta_1^\intercal&\theta_2^\intercal&\cdots&\theta_N^\intercal
\end{bmatrix}^\intercal$ and denote it as $\pi_\theta$. Similarly, the local policy of agent $i$ is denoted by $\pi_{\theta_i}$. Given the global state $s \in \mathcal{S}$, there exists a bijective mapping between the global policy $\pi_\theta(a(t)|s(t))$ and the collection of local policies $\bigcup_{i \in \mathcal{V}} \pi_{\theta_i} (a_i(t)|o_i(t))$. Specifically, the global policy can be decomposed as $\pi_\theta(a(t)|s(t)) = \prod_{i \in \mathcal{V}} \pi_{\theta_i} (a_i(t)|o_i(t))$ as the agents act independently based on their local observation $o_i(t) = f_o(s(t))$. Conversely, the joint distribution over actions defined by the collection of local policies induces global policy when conditioned on the global state $s(t)$. The equivalence holds because in our formulation we assume that the global state contains all the information needed to determine each agent's observation $o_i(\cdot)$, and the joint action is uniquely determined by individual actions.. The optimization problem \eqref{eq:objpi} can be rewritten in terms of $\theta$ as
\begin{align}
    \theta^* 
    &= \underset{\theta}{\text{argmax }}  J(\theta) = \underset{\theta}{\text{argmax }} \mathbb{E}_{s \sim \mathcal{D}} ~V^{\pi_\theta}(s).
    \label{eq:globalobj}
\end{align}

We present a concrete example to illustrate the formulation. Consider a multi-agent system with time-invariant dynamics, where each agent's dynamics and cost may depend on other agents' states and actions and its observations are the states of a subset of all the agents. Specifically,  the dynamics of agent $i \in \mathcal{V}$ is given by 
\begin{align}
    s_i({t+1})&= f_i(\cup_{j\in \I^i_S} s_j(t), \cup_{j\in \I^i_S}  a_j(t), w_i(t))
    \label{eq:inddyn}
\end{align}
where $s_i(t)\in\mathbb{R}^{n_s}$, $a_i(t)\in\mathbb{R}^{n_a}$, $w_i(t)$ is the process noise, and the set $\mathcal{I}^i_S$ contains the indices of the agents impacting the dynamics of agent $i$.
The reward (or negative cost) incurred by agent $i \in \mathcal{V}$ at time $t$ is given by
\begin{align}
    r_i(t) &= g_i(\cup_{j \in \I^i_R}s_j(t),~\cup_{j \in \I^i_R}a_j(t)), \label{eq:indcost}
\end{align}
where the set $\mathcal{I}^i_R$ contains the indices of the agents impacting the cost of agent $i$.
Agent $i$ observes the states of the agents in a set $\I^i_O$ and prescribes a structured static feedback control \begin{align}
    a_i(t) &= \pi_{\theta_i}(\cup_{j\in \I^i_O}s_j(t)). 
    \label{eq:indcontrol}
\end{align}
In this paper, we focus on solving the stochastic optimal control problem in a model-free setting,  i.e.,
\begin{eqnarray}
\underset{\theta}{\text{min }}& \mathbb{E}_{s(0), a(0)} ~\left[Q(s(0), a(0))|\mathcal{G}_S,\mathcal{G}_O,\mathcal{G}_C\right]\nonumber\\
    \text{subject to }&\eqref{eq:inddyn},~\eqref{eq:indcost},~\eqref{eq:indcontrol}~\forall~i\in \mathcal{V}
\end{eqnarray}
where $f_i(\cdot),~g_i(\cdot)$ are unknown but we can query $s(t+1), r(t)$ given $s(t),~a(t)$ from a simulator. 

\section{Multi-agent Bayesian network and policy gradient theorem}\label{sec:main}
Bayesian networks (BN) are an example of probabilistic graphical models (PGM) whose nodes represent random variables and edges encode the conditional dependence associated with them. 
RL and optimal control can be formulated using PGM to facilitate the use of probabilistic inference approaches 
This framework is referred to as the RL-as-inference framework that interprets finding optimal policies in RL as a form of probabilistic inference in graphical models. In this view, the agent's objective to maximize rewards can be viewed as inferring the most probable actions or trajectories. Interested readers are referred to~\cite{levine2018, noorani2022probabilistic, O'Donoghue2020, tarbouriech2023}. The existing literature is largely focused on single agent decision making problems. We next propose a systematic approach to modeling the cooperative MARL problem in {{Section~\ref{sec:coopmarl}}} 
as a BN, which we hope will enable future extensions of \textit{RL-as-inference}} paradigm to the cooperative MARL setting.}
\subsection{Construction of multi-agent Bayesian network}\label{sec:MABN}
 We define a BN which we refer to as the \textit{multi-agent Bayesian network (MABN)} to represent the evolution of the MAS in a finite-horizon setting. The states and the actions of the agents are latent variables and a binary random variable $\mZ_i(t)$ for each agent is introduced as an observable variable. Assuming a negative reward structure, similar to the inference-based control approaches~(see e.g., \cite{gd,aico,constraico,aicot,convaico,policyup,i2c,effsoc, syed2025approximate}), the log-likelihood that agent $i$ is optimal at time $t$ is proportional to $r_i(s_{\I^i_R}(t),a_{\I^i_R}(t))$, i.e.,
\begin{align}
    p(\mZ_i(t) = 1|s_{\I^i_R}(t),a_{\I^i_R}(t)) \propto \exp\{ r_i(s_{\I^i_R}(t),a_{\I^i_R}(t))\}.
    \label{eq:expreln}
\end{align}
Given $\mathcal{M}$, the MABN is constructed as follows.
1) Assign a random variable to each $s_i(t)$, $a_i(t)$, and $\mZ_i(t)$ $\forall~t,$ $\forall~i\in \mathcal{V}$.

2) For each $i$, at every $t$
\begin{itemize}
    \item  Construct directed edges $(s_j(t), s_i(t\!+\!1))$, $(a_j(t), s_i(t+1))$, $\forall~j \in \mathcal{I}^i_{S}$, e.g., see Fig.~\ref{fig:BNs}.
    
\item Construct directed edges $(s_j(t), a_i(t))$, $\forall~j \in \mathcal{I}^i_{O}$, e.g., see Fig.~\ref{fig:BNo}.

\item Construct directed edges $(s_j(t), \mZ_i(t))$, $(a_j(t), \mZ_i(t))$, $\forall~j \in \mathcal{I}^i_{R}$, e.g., see Fig.~\ref{fig:BNr}\footnote{Fig.~\ref{fig:BNs}--\ref{fig:BNr} are in Appendix~\ref{sec:MABN_exp}.}.
\end{itemize}    

\subsection{Deduction of action value function dependence from MABN}
\label{sec:Qdecomp}
 The formulation in Section~\ref{sec:coopmarl} and the existing literature, e.g.,~\cite{jing2024distributed}, assume that the individual action value of agent $i$, $Q^\pi_i(s,a)$, depends on the global state $s$ and global action $a$. However, during critic training in MARL, having computations dependent on the global state is undesirable as it incurs high communication cost and significantly hinders the scalability of the algorithms. Thus, a natural question arises: 
\textit{What is the minimal set  of agents whose information is sufficient for an agent to estimate its action value function?} 
We employ the MABN in Section~\ref{sec:MABN} and derive a systematic approach to obtain such a minimal set. 
Consider the formulation in Section~\ref{sec:coopmarl} with a finite horizon $T$. We express  $Q^\pi_i(s(t),a(t))$, $\forall i\in\mathcal{V}$, as
\begin{align}
&Q^\pi_i(s(t),a(t)) = 
r_i(s_{\mathcal{I}^i_R}(t), a_{\mathcal{I}^i_R}(t)) +\underset{\substack{s(t\!+\!1)\sim \mathcal{P}\\ a(t\!+\!1) \sim \pi_\theta}}{\mathbb{E}}\bigg[\gamma r_i(s_{\mathcal{I}^i_R}(t\!+\!1), a_{\mathcal{I}^i_R}(t\!+\!1)) +\cdots \nonumber\\
&\cdots +\underset{\substack{s(T\!-\!1)\sim \mathcal{P}\\ a(T\!-\!1) \sim \pi_\theta}}{\mathbb{E}}\bigg[ \gamma^{T-t-1}r_i(s_{\mathcal{I}^i_R}(T\!-\!1), a_{\mathcal{I}^i_R}(T\!-\!1))+ \underset{\substack{s(T)\sim \mathcal{P}\\ a(T) \sim \pi_\theta}}{\mathbb{E}}\big[ \gamma^{T-t} r_i(s_{\mathcal{I}^i_R}(T), a_{\mathcal{I}^i_R}(T))\big] \bigg]\bigg].
\label{eq:QZeq}
\end{align}
 To make the formulation more general and better reveal the dependency of $Q_i^\pi(s(t),a(t))$ on the inter-agent couplings, we assume that the sets $\I_O^i$, $\I_R^i$, and $\I_S^i$ can be time-varying.
Given the time-varying inter-agent couplings $\I^i_O(t)$, $\I^i_S(t)$, $\I^i_R(t)$, for all $0\leq t \leq T$, we define $U^{\tau}_i = \{\I^j_O(\tau)\}_{j\in \I^i_R(\tau)} \cup \{\{\I^k_O(\tau)\}_{k \in\I^j_S(\tau\!+\!1)}\}_{j\in U^{\tau\!+\!1}_i}$, and $U^T_i = \{\I^j_O(T)\}_{j\in \I^i_R(T)}$ as the time-dependent reachability sets of agent $i$'s reward in the MABN at intermediate time $\tau$ (corresponding to $\mathcal{Z}_i(\tau)$) and terminal time $T$ (corresponding to $\mathcal{Z}_i(T)$), respectively.
We further let \begin{equation}
    \I^i_Q(\tau) = U^\tau_i\cup U^{\tau\!+\!1}_i\cup \cdots \cup U^T_i \subseteq \mathcal{V}.\label{eq:Qset}
\end{equation}  Theorem~\ref{thm:Qsetdecomp} below establishes that $Q_i^\pi(s(t),a(t))$  depends only on the state-action of agents in $\I^i_Q(t)$. Thus, we refer to $\I^i_Q(t)$ as the \textit{value dependency set} at time $t$ and the corresponding graph such that the in-neighbors of $i$ correspond to $\I^i_Q(t)$ as the \textit{value dependency graph}, denoted by $\mathcal{G}_{\text{VD}}(t) = \{\mathcal{V}, \mathcal{E}_{\text{VD}}(t)\}$.
\begin{remark}
    In general, a direct relationship between $U^t_i$ and $U^{t+1}_i$ cannot be established without the knowledge of $\I^i_S(t)$, $\I^i_O(t+1)$, and $\I^i_R(t+1)$. However, under the formulation presented in this paper, the construction of MABN ensures that $U^t_i \subseteq U^{t+1}_i$ whenever $\I^i_R(t) \subseteq \{\I^j_S(t)\}_{j\in\I^i_R(t+1)}$.
    Futhermore, by the definition of $\I^i_Q(\tau)$ in~\eqref{eq:Qset}, we note that for all $i \in \mathcal{V}$, $\I^i_Q(\tau)$ accounts for one additional step of dependencies compared to $\I^i_Q(\tau+1)$. Hence, it follows that $\mathcal{G}_{\text{VD}}(\tau)$ would generally  be denser than $\mathcal{G}_{\text{VD}}(\tau+1)$. However, a precise relationship cannot be established without knowledge of the temporal evolution of $\mathcal{G}_S,~\mathcal{G}_O,$ and $\mathcal{G}_R$.
\end{remark}
\begin{thm}
    For any $i\in \mathcal{V}$, $Q^\pi_i(s(t),a(t))$ depends only on $(s_{\I^i_Q(t)}(t), a_{\I^i_Q(t)}(t))$.
    \label{thm:Qsetdecomp}
\end{thm}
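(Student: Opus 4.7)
The plan is to view $Q^\pi_i(s(t),a(t))$ as a conditional expectation in the MABN of Section~\ref{sec:MABN}, computed by clamping $(s(t),a(t))$, and to deduce the claim by enumerating ancestors of the reward nodes in the resulting graph. From~\eqref{eq:QZeq},
\begin{equation*}
Q^\pi_i(s(t),a(t)) = \sum_{\tau=t}^{T}\gamma^{\tau-t}\,\mathbb{E}\!\left[r_i(s_{\I^i_R(\tau)}(\tau),a_{\I^i_R(\tau)}(\tau)) \,\middle|\, s(t),a(t)\right];
\end{equation*}
since $(s(t),a(t))$ together encompass all parents of any node at time $t$, conditioning coincides with intervention and the graphoid/d-separation property of Bayesian networks implies each summand on the right depends on $(s(t),a(t))$ only through those components admitting a directed path to the associated reward node in the post-clamping MABN (where the incoming policy edges to $a_l(t)$ have been cut).

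For the enumeration I would first prove, by a short backward induction on $\tau$ from $T$ to $t+1$ in the un-clamped MABN, the characterization lemma $U^\tau_i=\{j:s_j(\tau)\text{ is an ancestor of some }r_i(\tau'),\,\tau'\ge\tau\}$. The two summands in the recursive definition of $U^\tau_i$ correspond exactly to the two families of paths $s_j(\tau)\to a_k(\tau)\to r_i(\tau)$ and $s_j(\tau)\to a_k(\tau)\to s_l(\tau+1)\to\cdots$, with the self-containment identities $j\in\I^j_O(\tau)$ and $j\in\I^j_S(\tau)$ absorbing the ``direct'' state/reward and state/state edges into the policy-then-transition enumeration. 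For the post-clamping analysis at time $t$, the only outgoing arcs from $(s_j(t),a_j(t))$ go either directly to $r_i(t)$ (when $j\in\I^i_R(t)$) or to $s_k(t+1)$ (when $j\in\I^k_S(t+1)$), and in the latter case the path continues in the un-clamped sub-MABN from $t+1$ onward, so by the lemma it reaches a reward iff $k\in U^{t+1}_i$. Consequently the ancestor set is contained in $\I^i_R(t)\cup\bigcup_{k\in U^{t+1}_i}\I^k_S(t+1)$, and one more application of $l\in\I^l_O(t)$ embeds it inside $U^t_i\subseteq\I^i_Q(t)$, giving the theorem.

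The principal obstacle is the backward-induction characterization lemma: although morally obvious from the MABN construction, it requires careful path bookkeeping to fit the precise composition $\bigcup_{k\in\I^j_S(\tau+1)}\I^k_O(\tau)$ in the definition of $U^\tau_i$ and to exploit the self-containment identities that collapse parallel paths into a single enumeration. Once the lemma is in place, the post-intervention step is essentially one line, and the argument in fact yields the sharper bound $U^t_i\subseteq\I^i_Q(t)$ (with equality in the present setting thanks to the monotonicity $U^{\tau+1}_i\subseteq U^\tau_i$ that follows from $i\in\I^i_S$); the extra union $\bigcup_{\tau\ge t}U^\tau_i$ in the definition of $\I^i_Q(t)$ is a safeguard for more general time-varying topologies.
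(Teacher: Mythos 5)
Your proposal is correct and follows essentially the same route as the paper's proof: both rest on the identical backward recursion for $U^\tau_i$, with the paper arguing irrelevance of indices outside $U^\tau_i$ by unrolling the nested expectations in~\eqref{eq:QZeq} and tracking which variables each expectation ranges over, while you package the same induction as an ancestor-set lemma in the MABN and invoke non-ancestor irrelevance (the content of Property~\ref{defin:valdep}). Your further observation that the time-$t$ dependence is in fact confined to $U^t_i$ alone --- so that the union defining $\I^i_Q(t)$ matters only if the self-containments $j\in\I^j_S$, $j\in\I^j_O$ were dropped --- is a correct sharpening that the paper's argument also yields implicitly but does not state.
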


The proof of Theorem~\ref{thm:Qsetdecomp} is deferred to Appendix~\ref{sec:Qsetdecomp}. With a slight abuse of notation, we have $Q^\pi_i(s(t),a(t)) = Q^\pi_i(s_{\I^i_Q(t)}(t), a_{\I^i_Q(t)}(t))$.

The MABN constructed in {Section~\ref{sec:MABN}} captures temporal evolution of the inter-agent couplings and provides an elegant way to compute $U^\tau_i$ in~\eqref{eq:Qset}, $\forall~\tau,~\forall~i$. Owing to the causal relationship between variables in the MABN and the relationship between the local rewards and binary random variables in~\eqref{eq:expreln}, $U^\tau_i$ can be obtained by tracing the predecessors of $\mZ_i(\tau'),~\forall~\tau\leq \tau'\leq T$ until time $\tau$. 
We formalize this result as a property of the value dependency graph.  
\begin{property}[\textbf{\textit{Value dependency graph}}]
    At any time $t$, an edge $(j,i) \in \mathcal{E}_{\text{VD}}(t)\subseteq \mathcal{V}\times \mathcal{V}$ if and only if $\exists$ $\tau,\tau'$ satisfying $t \leq \tau \leq \tau' \leq T$ such that $s_j(\tau)$ or $a_j(\tau)$ can reach $\mZ_i(\tau')$ in the MABN.
\label{defin:valdep}
\end{property}

\begin{remark}
    Note that for the formulation presented in this paper, if agent $j$ affects the value function of agent $i$, then the state $s_j(\tau)$ can definitely reach $\mathcal{Z}_i(\tau')$. In contrast, $a_j(\tau)$ can reach $\mathcal{Z}_i(\tau')$ if and only if 
    agent $j$ influences agent $i$ indirectly through its state or reward neighbors. For example, observe that in Fig.~\ref{fig:BNfull}, $s_1(0)$ can reach $\mathcal{Z}_3(1)$ which implies $1\in \I^3_{\text{VD}}$. But $a_1(0)$ cannot reach $\mathcal{Z}_3(1)$ because agent $1$ cannot affect agent $3$ through its state, or reward neighbors i.e., $1\not\in \{\I^j_S \cup \I^j_R\}_{j\in \I^3_{\text{VD}}\setminus \{1\}}$. However, the MABN generalizes to broader scenarios where only the actions of other agents are observable while the states remain hidden, or where the reward depends exclusively on the states of neighboring agents and  are independent of their actions.
\end{remark}

From Property~\ref{defin:valdep}, a brute-force solution to computing $\I^i_Q(t)$ is to find the predecessors of $\mZ_i(\tau)$, $\tau = t,t+1,\cdots,T$ in the MABN using a path finding algorithm. However, the size of the PGM grows linearly in the horizon and exponentially in the number of agents. Therefore, performing path finding on the full MABN is computationally intensive. For \textit{time-invariant} inter-agent couplings, the structure of the MABN repeats every two consecutive time steps (see e.g., Fig.~\ref{fig:BNfull} in Appendix~\ref{sec:MABN_exp}). We leverage this redundancy in the structure to make the path finding efficient for time-invariant inter-agent couplings. Specifically, we consolidate the MABN to two time steps by making edges between $s_i(t)$ and $s_i(t+1)$, $\forall~i$,  (see e.g., Fig.~\ref{fig:BNfolded} in Appendix~\ref{sec:MABN_exp}). We refer to this consolidated two time-step PGM as the \textit{folded MABN} $\mathcal{G}_F$. The intuition for the bidirectional edge is that every traversal of the bidirectional edge in the folded graph implies evolution of a timestep in the full MABN. 
Because $\I_Q^i$ is time-invariant, Theorem~\ref{thm:Qsetdecomp} indicates that
$Q^\pi_i$ depends only on $(s_{\I^i_Q}, a_{\I^i_Q})$, i.e., $Q^\pi_i(s(t),a(t)) = Q^\pi_i(s_{\I^i_Q}(t), a_{\I^i_Q}(t))$.

\begin{remark}
    We remark that the MABN is capable of modeling the evolution of MAS in both finite and infinite horizon settings provided that the inter-agent couplings are known. However, since the identification of VD set employs path finding on the MABN (Property~\ref{defin:valdep}), a finite (finite-horizon setting) or receding (infinite-horizon setting) horizon is required for computational tractability.
\end{remark}

\subsection{Value gradient dependence and the multi-agent policy gradient theorem}
\label{sec:mapg}
The use of policy gradient theorems in our formulation requires computing the gradient of the global action value function $Q^\pi(s,a)$ with respect to individual action parameters $\theta_i$, $\forall~i\in \mathcal{V}.$ From Property~\ref{defin:valdep}, it follows that the action of agent $i$ affects the rewards of its out-neighbors in $\mathcal{G}_{\text{VD}}.$ Thus, we define the \textit{gradient dependency} graph $\mathcal{G}_{\text{GD}} = \{\mathcal{V}, \mathcal{E}^\intercal_{\text{VD}}\}$ and the corresponding index set $\I^i_{\text{GD}} = \{j | (j,i)\in \mathcal{E}^\intercal_{\text{VD}}\}\cup i$, $\forall~i\in \mathcal{V}$. We show in Theorem~\ref{thm:graddecomp} that \textit{the gradient of $Q^\pi(s,a)$ with respect to $\theta_i$ can be decomposed as the sum of the gradients of $Q^\pi_j(s_{\I^j_Q},a_{\I^j_Q})$, $\forall j\in \I^i_{\text{GD}
}$}. Theorem~\ref{thm:graddecomp} illustrates the interplay between the inter-agent couplings and the decomposition of the gradient of the objective function. The gradient dependency graph corresponds to the learning graph proposed in~\cite{jing2024distributed}. However,~\cite{jing2024distributed} neither considers the value dependency graph nor the MABN which are the two main contributions of this paper. We also establish that $\mathcal{G}_{\text{GD}}$ is the transpose graph of $\mathcal{G}_{\text{VD}}$. 
 
\begin{thm}
    [Gradient decomposition theorem]
If $Q^\pi(s,a)$ and $Q^\pi_i(s_{\I^i_Q},a_{\I^i_Q})$ are continuous and differentiable with respect to $\theta_i$, $\forall$ $i \in \mathcal{V}$, then  
        $\nabla_{\theta_i} Q^\pi(s,a) =  \nabla_{\theta_i} \left(\sum_{j \in \mathcal{I}_{\text{GD}}^i} Q^\pi_j(s_{\I^j_Q},a_{\I^j_Q})\right).$
    \label{thm:graddecomp}
\end{thm}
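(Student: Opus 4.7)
The plan is to combine the additive structure of the global reward with Theorem~\ref{thm:Qsetdecomp} and then prune the resulting sum using the MABN reachability property. Three steps: decompose $\nabla_{\theta_i} Q^\pi$ as a sum of per-agent gradients; use the value dependency structure to show that most of those gradients vanish; identify the surviving index set with $\I^i_{\text{GD}}$.

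First, since $r(s,a) = \sum_{j\in\mathcal{V}} r_j(s_{\I^j_R}, a_{\I^j_R})$ by construction, linearity of expectation gives $Q^\pi(s,a) = \sum_{j\in \mathcal{V}} Q^\pi_j(s,a)$. Invoking Theorem~\ref{thm:Qsetdecomp} on each summand and differentiating term by term (justified by the assumed continuous differentiability) yields
\begin{equation*}
    \nabla_{\theta_i} Q^\pi(s,a) \;=\; \sum_{j\in \mathcal{V}} \nabla_{\theta_i} Q^\pi_j(s_{\I^j_Q}, a_{\I^j_Q}).
\end{equation*}

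Next, I would show $\nabla_{\theta_i} Q^\pi_j = 0$ whenever $i\notin\I^j_Q$. The parameter $\theta_i$ enters the expectation defining $Q^\pi_j$ only through the local policy $\pi_{\theta_i}(a_i\mid o_i)$, which controls the conditional distribution of agent $i$'s actions at every time step. By Property~\ref{defin:valdep}, the hypothesis $i \notin \I^j_Q$ means that no directed MABN path leads from any $a_i(\tau)$ or $s_i(\tau)$ to any reward indicator $\mathcal{Z}_j(\tau')$. Since the parents of $\mathcal{Z}_j(\tau')$ are precisely $(s_{\I^j_R}(\tau'), a_{\I^j_R}(\tau'))$ via~\eqref{eq:expreln}, this ancestral independence implies that the joint distribution of the reward sequence $\{r_j(s_{\I^j_R}(\tau'), a_{\I^j_R}(\tau'))\}_{\tau'\geq 0}$ is invariant under perturbations of $\theta_i$, so $Q^\pi_j$ is constant in $\theta_i$. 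By the transpose identity $\I^i_{\text{GD}} = \{j : i \in \I^j_Q\}\cup \{i\}$, which follows from $\mathcal{G}_{\text{GD}} = \{\mathcal{V}, \mathcal{E}^\intercal_{\text{VD}}\}$, the surviving indices lie exactly in $\I^i_{\text{GD}}$; interchanging the gradient with the resulting finite sum gives the claim.

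The main obstacle I anticipate is the policy-level independence in the second step: Theorem~\ref{thm:Qsetdecomp} as stated concerns the argument dependence $Q^\pi_j(s_{\I^j_Q}, a_{\I^j_Q})$, and it must be carefully lifted to the policy parameters that also shape the expectation. The cleanest route is to argue directly on the MABN, since $\I^j_Q$ is constructed precisely as the ancestor set (in terms of state/action nodes) of $\{\mathcal{Z}_j(\tau')\}_{\tau'\geq 0}$, so a standard d-separation argument delivers the required policy-level independence. A subtle bookkeeping point worth checking is that the complementary case $i \in \I^j_Q$ (where the gradient need not vanish and the term must be retained) coincides exactly with $j \in \I^i_{\text{GD}}$, which is guaranteed by the transpose identity above.
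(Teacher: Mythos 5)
Your proposal is correct and follows essentially the same route as the paper's proof: decompose $Q^\pi = \sum_{j\in\mathcal{V}} Q^\pi_j$, use the transpose relation $k\notin\I^i_{\text{GD}} \Leftrightarrow i\notin\I^k_Q$ to conclude that $\nabla_{\theta_i}Q^\pi_k = 0$ for the excluded agents, and retain only the $\I^i_{\text{GD}}$ terms. Your explicit flagging of the gap between argument-level dependence (Theorem~\ref{thm:Qsetdecomp}) and policy-parameter independence, closed via the MABN ancestral/d-separation argument, is a slightly more careful articulation of a step the paper treats informally, but it is not a different proof.
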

\begin{proof}
Refer to Appendix~\ref{sec:graddecomp}.
\end{proof}

We further let $\I^i_{\widehat{Q}} = \bigcup_{j\in \I^i_{\text{GD}}} \I^j_Q$ and define
\begin{align}
\widehat{Q}^\pi_i(s_{\I^j_{\widehat{Q}}},a_{\I^j_{\widehat{Q}}})= \sum_{j \in \mathcal{I}_{\text{GD}}^i} Q^\pi_j(s_{\I^j_Q},a_{\I^j_Q}).
\label{eq:Qhat}
\end{align}
 Based on Theorem~\ref{thm:Qsetdecomp} and~\ref{thm:graddecomp}, Theorem~\ref{thm:pgt} below presents the multi-agent policy gradient theorem for the formulation in Section~\ref{sec:coopmarl}.
\begin{thm}
Let $\pi_{\theta_i}(\cdot)$ be the  policy for agent $i$ parameterized by $\theta_i$ and $\hat{\pi}_i = \prod_{j \in \I^i_{\widehat{Q}}} \pi_{\theta_j}(a_j|s_{\I^j_O})$,
 where $\widehat{Q}_i(\cdot)$ is defined in~\eqref{eq:Qhat}. Let $J(\theta)$ be the expected global return.
\begin{enumerate}[(a)]
\item if ~$\pi_{\theta_i}$ is deterministic, then $\nabla_{\theta_i}J(\theta) =\underset{s_{\I^i_{\widehat{Q}}}\sim d^\pi(s_{\I^i_{\widehat{Q}}})}{\mathbb{E}}\left[ \nabla_{\theta_i} \pi_{\theta_i}(s_{\mathcal{I}^i_O}) \nabla_{a_i} \widehat{Q}_i(s_{\I^i_{\widehat{Q}}},a_{\I^i_{\widehat{Q}}})\big|_{a_i = \pi_{\theta_i}(\cdot)}\right]$.
    \item if ~$\pi_{\theta_i}$ is stochastic, then $\nabla_{\theta_i}J(\theta) =\underset{s_{\I^i_{\widehat{Q}}}\sim d^\pi(s_{\I^i_{\widehat{Q}}}),a_{\I^i_{\widehat{Q}}} \sim \hat{\pi}^i}{\mathbb{E}} \left[ \widehat{Q}_{i}(s_{\I^i_{\widehat{Q}}},a_{\I^i_{\widehat{Q}}} )\nabla_{\theta_i}\ln\pi_{\theta_i}(a_i|s_{\I^i_O}) \right]$.
\end{enumerate}
\label{thm:pgt}
\end{thm}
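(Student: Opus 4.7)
\textbf{Proof proposal for Theorem~\ref{thm:pgt}.} My plan is to treat the two parts in parallel: reduce each to the classical single-agent policy gradient identity (stochastic or deterministic) applied to the global objective $J(\theta)$, and then exploit the additive decomposition $Q^\pi(s,a)=\sum_{j\in\mathcal{V}}Q^\pi_j(s_{\mathcal{I}^j_Q},a_{\mathcal{I}^j_Q})$ together with the product factorization $\pi_\theta(a|s)=\prod_{k\in\mathcal{V}}\pi_{\theta_k}(a_k|o_k)$. In both cases the goal is to show that only the summands indexed by $j\in\mathcal{I}^i_{\text{GD}}$ survive, which is precisely the content of the definition of $\widehat{Q}_i$ in~\eqref{eq:Qhat}. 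A recurring ingredient is the transpose relation $\mathcal{G}_{\text{GD}}=\mathcal{G}^\intercal_{\text{VD}}$, which translates ``$j\in\mathcal{I}^i_{\text{GD}}$'' into ``$i\in\mathcal{I}^j_Q$'' and thus into ``$Q^\pi_j$ actually depends on the variables of agent $i$.''

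For part (a), I would invoke the deterministic policy gradient identity, which in our setting reads $\nabla_{\theta_i}J(\theta)=\mathbb{E}_{s\sim d^\pi}\bigl[\nabla_{\theta_i}\pi_{\theta_i}(s_{\mathcal{I}^i_O})\,\nabla_{a_i}Q^\pi(s,a)\bigr|_{a=\pi_\theta(s)}\bigr]$. Substituting the $Q$-decomposition and using the equivalence above, $\nabla_{a_i}Q^\pi_j$ vanishes whenever $i\notin\mathcal{I}^j_Q$, so the action-gradient collapses to $\nabla_{a_i}Q^\pi=\sum_{j\in\mathcal{I}^i_{\text{GD}}}\nabla_{a_i}Q^\pi_j=\nabla_{a_i}\widehat{Q}_i$. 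Since $\widehat{Q}_i$ depends only on $(s_{\mathcal{I}^i_{\widehat{Q}}},a_{\mathcal{I}^i_{\widehat{Q}}})$ and since $\mathcal{I}^i_O\subseteq\mathcal{I}^i_Q\subseteq\mathcal{I}^i_{\widehat{Q}}$ (because $i\in\mathcal{I}^i_R$, so $\mathcal{I}^i_O$ sits inside the reachability set $U^t_i$), the entire integrand is a function of $s_{\mathcal{I}^i_{\widehat{Q}}}$ alone, and marginalizing $d^\pi(s)$ to $d^\pi(s_{\mathcal{I}^i_{\widehat{Q}}})$ yields the stated formula.

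For part (b), I would start from the stochastic policy gradient theorem $\nabla_{\theta_i}J(\theta)=\mathbb{E}_{s\sim d^\pi,\,a\sim\pi_\theta}\bigl[Q^\pi(s,a)\,\nabla_{\theta_i}\ln\pi_\theta(a|s)\bigr]$ and use the product factorization to collapse the log-derivative to $\nabla_{\theta_i}\ln\pi_{\theta_i}(a_i|o_i)$. After substituting $Q^\pi=\sum_j Q^\pi_j$, each term with $j\notin\mathcal{I}^i_{\text{GD}}$ has the form $Q^\pi_j(\cdot)\,\nabla_{\theta_i}\ln\pi_{\theta_i}(a_i|o_i)$ in which $Q^\pi_j$ is independent of $a_i$. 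Conditioning on $s$ and on $a_{-i}$, the factorization of $\pi_\theta$ implies that $a_i|s\sim\pi_{\theta_i}(\cdot|o_i)$ is conditionally independent of $a_{-i}$, so the standard score-function identity $\mathbb{E}_{a_i\sim\pi_{\theta_i}(\cdot|o_i)}[\nabla_{\theta_i}\ln\pi_{\theta_i}(a_i|o_i)]=0$ kills such terms. The surviving summands consolidate into $\widehat{Q}_i$, and marginalizing the joint distribution $(s,a)\sim d^\pi\cdot\pi_\theta$ to the indices in $\mathcal{I}^i_{\widehat{Q}}$ reproduces the product form $\hat{\pi}_i=\prod_{j\in\mathcal{I}^i_{\widehat{Q}}}\pi_{\theta_j}(a_j|o_j)$ (again by the factorization of $\pi_\theta$).

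The main obstacle I anticipate is the conditional-independence bookkeeping in the score-function step of part (b): one has to be careful that, once $s$ and $a_{-i}$ are conditioned upon, the remaining randomness in $a_i$ is truly governed by $\pi_{\theta_i}(\cdot|o_i)$ alone, which relies on the product structure of the global policy and on $o_i$ being a deterministic function of $s$. A secondary, mostly notational, subtlety is verifying that marginalizing out the agents outside $\mathcal{I}^i_{\widehat{Q}}$ recovers exactly the product $\hat{\pi}_i$ used in the theorem statement; both subtleties ultimately reduce to the same factorization $\pi_\theta(a|s)=\prod_k\pi_{\theta_k}(a_k|o_k)$ stated in Section~\ref{sec:coopmarl}.
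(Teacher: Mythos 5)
Your proposal is correct, and it reaches the same endpoint as the paper's proof by a noticeably shorter route. The paper does not invoke the single-agent policy gradient theorems as black boxes: for each part it rederives the theorem from scratch in the style of Silver et al.\ and Sutton--Barto, unrolling $\nabla_{\theta_i}V^{\pi_\theta}(s)$ through the Bellman recursion with the visitation probabilities $\rho^{\pi_\theta}(s\to s',k)$, and it threads Theorem~\ref{thm:graddecomp} into that recursion from the very first line (replacing $\nabla_{\theta_i}Q^\pi$ by $\nabla_{\theta_i}\widehat{Q}^\pi_i$ before summing the geometric series), stating the smoothness, compactness, and Fubini conditions at each interchange. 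You instead apply the classical identities to the global MDP first and localize afterward: in (a) via $\nabla_{a_i}Q^\pi_j=0$ for $i\notin\mathcal{I}^j_Q$ (Theorem~\ref{thm:Qsetdecomp} plus the transpose relation $\mathcal{E}_{\text{GD}}=\mathcal{E}^\intercal_{\text{VD}}$), and in (b) via the score-function identity $\mathbb{E}_{a_i\sim\pi_{\theta_i}(\cdot|o_i)}[\nabla_{\theta_i}\ln\pi_{\theta_i}(a_i|o_i)]=0$ applied to the terms with $j\notin\mathcal{I}^i_{\text{GD}}$. This is legitimate because the global system with policy $\pi_\theta(a|s)=\prod_k\pi_{\theta_k}(a_k|o_k)$ and reward $\sum_i r_i$ is itself an MDP, and differentiating with respect to the block $\theta_i$ picks out exactly the $a_i$ component; your localization step is essentially a re-proof of the content of Theorem~\ref{thm:graddecomp} rather than a citation of it. What your route buys is brevity and a clear separation between the generic PGT machinery and the structural decomposition; what the paper's route buys is explicit control of the regularity conditions (which you defer to the cited theorems) and a derivation of the stationary distribution $d^\pi$ that justifies the final marginalization to $d^\pi(s_{\mathcal{I}^i_{\widehat{Q}}})$ and $\hat{\pi}^i$. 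Your two flagged subtleties (conditional independence of $a_i$ given $s$, and the containment $\mathcal{I}^i_O\subseteq\mathcal{I}^i_Q\subseteq\mathcal{I}^i_{\widehat{Q}}$ needed for the final restriction of the integrand) are real and correctly resolved; the paper handles the same points implicitly in its last display.
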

The proof of Theorem~\ref{thm:pgt} is deferred to Appendix~\ref{sec:pgt}. This theorem establishes that the gradient of the global objective function in~\eqref{eq:globalobj} with respect to  $\theta_i$, $\forall~i\in \mathcal{V}$, can be obtained locally using the corresponding score function $\nabla_{\theta_i} \pi_{\theta_i}(\cdot)$ and the decomposed action-value function $\widehat{Q}^\pi_i(\cdot)$. In contrast, the existing multi-agent policy gradient theorem proposed in~\cite{zhang2018fully} requires estimation of the global value or advantage functions that depend on the global state, action, and reward. Therefore, the existing MARL algorithms usually require some form of centralized architecture or consensus during training to diffuse the local information of an agent over the network. Theorem~\ref{thm:pgt} facilitates a sparse architecture for information exchange between a subset of agents, which we refer to as \textit{partially decentralized training and decentralized execution (P-DTDE)} paradigm. Moreover, we theoretically establish variance reduction of the P-DTDE policy gradient estimator compared to the CTDE policy gradient estimators in Section~\ref{sec:variance} and empirically demonstrate the scalability and faster convergence of the P-DTDE paradigm in Section~\ref{sec:exp}. The formulation presented in Section~\ref{sec:coopmarl} is classified as \textit{Reactive Policy under Full Observability}~\cite{baisero2021unbiased} i.e., the policy is reactive and the union of observations recovers the full state. Therefore, the state-based critics in our P-DTDE paradigm are unbiased~\cite[Theorem 4.3]{baisero2021unbiased}.

\section{Variance analysis}\label{sec:variance}
In this section, we analyze the variance of the policy gradient estimator based on the proposed P-DTDE scheme.  
We obtain a general expression of the difference between the total variances of the CTDE and the P-DTDE stochastic policy gradient estimators, assuming that the gradient estimators produce biased estimates of the action-value function with non-zero variances. When both estimators achieve unbiased estimation, we characterize upper and lower bounds for the difference between the total variances and establish that the total variance of the P-DTDE policy gradient estimator is less than the CTDE policy gradient estimators.

\begin{assume}
    The individual reward of each agent at any time $t$ is uniformly bounded i.e., $\forall~i\in \mathcal{V},~t\in \Z_{+}$, $\exists$ an $r_u\in \R$ such that $|r_i(\cdot)| \leq r_u.$
    \label{assume:boundedr}
\end{assume} 

We consider two policy gradient estimators: one based on the CTDE paradigm ($\gc$), where each agent estimates the global action value function $Q^\pi(s,a)$ based on the global state and action $(s,a)$, and another based on the proposed P-DTDE paradigm ($\gq$), where agent $i$, $\forall i$, estimates $\widehat{Q}^\pi_i(\hat{s}_i,\hat{a}_i )$  given in~\eqref{eq:Qhat}, where we have let $\hat{s}_i = s_{\I^i_{\widehat{Q}}}$ and $\hat{a}_i = a_{\I^i_{\widehat{Q}}}$. We further consider that the estimation of $Q^\pi(s,a)$ and $\widehat{Q}^\pi_i(\hat{s},\hat{a})$ is not exact. Let $\dq$ and $\dqh$ be the errors in the estimation of $Q^\pi(s,a)$ and $\widehat{Q}^\pi_i(\hat{s}_i, \hat{a}_i)$, respectively. The mean and variance of $\dq$ and $\dqh$ are denoted as $(\mu_Q,\sigma^2_Q)$ and $(\mu_{\widehat{Q}},\sigma^2_{\widehat{Q}})$, respectively. For stochastic policy gradient, the CTDE and P-DTDE estimators are given by
$$\mathbf{g}^i_C = (Q^\pi(s,a)-\dq)\nabla_{\theta_i} \ln{\pi_{\theta_i}(a_i|s_{\I^i_O})}~\mbox{and}~
\mathbf{g}^i_Q = (\widehat{Q}^\pi_i(\hat{s}_i,\hat{a}_i) - \dqh)\nabla_{\theta_i} \ln{\pi_{\theta_i}(a_i|s_{\I^i_O})}.$$

We obtain a general expression for the difference between the total variances of $\gc$ and $\gq$ in Appendix~\ref{sec:vardiff}. Our result shows that the difference in the total variances depends  the variance of the policy gradient as well as the accuracy of the estimation in terms of $(\mu_Q,\sigma^2_Q)$ and $(\mu_{\widehat{Q}},\sigma^2_{\widehat{Q}})$. Assuming that the estimation is unbiased, i.e., $\mu_Q=\mu_{\widehat{Q}}=0$, Theorem~\ref{thm:vardiff} below establishes the upper and lower bounds on the difference between the total variances of the two estimators. 

To state the theorem, we define the action value function after a subset of agents, say $I\subset \mathcal{V}$, take their actions as 
    $Q^\pi(s, a_{I}) = \mathbb{E}_{a_{-{I}}\sim \pi_{\theta_{-I}}}\left[{Q}^\pi(s, a)\right]$, and the corresponding advantage function of the subset of agents $I$ as ${A_I(s, a_{-I},a_{{I}})} = {Q}^\pi(s, a) - {Q}^\pi(s, a_{{-I}})$, 
where $-I = \mathcal{V}\setminus I.$

\begin{thm}
Define the score function $\gpi = \nabla_{\theta_i} \ln{\pi_{\theta_i}(a_i|s_{\I^i_O})}$. Then
    the total variances (\textbf{\textit{tVar}}) of $\gc$ (the CTDE estimator) and $\gq$ (the P-DTDE estimator) with respect to the distributions of $s,a,\delta_Q,\delta_{\widehat{Q}}$ satisfy
    \begin{align}
&\mathbb{E}_{s} \bigg[N^2_i\mathbb{E}_{a_{-i} } \left[ \left(\sum_{j\in \mathcal{V}\setminus \I^i_{\widehat{Q}}}{Q}_j(s,{a}_{\I^j_Q})\right)^2\right]+(\sigma^2_Q  -\sigma^2_{\widehat{Q}})\mathbb{E}_{a}\left[||\gpi ||^2\right]\bigg] \nonumber\\&\hspace{30pt}\leq\var_{s,a, \dq } [\gc] - \var_{s,a ,\dqh} [\gq]  \label{eq:var_bound}\leq M_i^2\sum_{j \in \mathcal{V}\setminus \I^i_{\widehat{Q}}}\left(\epsilon_j\right)^2 +(\sigma^2_Q  -\sigma^2_{\widehat{Q}})\mathbb{E}_{s,a }\left[||\gpi ||^2\right],
\end{align}
    where $M_i \!= \!\underset{s,a}{\text{sup }} ||\gpi||$, $N_i \!  =  \! \underset{s,a}{\text{inf }} ||\gpi||$,\! and $\epsilon_i  \! \!= \!  \underset{s,a}{\text{sup }} |A_i(s,a_{-i},a_i)|.$
    \label{thm:vardiff}
\end{thm}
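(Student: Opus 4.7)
The strategy is to reduce the total-variance difference to the ``noise-free'' quantity $\mathbb{E}[((Q^\pi)^2 - (\widehat{Q}^\pi_i)^2)||\gpi||^2]$ plus the explicit correction $(\sigma^2_Q - \sigma^2_{\widehat{Q}})\mathbb{E}[||\gpi||^2]$ that appears verbatim in both bounds, and then bound the noise-free part above and below using the structural fact that the residual $D := Q^\pi - \widehat{Q}^\pi_i = \sum_{j\notin \mathcal{I}^i_{\text{GD}}} Q^\pi_j$ is a function only of $(s_{-i}, a_{-i})$. This independence is inherited from Theorem~\ref{thm:Qsetdecomp}: $j \notin \mathcal{I}^i_{\text{GD}}$ forces $i \notin \mathcal{I}^j_Q$, so none of the summands in $D$ depends on $s_i$ or $a_i$.

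First, I would verify $\mathbb{E}[\gc] = \mathbb{E}[\gq] = \nabla_{\theta_i} J(\theta)$. With $\mu_Q = \mu_{\widehat{Q}} = 0$ and $\dq, \dqh$ independent of $(s, a)$, we have $\mathbb{E}[\gc] = \mathbb{E}[Q^\pi\gpi]$ and $\mathbb{E}[\gq] = \mathbb{E}[\widehat{Q}^\pi_i \gpi]$, and Theorem~\ref{thm:pgt}(b) identifies both with $\nabla_{\theta_i} J(\theta)$. Hence $\var(\gc) - \var(\gq) = \mathbb{E}[||\gc||^2] - \mathbb{E}[||\gq||^2]$. Expanding $(Q^\pi - \dq)^2 ||\gpi||^2$ and $(\widehat{Q}^\pi_i - \dqh)^2 ||\gpi||^2$ and again using the independence of the noise, the linear-in-$\delta$ terms vanish by zero mean, the $\delta^2$-terms collapse to $(\sigma^2_Q - \sigma^2_{\widehat{Q}})\mathbb{E}[||\gpi||^2]$, and the theorem reduces to bounding $\mathbb{E}[((Q^\pi)^2 - (\widehat{Q}^\pi_i)^2)||\gpi||^2]$.

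For the lower bound I would use $(Q^\pi)^2 - (\widehat{Q}^\pi_i)^2 = D^2 + 2\widehat{Q}^\pi_i D$ and condition on $(s, a_{-i})$. Because $D$ is $a_i$-independent it factors out of the inner expectation; combining with $\mathbb{E}_{a_i}[||\gpi||^2 \mid s]\geq N_i^2$ gives $\mathbb{E}[D^2||\gpi||^2] \geq N_i^2\,\mathbb{E}_{s, a_{-i}}[D^2]$. For the cross term $\mathbb{E}[\widehat{Q}^\pi_i D ||\gpi||^2]$, I would exploit the identity $\widehat{Q}^\pi_i - \mathbb{E}_{a_i}[\widehat{Q}^\pi_i] = A_i$ (which holds because $Q^\pi$ and $\widehat{Q}^\pi_i$ differ by the $a_i$-independent term $D$) together with the score-function baseline $\mathbb{E}_{a_i}[\gpi] = 0$, which ultimately restricts the lower bound to the sub-sum $\sum_{j \notin \mathcal{I}^i_{\widehat{Q}}} Q^\pi_j$ appearing in the theorem. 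For the upper bound, the pointwise estimate $||\gpi||^2 \leq M_i^2$ reduces the task to majorizing $\mathbb{E}[(Q^\pi)^2 - (\widehat{Q}^\pi_i)^2]$ by a sum of squared advantages: marginalizing the actions of the agents in $\mathcal{V}\setminus \mathcal{I}^i_{\widehat{Q}}$ one at a time produces a telescoping expression whose increments are pointwise dominated by single-agent advantages $|A_j| \leq \epsilon_j$, and a Cauchy--Schwarz or AM--GM step yields the claimed $M_i^2 \sum_{j\notin \mathcal{I}^i_{\widehat{Q}}} \epsilon_j^2$.

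The main obstacle I expect is the sign and magnitude analysis of $\mathbb{E}[\widehat{Q}^\pi_i D ||\gpi||^2]$ in the lower bound: this term is not pointwise non-negative, and ensuring that exactly the agents in $\mathcal{V}\setminus \mathcal{I}^i_{\widehat{Q}}$ (rather than the a priori larger $\mathcal{V}\setminus \mathcal{I}^i_{\text{GD}}$) survive in the final estimate demands a delicate joint use of the score-function baseline identity and the conditional-independence relations encoded by the MABN. The telescoping needed for the upper bound is comparatively mechanical, but selecting an ordering of agents so that each increment is exactly a single-agent advantage $A_j$ (rather than a compound advantage) is the bookkeeping challenge that must be solved cleanly to match the theorem's $\sum_j \epsilon_j^2$ form.
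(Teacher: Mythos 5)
Your overall skeleton matches the paper's: both establish that $\gc$ and $\gq$ have the same mean via the score-function/baseline argument (because $Q^\pi-\widehat{Q}^\pi_i=\bar{Q}^\pi_i$ is independent of $a_i$), reduce the total-variance difference to $\mathbb{E}\big[\big((Q^\pi)^2-(\widehat{Q}^\pi_i)^2\big)\|\gpi\|^2\big]+(\sigma^2_Q-\sigma^2_{\widehat{Q}})\mathbb{E}\big[\|\gpi\|^2\big]$, and obtain the upper bound by pulling out $M_i^2$ and controlling a multi-agent advantage by single-agent ones. Two steps in your plan, however, do not go through as described.

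First, the lower bound. You expand $(Q^\pi)^2-(\widehat{Q}^\pi_i)^2=D^2+2\widehat{Q}^\pi_i D$ and propose to dispose of the cross term using $\mathbb{E}_{a_i}[\gpi]=0$. That identity kills terms that are \emph{linear} in $\gpi$ (it is exactly what gives equal means), but the cross term here carries $\|\gpi\|^2$, whose conditional expectation over $a_i$ is not zero. The mechanism the paper uses instead is different: under the product policy the actions of $\I^i_{\widehat{Q}}$ and of $\mathcal{V}\setminus\I^i_{\widehat{Q}}$ are conditionally independent given $s$, and $\widehat{Q}^\pi_i(\hat{s}_i,\hat{a}_i)=\mathbb{E}_{\bar{a}_i\sim\bar{\pi}_i}[Q^\pi(s,a)]$, so that $\mathbb{E}_{\bar{a}_i}\big[(Q^\pi)^2-(\widehat{Q}^\pi_i)^2\big]=\mathbb{E}_{\bar{a}_i}\big[(Q^\pi-\widehat{Q}^\pi_i)^2\big]$; the cross term vanishes under the inner expectation over $\bar{a}_i$, not over $a_i$. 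Without this conditional-mean identity (equivalently, identifying $D$ with the multi-agent advantage $A_{\mathcal{V}\setminus\I^i_{\widehat{Q}}}$), your cross term survives, the completed square is not available, and the $N_i^2\,\mathbb{E}_{a_{-i}}\big[(\sum_j Q_j)^2\big]$ lower bound is not reached. This is the missing idea, and it is precisely the step you flagged as the obstacle.

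Second, the upper bound. Telescoping followed by Cauchy--Schwarz applied to $\big(\sum_j x_j\big)^2$ yields $|\mathcal{V}\setminus\I^i_{\widehat{Q}}|\sum_j x_j^2$, i.e.\ an extra cardinality factor relative to the stated $M_i^2\sum_j\epsilon_j^2$. The paper avoids this by invoking subadditivity of the variance of multi-agent advantages (Lemma 3 of~\cite{kuba2021settling}): the telescoping increments are conditionally mean-zero in the chosen ordering, so their cross terms vanish in expectation and the variances add. A self-contained version of your argument must prove that orthogonality rather than merely bound each increment pointwise by $\epsilon_j$.
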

 \begin{proof}
Refer to Appendix~\ref{sec:vardiff}.
 \end{proof}
The proof is deferred to Appendix~\ref{sec:vardiff}. A similar upper bound was obtained in~\cite{kuba2021settling} for a decentralized policy gradient estimator.
However, in this paper, we analyze the lower bound of the total variance difference obtained by the decomposition of the Q-function. 
Given a fixed amount of samples, if $\mathcal{I}^i_{\widehat{Q}} = \mathcal{V}$, the centralized $Q_i$ has the same number of parameters as the $\widehat{Q}_i$. Therefore, using the same estimation technique, it is expected that $\sigma^2_Q = \sigma^2_{\widehat{Q}}$. However, when $\mathcal{I}^i_{\widehat{Q}} \subset \mathcal{V}$, the irrelevant agents  contribute to additional noise in the estimation of the centralized $Q_i$, leading to a strictly higher variance, i.e., $\sigma^2_Q >\sigma^2_{\widehat{Q}}$.
Therefore, it is reasonable to assume that $ \sigma^2_Q- \sigma^2_{\widehat{Q}} \geq 0$.  

Observe that the error in the total variance of the CTDE and the proposed P-DTDE policy gradient estimators is proportional to the cardinality of $\mathcal{V}\setminus \mathcal{I}^i_{\widehat{Q}}$ with $ \mathcal{I}^i_{\widehat{Q}}= \bigcup_{j\in \I^i_{\text{GD}}} \mathcal{I}^j_{\text{VD}}.$ Thus, a sparser VD graph implies a higher cardinality of $\mathcal{V}\setminus \mathcal{I}^i_{\widehat{Q}}$,  leading to a larger lower bound in Theorem~\ref{thm:vardiff} implying a lower total variance of the P-DTDE estimator.
If $N_i \neq 0$, and $ \sigma^2_Q\geq \sigma^2_{\widehat{Q}}$, the lower bound in~\eqref{eq:bound} is non-zero and increases linearly with the second moment of $\sum Q_j(\cdot),~\forall~j\in \mathcal{V}\setminus \I^i_{\widehat{Q}}$.
This signifies the effect of the proposed approach in achieving variance reduction compared to the centralized policy gradient estimator.

\subsection{Multi-agent structured actor-critic algorithm}
We now discuss the development of model-free multi-agent actor critic algorithms using Theorem~\ref{thm:Qsetdecomp},~\ref{thm:graddecomp}, and~\ref{thm:pgt}. In particular, we employ the deterministic policy gradient in Theorem~\ref{thm:pgt}(a) to develop an off-policy actor critic algorithm based on the P-DTDE paradigm, which we refer to as \textit{multi-agent structured actor critic (MAStAC)} algorithm. The insights of value dependence in Section~\ref{sec:Qdecomp},~\ref{sec:mapg} and Theorem~\ref{thm:pgt} can be leveraged to design other policy gradient MARL algorithms, such as on-policy PPO algorithms.

In MAStAC, each agent $i$ is assigned an actor network $\pi_{\theta_i}$, a critic network $Q_{\mu_i}$, and their corresponding target networks $\pi_{\theta'_i}' $ and $Q_{\mu'_i}'$. Training comprises two steps per agent. In the critic update step (Line 18), each agent updates its critic by minimizing the temporal difference (TD) error using a minibatch of size $M$ sampled from its replay buffer $\mathcal{B}_i$ which stores tuples $(o_i, s_{\mathcal{I}^i_{Q}}, a_{\mathcal{I}^i_{Q}}, r_i, {s^\prime}_{\mathcal{I}^i_{Q}}, o^{\prime}_{\mathcal{I}^i_{Q}})$. Since $(s_{\I^i_Q},a_{\I^i_Q})$ is sufficient to compute $s_{\I^i_Q}'$, it suffices for each agent to maintain its local dependency information in $\mathcal{B}_i$. In the actor update step (Line 16), each actor minimizes its respective loss function. We employ a simultaneous implementation, where all the actor and critic networks are updated at the end of each epoch, while the target networks are updated at a slower timescale. Empirically, we observed that this simultaneous implementation outperforms coordinated implementation, where each actor, critic pair is updated sequentially. The schematic of the MAStAC is shown in Fig.~\ref{fig:NN} (Appendix~\ref{sec:alg}). The MAStAC algorithm is summarized in Algorithm~\ref{alg:cac}.

\begin{algorithm}[h]
\caption{Multi-agent structured actor critic (MAStAC) algorithm}\label{alg:cac}
\begin{algorithmic}[1]
   \FOR{each epoch $e = 1:K$}
  \STATE Initialize a random process $\mathcal{N}$ for exploration.
  \IF{epoch \% max\_episode\_length == 0}
  \STATE {Sample $s_0$ from initial state distribution $\mathcal{D}$}
  \ENDIF
\FOR{each agent $i = 1:N$}{
  \STATE Compute action $a_i = \pi(o_i) + \mathcal{N}$ using the current policy and exploration.}
  \ENDFOR
 \STATE Execute actions $\mathbf{a} = (a_1,\cdots,a_N)$ and receive rewards $\mathbf{r} = (r_1,\cdots,r_N)$, and next observations $\mathbf{o}' = (o'_1,\cdots,o'_N)$.
 \STATE Store $(\mathbf{o}, \mathbf{a}, \mathbf{r}, \mathbf{o}' )$ in replay buffer $\mathcal{B}$.\\
\IF{epoch \% batch\_size == 0}
\FOR{each agent $i = 1:N$ (simultaneous implementation)}
\STATE Sample a minibatch of transitions  ~$(o_i, s_{\mathcal{I}^i_{Q}}, a_{\mathcal{I}^i_{Q}}, r_i, {s^\prime}_{\mathcal{I}^i_{Q}}, o^{\prime}_{\mathcal{I}^i_{Q}})$ of size $M$ from $\mathcal{B}$.
\STATE  Compute TD-target $y^m_i = r^m_i + Q_{\mu'_i}'( s^{\prime m}_{\mathcal{I}^i_{Q}}, \underset{k \in \mathcal{I}^i_{Q}}{\bigcup} a^{\prime}_k)\bigg|_{a^{\prime}_k = \pi'_{\theta'_k}(o^{\prime m}_k)}$.
\STATE Update actor parameters by minimizing the actor loss: 
 \STATE $\nabla \mathcal{L}_{\theta_i}= -\frac{1}{M} \sum_{m\in M}\Bigg[\nabla_{\theta_i} \pi_{\theta_i}(o^m_i)\nabla_{a_i} \sum_{j \in \mathcal{I}^i_{\text{GD}}}Q_{\mu_j}( s^m_{\mathcal{I}^j_{Q}}, a_i,a^m_{\mathcal{I}^j_{Q}\setminus i})\Big|_{a_i =\pi_{\theta_i}(o^m_i)}\Bigg]$ 
\STATE Update critic parameters by minimizing the critic loss:
\STATE $\mathcal{L}_{\mu_i} = \frac{1}{M} \sum_{m \in M} \left( y_i^m - \gamma Q_{\mu_i}( s^m_{\mathcal{I}^i_{Q}}, a^m_{\mathcal{I}^i_{Q}})\right)^2$.
\STATE Update target network parameters for the critic and actor for agent $k$ as
\STATE $\mu'_i \leftarrow \tau \mu_i + (1-\tau) \mu'_i$.
\STATE $\theta'_i \leftarrow \tau \theta_i + (1-\tau) \theta'_i$.
\ENDFOR
\ENDIF
 \ENDFOR
 \end{algorithmic}
\end{algorithm}

Depending on the individual inter-gent couplings, the resulting $\I^i_{\text{VD}}$ might be dense. For example, consider $\mathcal{E}_S =\{(i,j)\in \mathcal{V}^2| 0\leq i-j \leq 1\},~\mathcal{E}_O = \mathcal{E}^\intercal_S,~\mathcal{E}_R= \emptyset$, where {$\mathcal{E}^\intercal_S =\{(j,i)\in \mathcal{V}^2| (i,j)\in \mathcal{E}_S\}$}{, i.e., the edge set `transposed' from $\mathcal{E}_S$}. Although {$\mathcal{E}_S$, $\mathcal{E}_O$, and $\mathcal{E}_R$ are individually sparse}, 
their combined effect yields  a complete value dependency graph such that the decomposition in Theorem~\ref{thm:Qsetdecomp} requires the global state and action.
To improve scalability, we propose a strategy to approximate the dense value dependency graph by a sparse graph.
We approximate the value dependency set of an agent $i\in \mathcal{V}$ at time $t$ by considering only the predecessors of $\mZ_i(\tau)$ within $t\leq \tau\leq t+\kappa \leq T$ time steps, yielding a sparser $\kappa-$approximated value dependency graph $\mathcal{G}^\kappa_{\text{VD}}$.
The intuition behind this approximation is that in the MABN, the states and actions have higher influence on the rewards in near future compared to those in the distant future. We formalize this approximation in Definition~\ref{defin:approxvaldep}.
\begin{defin}[\textbf{\textit{$\kappa-$approximated value dependency graph}}]
     For time-invariant inter-agent couplings, $\forall$ $\kappa \in \mathbb{N}$, an edge $(j,i) \in \mathcal{E}^\kappa_{\text{VD}}(t)\subseteq \mathcal{V}\times \mathcal{V}$, $i,j \in \mathcal{V}$ if and only if either $s_j(t)$ or $a_j(t)$ can reach $\mZ_i(t+1)$ within $\kappa$ traversals of the bidirectional edge in $\mathcal{G}_F$.
\label{defin:approxvaldep}
\end{defin}
Utilizing $\I^{\kappa, i}_{\text{VD}}$ instead of $\I^i_{\text{VD}}$ in Algorithm~\ref{alg:cac} yields the `MAStAC approximated' algorithm. When each agent's reward and observation contain only its own state, {the $\kappa-$approximated value dependency graph}  
recovers the $\kappa-$hop neighbor approximation strategy~\cite{Qu2019,Qu2022,jing2024distributed}.
Our interpretation of $\kappa$ is based on temporal truncation of Q-value dependence which extends naturally to time-varying inter-agent couplings. In contrast, the $\kappa-$approximation in~\cite{Qu2019,Qu2022,jing2024distributed} is in a spatial sense, i.e., it refers to the number of hops in the time-invariant inter-agent couplings.

\section{Numerical experiments}\label{sec:exp}
We evaluate the MAStAC algorithm in Algorithm~\ref{alg:cac} using three numerical examples. {In examples 1 and 2, each actor is initialized with 3 hidden layers of 64 neurons each with ReLu activation.  A softmax function is applied at the output of each actor network to normalize the actions. In example 3,  we use a linear actor network with tanh activation. In all the examples, each critic network is initialized with 3 hidden layers of 64 neurons each with ReLu activation.} The actor and critic network parameters are initialized using xavier glorot initialization. 

We compare Algorithm~\ref{alg:cac} with a variety of MARL baselines, including MADDPG~(\cite{lowe2017}), MATD3~(\cite{MATD3}), 
FACMAC~(\cite{peng2021facmac}). 
Though the deep coordination graph (DCG) proposed in~\cite{bohmer2020deep} is not directly applicable, we implement a variant by decomposing a centralized critic in a similar fashion to the DCG in~\cite{bohmer2020deep}. We refer to this baseline as `Q-deep coordination graph' (QDCG). Specifically, we consider QDCG Decentralized and QDCG Centralized, which correspond the graph used to decompose the critic being a completely disconnected graph and a complete graph, respectively.  The critic in QDCG Decentralized is equivalent to a centralized VDN critic architecture~(\cite{sunehag2017value}). Note that the DCG formulation (and QDCG) is applicable only to homogeneous observation spaces. Since the agents in Example 1 have heterogeneous observation spaces, we compare the QDCG's performance to MAStAC in Example 2 and 3 only. {To ensure a fair comparison, we omit COMA~(\cite{foerster2018counterfactual}) as it applies to stochastic policies while MADDPG~(\cite{lowe2017}), MATD3~(\cite{MATD3}), FACMAC, and MAStAC all use deterministic policies. Similarly, COVDN and COMIX~(\cite{peng2021facmac}) are omitted as they do not have an actor network.}

We run 15 Monte-Carlo (MC) simulations for each algorithm in each example. Figure~\ref{fig:9warehouse} shows the comparison of the total average reward from the simulations across different benchmarks for the three examples.
{Table~\ref{table:final} summarizes the mean and standard deviation of the reward averaged over the final 20\% of the epochs for the 15 simulations.}

\begin{figure*}[htpb]
  \begin{center}
\includegraphics[width=0.32\textwidth]{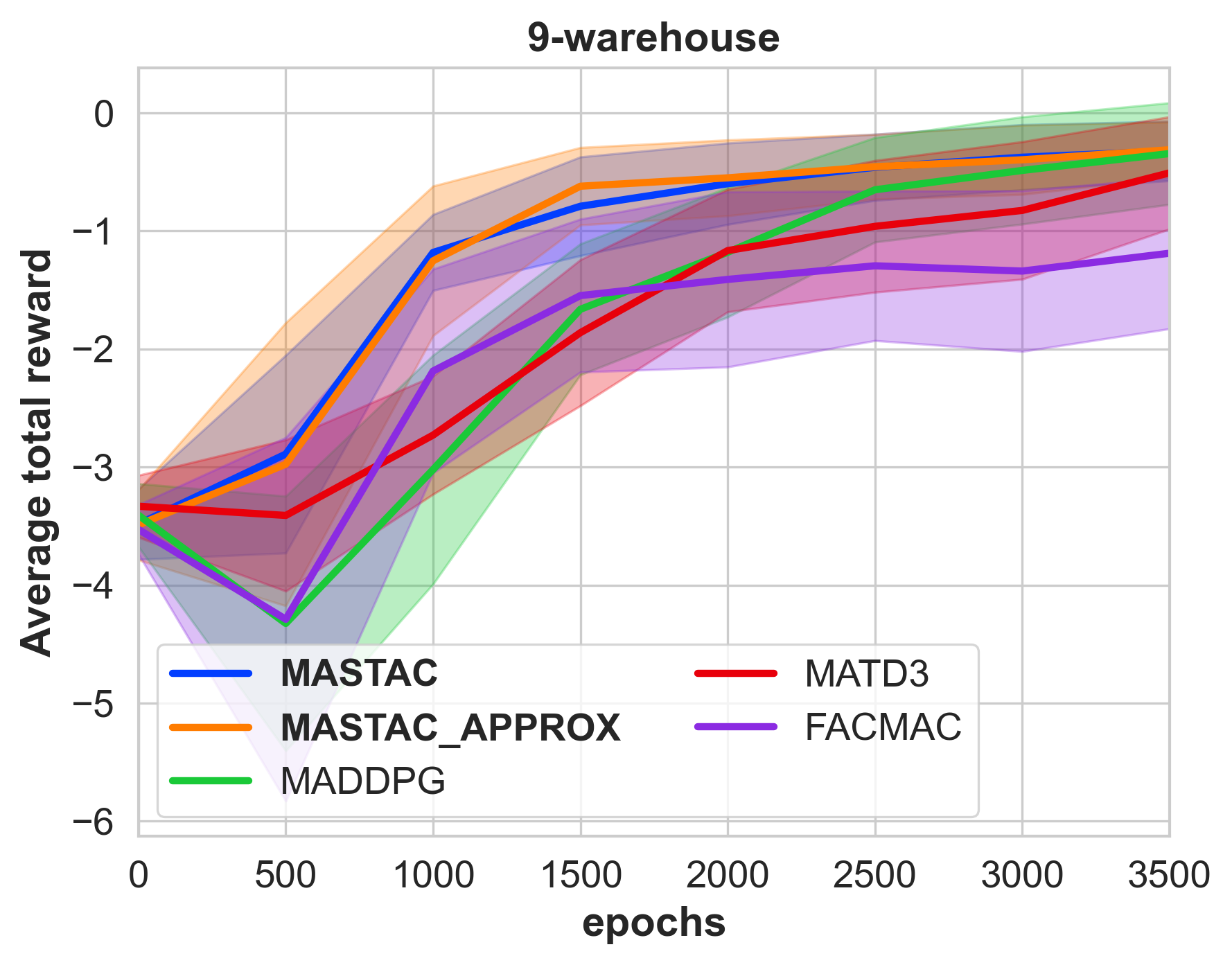}
\includegraphics[width=0.32\textwidth]{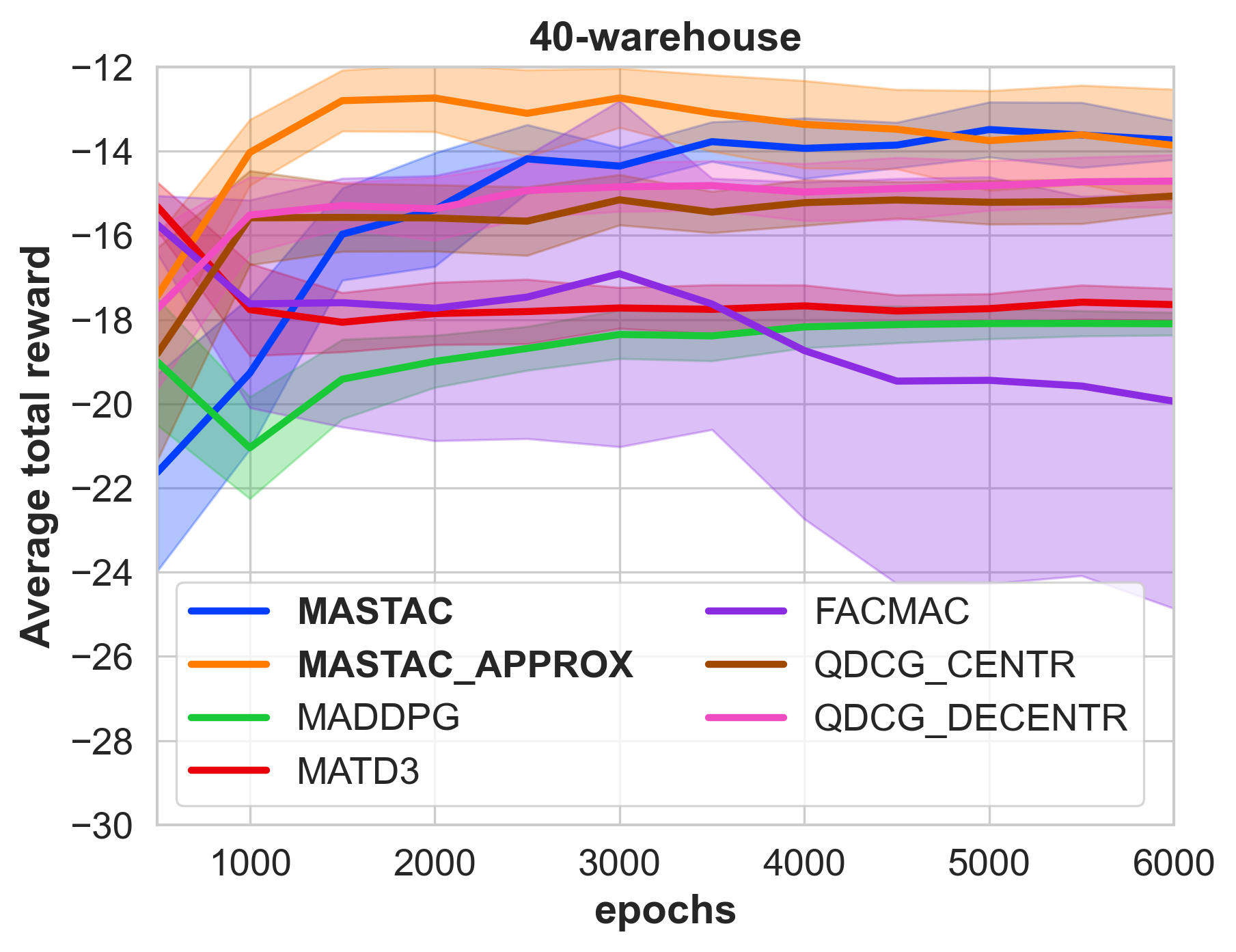}
\includegraphics[width=0.32\textwidth]{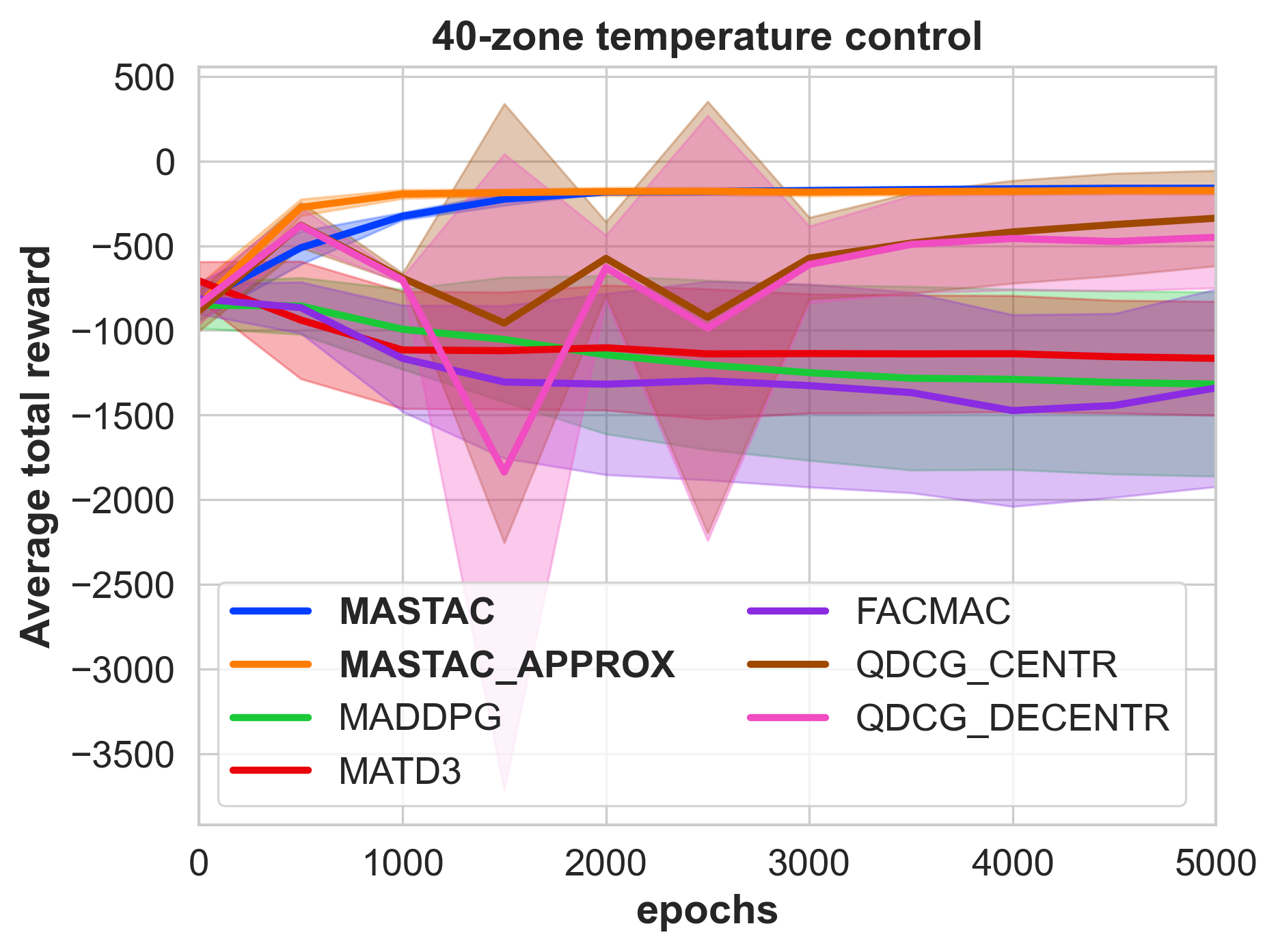}
 \end{center}
 \caption{Comparison of the total average reward for 15 MC simulations. From left to right are Example 1 to 3. }
 \label{fig:9warehouse}
\end{figure*}

\textbf{Example 1: warehouse resource allocation with a sparse $\mathcal{G}_{\text{VD}}$.}
The $N$-warehouse resource allocation problem was proposed in \cite{zhang2020cooperative} and is described in Appendix~\ref{sec:warehouse}. We consider $|\mathcal{V}| = 9$ whose inter-agent couplings are  shown in Fig.~\ref{fig:warehouse9} (Appendix~\ref{sec:warehouse}). In the simulations, we set $m_i(0) = 1$ and $z_i(t) = A_i \sin(t)$, $\forall~i$, where $A_i = 1$, if $i \in \{2,3,5,7\}$, and $A_i = -1$ otherwise. We observe from Fig.~\ref{fig:9warehouse} that the `MAStAC simultaneous and approximated' achieve a faster convergence and have lesser variance compared to the other centralized baselines. 
This corroborates the effect of the decomposition described in Section~\ref{sec:main}. We also see from Table~\ref{table:final} that the `MAStAC simultaneous and approximated' achieve the highest mean total average reward per episode and lower variance compared to the CTDE baselines. Although MADDPG, MATD3, and FACMAC may converge to comparable rewards to MAStAC (Simultaneous and Approximated), \begin{wrapfigure}{r}{0.4\textwidth}
    \centering
    \includegraphics[width=0.4\textwidth]{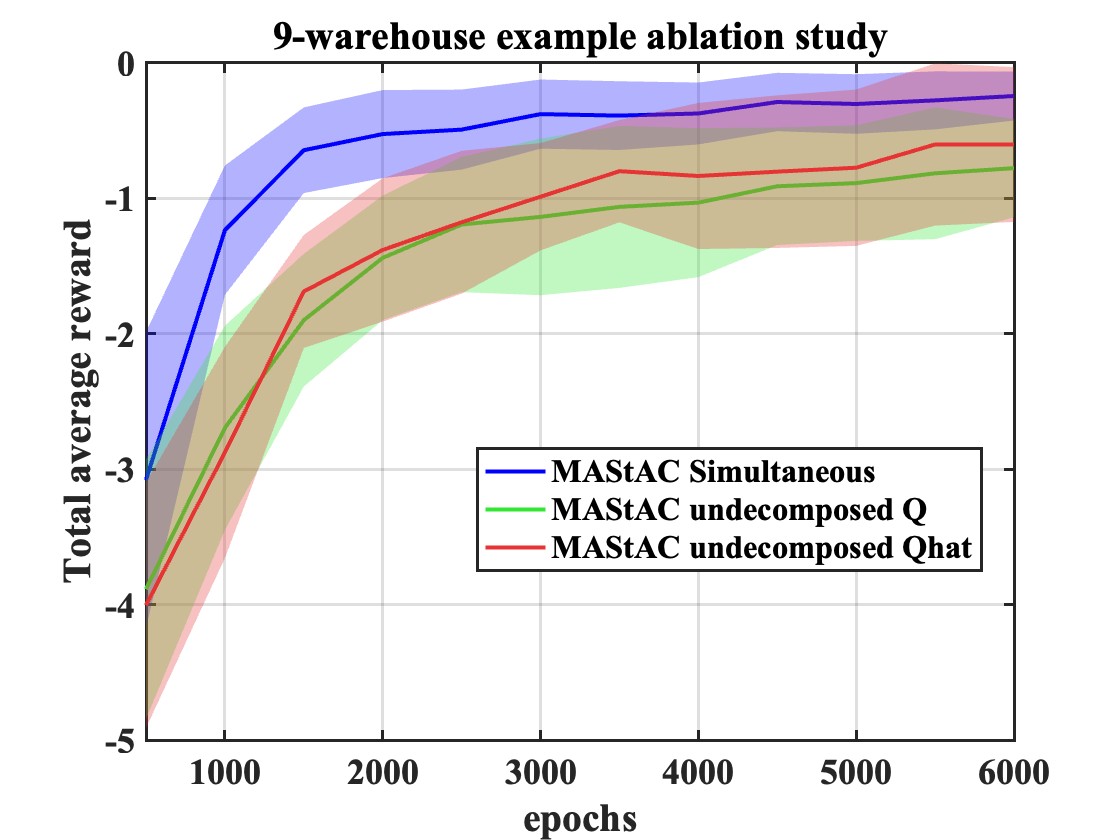}
    \caption{Ablation study of the value dependency set in the 9-warehouse example.}
    \label{fig:9wablated}
\end{wrapfigure}they exhibit much slower rates of convergence compared to MAStAC. The faster rate of convergence of MAStAC highlights its strictly better sample efficiency compared to the CTDE baselines even for this small-scale example.

\textbf{Ablation study} To investigate the effect of $\mathcal{G}_{\text{VD}}$ and $\mathcal{G}_{\text{GD}}$ discussed in Theorem~\ref{thm:Qsetdecomp} and~\ref{thm:graddecomp}, we run the 9-warehouse example using two variants of the `MAStAC simultaneous' algorithm: `MAStAC undecomposed Q' where $\I^i_Q = \mathcal{V}$ in~\eqref{eq:Qhat}, $\forall~i\in \mathcal{V}$, i.e., without using the value dependency set,   and `MAStAC undecomposed Qhat' where the critic of agent $i$ directly learns a $\widehat{Q}^\pi_i(s,a)$ to replace $\widehat{Q}^\pi_i(s_{\I^i_{\widehat{Q}}},a_{\I^i_{\widehat{Q}}})$ in~\eqref{eq:Qhat}. Figure~\ref{fig:9wablated} shows the comparison of the total average reward for the three algorithms.
After $2000$ iterations the three algorithms achieved average rewards $-0.525\pm 0.32$ (MAStAC simultaneous), $-1.439\pm 0.46$ (MAStAC undecomposed Q), and $-1.381\pm 0.53$ (MAStAC undecomposed Qhat). At the $6000^{\text{th}}$ iteration, the converged average rewards were $-0.244\pm 0.18$, $-0.777\pm 0.36$, and $-0.602\pm 0.57$, respectively. These results demonstrate that both
 `MAStAC undecomposed Q' and `MAStAC undecomposed Qhat' exhibit slower convergence, validating the effectiveness of the value dependency set prescribed in Theorem~\ref{thm:Qsetdecomp}. The performance gap underscores the computational and statistical benefits of leveraging inter-agent coupling information for exact Q-function decomposition.

\begin{table*}[htpb]
\caption{Comparison of the mean and standard deviation of the average return for the final 20\% of the epochs for 15 MC simulations across various benchmarks for the three examples.}
\label{table:final}
\begin{center}
\begin{tabular}{cccc}
    Algorithm & 9-warehouse&40-warehouse& \makecell{40-zone \\temperature control}\\       \hline
    MAStAC Sim. & 
    \textbf{-0.35 $\pm$ 0.26}
    &\textbf{-13.68 $\pm$ 0.64}
    &\textbf{-160.86 $\pm$ 5.41}
    \\
    
    MAStAC approx. & \textbf{-0.356 $\pm$ 0.26}
    &\textbf{-13.73 $\pm$ 1.26}
    &\textbf{-174.43 $\pm$ 22.38}
    \\
    MADDPG   
    &-0.417 $\pm$ 0.45
    &-18.09 $\pm$ 0.28
    &-1311.87 $\pm$ 542.13
    \\
    MATD3   
    &-0.67 $\pm$ 0.55
    &-17.62 $\pm$ 0.35
    &-1160 $\pm$ 334.59
    \\

    FACMAC  
    &-1.266$\pm$ 0.66
    &-19.76 $\pm$ 4.72
    &-1390.96$\pm$ 565.46
    \\
    QDCG (Centralized)
    &\longdash[4]
    &-15.13$\pm$ 0.47
    &-356.47 $\pm$ 292.9
    \\
    QDCG (Decentralized)
    &\longdash[4]
    &-14.72$\pm$ 0.6
    &-462.66 $\pm$ 295.75
    \\
    \hline
\end{tabular}
\end{center}
\end{table*}

\textbf{Example 2: 40-warehouse resource allocation with a dense $\mathcal{G}_{\text{VD}}$.}
To demonstrate the efficacy of the $\kappa-$approximation strategy in Definition~\ref{defin:approxvaldep} and its scalability, we consider the warehouse resource allocation problem with $|\mathcal{V}| = 40$, $\mathcal{E}_O = \mathcal{E}_S =\{(i,j)\in \mathcal{V}^2\big||i-j|=1\}\cup\{(1,N),(N,1)\},$ and individual local reward for each agent. This example yields a complete $\mathcal{G}_{\text{VD}}$, i.e., $\I^i_Q = \mathcal{V}$, $\forall~i\in \mathcal{V}.$ We observe from Fig.~\ref{fig:9warehouse} that `MAStAC approximated' with $\kappa = 2$ achieves the fastest convergence across the benchmarks. `MAStAC simultaneous' has a relatively slower convergence compared to the approximated case due to the dense $\mathcal{G}_{\text{VD}}.$ However, it converges to a higher reward compared to other centralized baselines, which demonstrates the scalability of the MAStAC approach. We note that QDCG centralized and QDCG decentralized perform better than MADDPG, MATD3, and FACMAC, owing to the coordination graph based decomposition. However, they converge to lower average rewards 
 when compared to MAStAC (Simultaneous and Approximated), which signifies the effect of the exact decomposition of the Q-function discussed in Theorem~\ref{thm:Qsetdecomp}.  Fig.~\ref{fig:kappacomp} (Appendix~\ref{sec:kappacomp}) shows the comparison of `MAStAC approximated' algorithm for different values of $\kappa$. As $\kappa$ increases, the performance of the algorithm degrades due to the increased density of $\mathcal{G}_{\text{VD}}$.

\textbf{Example 3: structured multi-zone temperature control.}
We adapt the multi-zone temperature control problem~(\cite{Lin2012thermal,Zhang2016hvac,Li2022dirl}) to a two-story building with $20$ zones on each level in a model-free setting as described in Appendix~\ref{sec:temp_control}. We prescribe $\mathcal{E}_O = \{(i,j)\in \mathcal{V}^2\big||i-j|=2\},~\mathcal{E}_S = \mathcal{E}_O \cup \mathcal{E}^\intercal_O,~$ and individual local reward for each agent.  The resulting $\mathcal{G}_{\text{VD}}$ comprises 2 strongly connected components, each consisting of zones on one level.
We observe from Fig.~\ref{fig:9warehouse} that `MAStAC approximated' with $\kappa=2$ achieves the fastest convergence owing to the sparsity in $\mathcal{G}^\kappa_{\text{VD}}.$  From Table~\ref{table:final}, we see that `MAStAC simultaneous' converges to a higher reward than `MAStAC approximated', which corroborates the exactness of the approximation in Theorem~\ref{thm:Qsetdecomp}. Among the centralized baselines, we observe a similar trend as in Example 2 that MADDPG, MATD3, and FACMAC {struggle to learn effectively} whereas QDCG centralized and decentralized converge to a comparatively higher reward. However, MAStAC simultaneous and approximated algorithms converge to higher rewards and have lower variances compared to the QDCG centralized and decentralized algorithms. 

Across the benchmark examples, the `MAStAC simultaneous and approximated' algorithms exhibit reduced variance and faster convergence compared to the existing CTDE MARL algorithms, which demonstrates the advantage of the proposed decomposition approach. {The performance gap between MAStAC and the CTDE algorithms increases considerably as the number of agents increase from Example 1 to Examples 2 and 3. Although the QDCG algorithms perform better than MADDPG, FACMAC, and MATD3, it is outperformed by the `MAStAC simultaneous and approximated' algorithms in Example 2, 3. This can be attributed to the exact decomposition of the Q-function in Theorem~\ref{thm:Qsetdecomp} in contrast to the linear decomposition of the critic into individual and pairwise functions.}

\section{Conclusions}\label{sec:conclusion}

We employ a Bayesian network approach to incorporate structural information into MARL algorithms and achieve efficient learning via exact and approximate decomposition of the gradient of the global state-action value function. We establish conditions under which the stochastic policy gradient based on the exact decomposition exhibits lower variance than the centralized gradient, leading to improved sample complexity and scalability. Based on the decomposition, we introduce the P-DTDE scheme and develop the MAStAC algorithm. Through comparison with benchmark algorithms in numerical experiments, we demonstrate the enhanced convergence speed and reduced variance of the MAStAC. 
A limitation of the proposed approach is the availability of the inter-agent coupling structures. Future work will focus on removing this limitation by learning the structures as latent variables. 
\bibliographystyle{plain}
\bibliography{references}

\newpage
\appendix
 \begin{center}
     \huge{\textbf{Supplementary Material}}
 \end{center}
\noindent
\section{Related work}\label{sec:lit}
\textbf{CTDE in MARL} Our proposed approach sparsifies the centralized structure during training in the CTDE paradigm thereby improving the rate of convergence. The MADDPG algorithm~\cite{lowe2017} employs CTDE paradigm where each agent trains a DDPG algorithm such that the actor uses only the local observations whereas the critic for each agent has access to the global observation and global policy during training. The MADDPG algorithm was demonstrated in various cooperative, competitive and mixed settings. However, as each critic accumulates the information of all of the agents it encounters the curse of dimensionality and thus does not scale to larger number of agents. Several variants of the MADDPG algorithm have been proposed to tackle different problems. For example, see \cite{chu2017, Iqbal2019, Mao2019, Ryu2018, wang2020}. Concurrently, \cite{foerster2018counterfactual} proposed the counterfactual multi-agent policy gradient algorithm (COMA) to address the credit assignment problem in cooperative multi-agent settings utilizing a centralized critic and decentralized actors. COMA algorithm uses a counterfactual baseline to marginalize out the action of a particular agent  and  compares the estimated return to that of the joint action. The key difference between MADDPG and COMA algorithms is that the COMA algorithm trains a single critic shared by all the agents whereas the MADDPG algorithm trains a separate centralized critic for each agent making it applicable to competitive environments with continuous action spaces. In addition, the MADDPG algorithm  is not designed to handle the multi-agent credit assignment problem.  The centralized value functions mitigate several challenges in MARL such as non-stationarity, partial observability, multi-agent credit assignment, equilibrium selection~\cite{christianos2023}. 
However, the exponential growth of the joint action space with the number of agents poses several computational and learning challenges: (1) difficulty in learning the centralized value function due to curse of dimensionality, (2) inefficient decentralized action selection requiring coordination among exponentially large number of joint actions, (3) high computational overhead in evaluating all possible action combinations for greedy action selection~\cite{marl-book}.

\textbf{Value function factorization} An alternative class of algorithms in the MARL literature are those based on value function factorization (see e.g.,~\cite{koller1999,guestrin2001,sunehag2017value,rashid2018qmix,son2019}) that \textit{approximate} centralized value functions with efficient training and decentralized execution. The main idea in value function factorization is to decompose the centralized action value function as a combination of individual action value functions (utilities) that can be efficiently learned and executed in a decentralized fashion. However, to ensure that the decentralized action selection with respect to individual utilities leads to effective joint actions, the decomposition should satisfy the Individual global max (IGM) property~\cite{rashid2018qmix, son2019}. The IGM property states that the greedy joint action with respect to the centralize action value function is equal to the composition of the individual greedy actions of the agents that maximize the individual utility functions. Therefore the IGM property facilitates the efficient computation of the greedy joint action during training by computing the greedy individual actions with respect to the individual utility functions.  Moreover, as the aggregation of the individual action value functions is jointly optimized, the contribution of a particular agent's action to the common reward is distinguishable which addresses the multi-agent credit assignment problem. 

Two representative examples of value function factorization are VDN~\cite{sunehag2017value} and QMIX~\cite{rashid2018qmix} algorithms. VDN proposes a linear decomposition of the centralized action value function which satisfies the IGM property~\cite[Section 9.5.2]{marl-book}. In VDN, each agent consists of an actor network and a critic network that depend on its local information. During training, the centralized action value in the loss is computed as the summation of the individual action values of all the agents allowing the agents to incorporate the impact of other agents' actions. Although the linear decomposition used in VDN is simple, in many real world scenarios the decomposition of the centralized action value function is better represented as a nonlinear function of the individual action value functions. Hence, the QMIX algorithm uses a similar architecture as the VDN except that it uses an additional mixer network that takes the individual actions value functions as input and outputs the decomposed centralized action value function. The mixer network is enforced with positive weights to ensure that the centralized action-value function is monotonic with respect to the individual utilities which is a sufficient condition to satisfy the IGM property~\cite[Section 9.5.3]{marl-book}. However, the structural constraints such as additivity and monotonicity are too restrictive to factorize the centralized value function in some environments. Therefore,~\cite{son2019} proposed the QTRAN algorithm that formulates a slightly less restrictive decomposition which still satisfies the IGM property~\cite{son2019}. This is done by learning an additional centralized value function that corrects the discrepancy between the centralized and decomposed action value functions caused due to the partial observability of agents. Several other variants of the value decomposition methods such as FACMAC~\cite{peng2021facmac}, DOP~\cite{wang2020dop}, VDA2C~\cite{su2021}, VDPPO~\cite{ma2022} have been proposed in the literature. Although the value factorization techniques perform well in many scenarios nonetheless they require a centralized scheme such as linear or nonlinear combination of the individual action value functions. In contrast, in the proposed P-DTDE scheme, each critic requires only the states and actions of agents in its value dependency set during training. In addition, we pursue the exact decomposition in the P-DTDE scheme given the coupling information while the value factorization techniques focus on learning approximated decompositions.

\textit{Coordination graphs} (CG)~\cite{guestrin2001, guestrin2002, kok2006} are another advanced value factorization technique for managing the complexity of joint action spaces. These graphs represent agents as nodes and specify coordination dependencies between them as edges, facilitating the decomposition of the centralized action value function as the sum of individual value functions that depend only on interacting subsets of agents.
CGs are an example of the coupled reward setting. Similarly, the coupled dynamics and partial observability are two other types of inter-agent couplings studied in the MARL literature. However, in the most general setting the optimal action of an agent should take into account the combined effect of the aforementioned couplings. Hence, \cite{jing2024distributed} proposed a distributed RL algorithm based on zeroth order optimization that incorporates all the three types of inter-agent couplings in the cooperative MARL setting. In particular, \cite{jing2024distributed} deduce a new graph called the learning graph that represents the information flow during learning. This work has a strong relevance to the approach present in this paper. Therefore, we highlight the key distinctions and contributions of the proposed approach over those presented in \cite{jing2024distributed}. First, \cite{jing2024distributed} use a set-theoretic approach and treat the couplings as distinct graphs whereas in this work all the inter-agent coupling information is aggregated into a single Bayesian network. Second, the individual value functions require the information of the global state whereas in this work we characterize a subset of agents that affect the individual action value function resulting in the P-DTDE architecture. Third, \cite{jing2024distributed} uses zero-order optimization and a REINFORCE algorithm which have higher sample complexity and variance in parameter updates compared to the actor-critic methods proposed in this work~\cite{lei2022}. 

\textbf{Graph neural networks (GNNs)} have gained attention recently in the field of MAS due to their ability to effectively tackle  graph-structured applications. For example, DGN~\cite{jiang2018graph} uses the graph attention network (GAT)~\cite{velivckovic2017graph} to aggregate the information of neighboring agents for each agent and passes the aggregated information as the input to the action value function. GraphComm~\cite{shen2021graphcomm} categorizes relationships among agents into explicit and implicit categories. The exchange of information about both static and dynamic relationships among agents is facilitated through GAT.   InforMARL~\cite{nayak2023scalable} treats all entities in the environment, including target points, and obstacles as nodes in the graph, and uses a GNN to aggregate information of the local neighborhoods of the agents and input it into the actor and critic. GNNs have also been used to learn effective communication models for cooperative agents (e.g.,~\cite{seraj2022learning}).  These GNN-based methods require additional communications for decentralized execution and may not satisfy observation constraints in the MAS. HetGPPO~\cite{bettini2023heterogeneous} employs a GNN to aggregate information from neighbors to empirically learn local policies and critics.
Recent work~\cite{naderializadeh2020graph,ding2023multiagent,bouton2023multi,hu2023graph} employs GNNs (with a given graph) to learn state-action value factorization via QMIX or VDN, which are approximations.

\textbf{Approximate information states} is another parallel line of research in MARL that has emerged to address partial observability and information asymmetry in multi-agent learning.~\cite{kao2022common} develops a general compression framework with approximate private state representations to construct decentralized policies.~\cite{khan2023cooperative} develops a primal-dual MARL framework with approximate information states for cooperative multi-agent constrained POMDPs. The approximate information states are characterized independently of Lagrange multipliers, enabling adaptation during learning without requiring new representations.~\cite{liu2023partially} demonstrates that information sharing among agents can achieve quasi-efficient statistical and computational guarantees in partially observable stochastic games. These information-theoretic frameworks address a complementary challenge to the structural value decomposition discussed in this paper. We leverage the known inter-agent couplings to identify \textit{which} agents should share information of Q-function evaluation (exact decomposition) whereas the approximate information state-based approaches focus on \textit{what} information should be shared among agents to enable efficient coordination.

In \textbf{distributed RL}~\cite{kar2013QD,zhang2018fully,macua2018diff,zhang2020cooperative,zhang2021finite,li2023f2a2,qu2019value}, the most common method is to use a consensus algorithm for each agent to estimate the global reward function with only local information from neighbors. The performance of such distributed RL algorithms is  worse than centralized RL algorithms, as it takes time to reach consensus and the learning is conducted with the estimated global reward. However, our approach pursues exact decomposition of the policy gradient to facilitate efficient learning in a partially decentralized fashion. When such decomposition involves only a subset of the agents or appropriate approximation methods are designed, our approach is expected to be more efficient than the centralized RL.

There is a recent line of work that derives convergence guarantees for MARL under specific assumptions such as separability~\cite{jin2024approximate}, approximate factorization~\cite{lu2024overcoming}, network sparsity/connectivity~\cite{wang2023multi, hussain2025multi}, IGM property~\cite{wang2021towards, rashid2020weighted}, \textit{and finite/discrete state-,action-spaces}~\cite{jin2024approximate,wang2021towards}. These approaches rely on approximating the underlying MDP or imposing network constraints, which can result in loss of guarantee in densely coupled MDPs or arbitrary network interaction systems.  In contrast, our proposed MABN framework is general: it yields the exact value function decomposition applicable for arbitrarily dense inter-agent couplings and explicitly handles partial observability, remaining effective even in settings where the assumptions for convergence results do not hold. In such a general framework,  obtaining provable convergence guarantees is challenging and a problem of independent interest. As we show (in Section~\ref{sec:vardiff}) that the total variance of policy gradient estimation is lower with the decomposition, we expect that the gradient descent type algorithms will show faster convergence and better sample complexity.
 \section{Illustration of MABN}\label{sec:MABN_exp}

\begin{center}

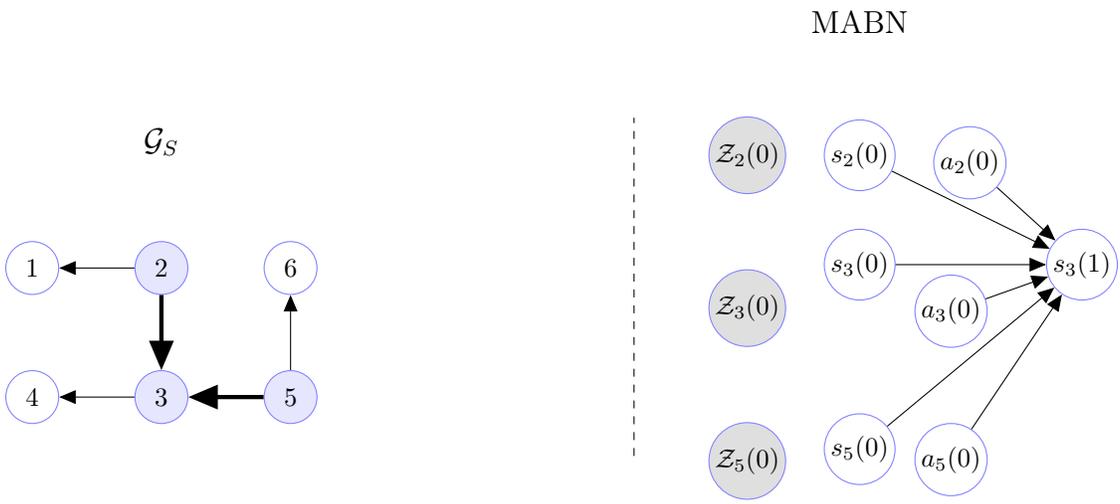
\begin{figure*}[htpb]

\begin{tikzpicture}
\begin{scope}
    \node[latent,draw=blue!50] (A) {$1$};
    \node[latent,draw=blue!50, below=of A] (B) {$4$};
    \node[latent,draw=blue!50 ,fill=blue!10,right=of A] (C) {$2$};
    \node[latent,draw=blue!50,fill=blue!10, below=of C] (D) {$3$};
    \node[latent,draw=blue!50, right=of C] (E) {$6$};
    \node[latent,draw=blue!50 ,fill=blue!10, below=of E] (G) {$5$};
    \node[above=of C] {$\mathcal{G}_S$};
    \edge {C} {A};
    \edge [ultra thick]{C} {D};
    \edge {D} {B};
    \edge [ultra thick]{G} {D};
    \edge {G} {E};
\end{scope}
\begin{scope}[yshift=-1cm]
\draw[dashed] (8,-1.5) -- (8,3);
\end{scope}
\begin{scope}[xshift=10cm]
    \node[latent,draw=blue!50, yshift= 1.5cm,xshift = 1cm] (A') {$s_2(0)$};
    \node[latent,draw=blue!50, below=of A', yshift=0.5cm] (B') {$s_3(0)$};
    \node[latent,draw=blue!50 ,below=of B', yshift=-0.5cm] (C') {$s_5(0)$};
    \node[latent,draw=blue!50, right=of A',xshift= -0.5cm, yshift=-0.1cm] (D') {$a_2(0)$};
    \node[latent,draw=blue!50, below=of D',xshift= -0.25cm] (E') {$a_3(0)$};
    \node[latent,draw=blue!50, below=of E'] (F') {$a_5(0)$};
    \node[latent,draw=blue!50 , right=of B', xshift= 1cm] (G') {$s_3(1)$};
    \node[above=of A'] {MABN};
    \node[obs,draw=blue!50, left=of A', xshift= 0.5cm] (H') {$\mZ_2(0)$};
    \node[obs,draw=blue!50, below=of H'] (I') {$\mZ_3(0)$};
    \node[obs,draw=blue!50, below=of I'] (I') {$\mZ_5(0)$};
    \edge {A'} {G'};
    \edge {B'} {G'};
    \edge {C'} {G'};
    \edge {D'} {G'};
    \edge {E'} {G'};
    \edge {F'} {G'};
\end{scope}

\end{tikzpicture}
\caption{The state graph $\mathcal{G}_S$ of a MAS (left) and the edges corresponding to agents $2,3,5$ in the MABN for $t=0,1$ (right).}
\label{fig:BNs}
\end{figure*}
\end{center}
\newpage
\begin{center}
    \begin{figure*}[htpb]
    
\begin{tikzpicture}
\begin{scope}[yshift=-2cm]
    \node[latent,draw=blue!50, fill=blue!10] (A) {$1$};
    \node[latent,draw=blue!50,fill=blue!10, below=of A] (B) {$4$};
    \node[latent,draw=blue!50 ,fill=blue!10,right=of A] (C) {$2$};
    \node[latent,draw=blue!50,fill=blue!10, below=of C] (D) {$3$};
    \node[latent,draw=blue!50, fill=blue!10, right=of C] (E) {$6$};
    \node[latent,draw=blue!50 ,fill=blue!10, below=of E] (G) {$5$};
    \node[above=of C] {$\mathcal{G}_O$};
    \edge[draw= red, ultra thick] {A} {C};
    \edge [draw= red, ultra thick] {B} {D};
    \edge [draw= red, ultra thick]{E} {G};
\end{scope}
\begin{scope}[yshift=-7cm]
\draw[dashed] (8,-1.5) -- (8,10);
\end{scope}
\begin{scope}[xshift=10cm]
    \node[latent,draw=blue!50, yshift= 1.5cm,xshift = 1cm] (A') {$s_1(0)$};
    \node[latent,draw=blue!50, below=of A', yshift=0.5cm] (B') {$s_2(0)$};
    \node[latent,draw=blue!50 ,below=of B', yshift=-0.5cm] (C') {$s_3(0)$};
    \node[latent,draw=blue!50 ,below=of C', yshift=-0.5cm] (D') {$s_4(0)$};
    \node[latent,draw=blue!50 ,below=of D', yshift=-0.5cm] (E') {$s_5(0)$};
    \node[latent,draw=blue!50 ,below=of E', yshift=-0.5cm] (F') {$s_6(0)$};
    \node[latent,draw=blue!50, right=of A',xshift= -0.5cm, yshift=-0.1cm] (G') {$a_1(0)$};
    \node[latent,draw=blue!50, below=of G',yshift=1cm] (H') {$a_2(0)$};
    \node[latent,draw=blue!50, below=of H'] (I') {$a_3(0)$};
    \node[latent,draw=blue!50, below=of I'] (J') {$a_4(0)$};
    \node[latent,draw=blue!50, below=of J'] (K') {$a_5(0)$};
    \node[latent,draw=blue!50, below=of K'] (L') {$a_6(0)$};
    \node[latent,draw=blue!50 , right=of C', xshift= 1.5cm] (M') {$s_3(1)$};
    \node[above=of A'] {MABN (contd.)};
    \node[obs,draw=blue!50, left=of A', xshift= 0.5cm] (N') {$\mZ_1(0)$};
    \node[obs,draw=blue!50, below=of N'] (O') {$\mZ_2(0)$};
    \node[obs,draw=blue!50, below=of O'] (P') {$\mZ_3(0)$};
    \node[obs,draw=blue!50, below=of P'] (Q') {$\mZ_4(0)$};
    \node[obs,draw=blue!50, below=of Q'] (R') {$\mZ_5(0)$};
    \node[obs,draw=blue!50, below=of R'] (S') {$\mZ_6(0)$};
    \edge {H'} {M'};
    \edge {C'} {M'};
    \edge {I'} {M'};
    \edge {K'} {M'};
    \edge [draw=red]{A'} {G'};\edge[draw=red] {B'} {H'};\edge [draw=red]{C'} {I'};\edge[draw=red] {D'} {J'};\edge[draw=red] {E'} {K'};\edge[draw=red] {F'} {L'};\edge[draw=red] {A'} {H'};\edge[draw=red] {D'} {I'};\edge[draw=red] {F'} {K'};
    \path [draw,->] (B') edge [bend left] node [right] {} (M');
    \path [draw,->] (E') edge [bend right] node [right] {} (M');
\end{scope}

\end{tikzpicture}
\caption{The observation graph $\mathcal{G}_O$ of a MAS (left) and the edges corresponding to agents $1,2,3,4,5,6$ in the MABN (right).}
 \label{fig:BNo}
\end{figure*}
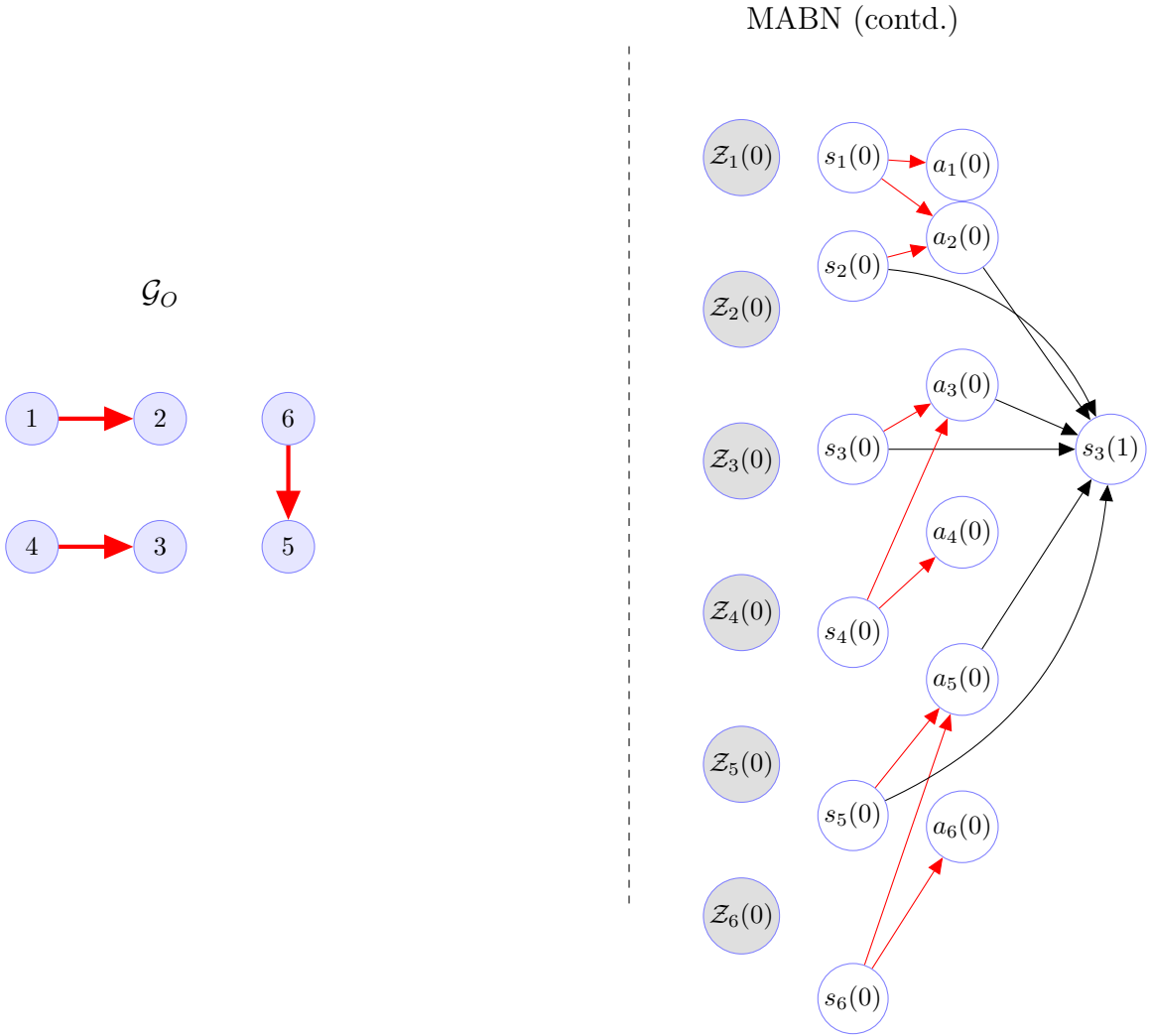
\end{center}

\begin{center}
    \begin{figure*}[htpb]
\begin{tikzpicture}
\begin{scope}[yshift=-2cm]
    \node[latent,draw=blue!50, fill=blue!10] (A) {$1$};
    \node[latent,draw=blue!50,fill=blue!10, below=of A] (B) {$4$};
    \node[latent,draw=blue!50 ,fill=blue!10,right=of A] (C) {$2$};
    \node[latent,draw=blue!50,fill=blue!10, below=of C] (D) {$3$};
    \node[latent,draw=blue!50, fill=blue!10, right=of C] (E) {$6$};
    \node[latent,draw=blue!50 ,fill=blue!10, below=of E] (G) {$5$};
    \node[above=of C] {$\mathcal{G}_R$};
    \edge[draw= violet, ultra thick] {A} {C};
    \edge [draw= violet, ultra thick] {B} {D};
    \edge [draw= violet, ultra thick]{E} {G};
\end{scope}
\begin{scope}[yshift=-7cm]
\draw[dashed] (8,-1.5) -- (8,10);
\end{scope}
\begin{scope}[xshift=10cm]
    \node[latent,draw=blue!50, yshift= 1.5cm,xshift = 1cm] (A') {$s_1(0)$};
    \node[latent,draw=blue!50, below=of A', yshift=0.6cm] (B') {$s_2(0)$};
    \node[latent,draw=blue!50 ,below=of B', yshift=-0.5cm] (C') {$s_3(0)$};
    \node[latent,draw=blue!50 ,below=of C', yshift=-0.5cm] (D') {$s_4(0)$};
    \node[latent,draw=blue!50 ,below=of D', yshift=-0.5cm] (E') {$s_5(0)$};
    \node[latent,draw=blue!50 ,below=of E', yshift=-0.5cm] (F') {$s_6(0)$};
    \node[latent,draw=blue!50, right=of A'] (G') {$a_1(0)$};
    \node[latent,draw=blue!50, below=of G',yshift=0.3cm] (H') {$a_2(0)$};
    \node[latent,draw=blue!50, below=of H',yshift=0.3cm,xshift=-0.3cm] (I') {$a_3(0)$};
    \node[latent,draw=blue!50, below=of I'] (J') {$a_4(0)$};
    \node[latent,draw=blue!50, below=of J',yshift=0.5cm] (K') {$a_5(0)$};
    \node[latent,draw=blue!50, below=of K'] (L') {$a_6(0)$};
    \node[latent,draw=blue!50 , right=of C', xshift= 2cm] (M') {$s_3(1)$};
    \node[above=of A'] {MABN (contd.)};
    \node[obs,draw=blue!50, left=of A', xshift= 0.5cm] (N') {$\mZ_1(0)$};
    \node[obs,draw=blue!50, below=of N'] (O') {$\mZ_2(0)$};
    \node[obs,draw=blue!50, below=of O'] (P') {$\mZ_3(0)$};
    \node[obs,draw=blue!50, below=of P'] (Q') {$\mZ_4(0)$};
    \node[obs,draw=blue!50, below=of Q'] (R') {$\mZ_5(0)$};
    \node[obs,draw=blue!50, below=of R'] (S') {$\mZ_6(0)$};
    \edge {H'} {M'};
    \edge {B'} {M'};
    \edge {C'} {M'};
    \edge {I'} {M'};
    \edge {K'} {M'};
    \edge [draw=red]{A'} {G'};\edge[draw=red] {B'} {H'};\edge [draw=red]{C'} {I'};\edge[draw=red] {D'} {J'};\edge[draw=red] {E'} {K'};\edge[draw=red] {F'} {L'};\edge[draw=red] {A'} {H'};\edge[draw=red] {D'} {I'};\edge[draw=red] {F'} {K'};
    \path [draw,->] (E') edge [bend right] node [right] {} (M');

    \edge [draw=violet]{A'} {N'};\edge[draw=violet] {B'} {O'};\edge [draw=violet]{A'} {O'};\edge[draw=violet] {C'} {P'};\edge[draw=violet] {D'} {Q'};\edge[draw=violet] {J'} {Q'};\edge[draw=violet] {E'} {R'};\edge[draw=violet] {K'} {R'};\edge[draw=violet] {F'} {S'};\edge[draw=violet] {L'} {S'};
    \edge[draw=violet] {D'} {P'};\edge[draw=violet] {J'} {P'};\edge[draw=violet] {F'} {R'};

    \path [draw,->] (L') edge [bend right, draw=violet] node [right] {} (R');
    \path [draw,->] (I') edge [bend right, draw=violet] node [right] {} (P');
    \path [draw,->] (H') edge [bend left, draw=violet] node [right] {} (O');
    \path [draw,->] (G') edge [bend right, draw=violet] node [right] {} (N');
    \path [draw,->] (G') edge [bend left, draw=violet] node [right] {} (O');
\end{scope}

\end{tikzpicture}
\caption{The reward graph $\mathcal{G}_R$ of a MAS (left) and the edges corresponding to agents $1,2,3,4,5,6$ in the MABN (right).}
\label{fig:BNr}
\end{figure*}
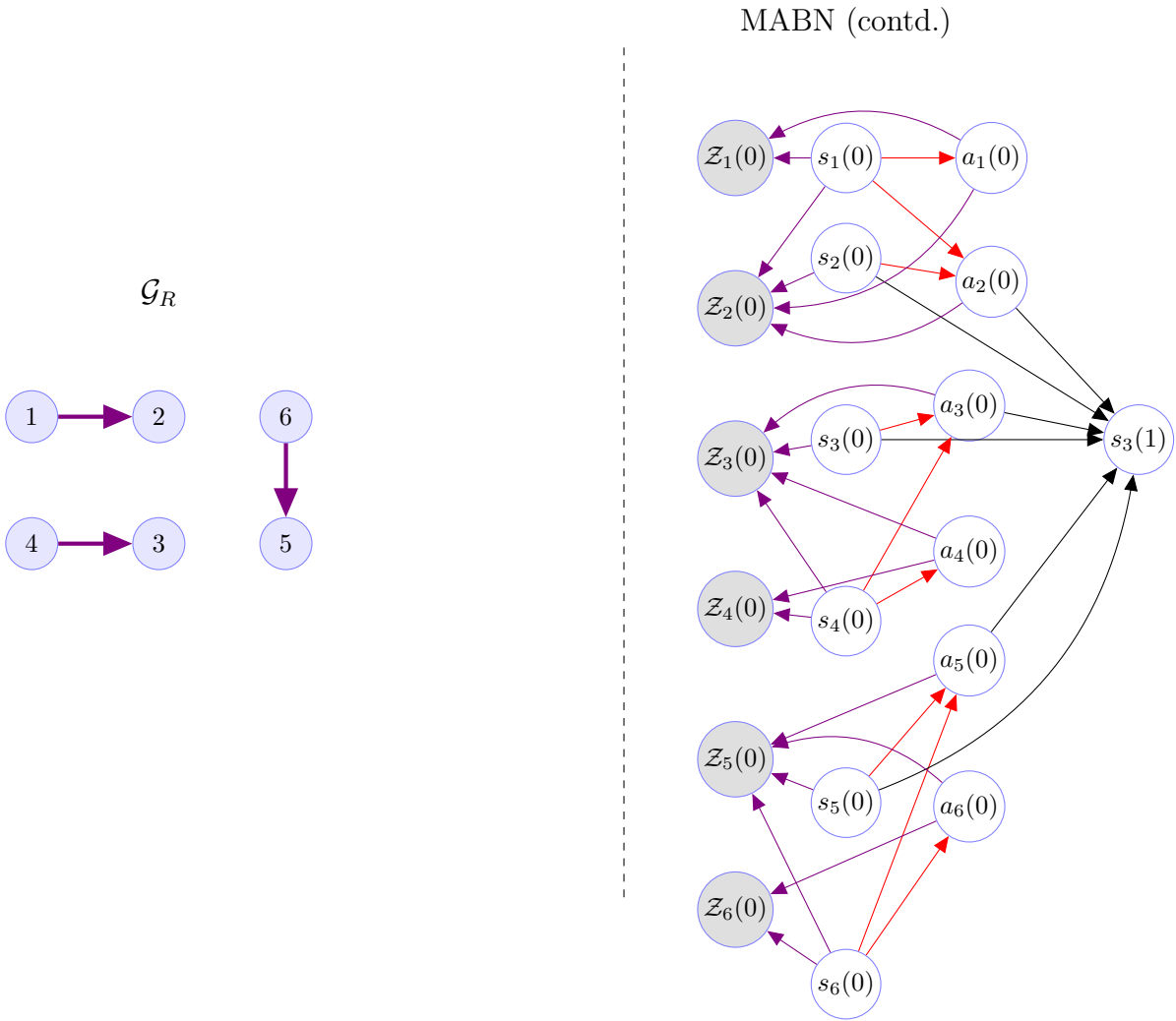
\end{center}

\begin{center}

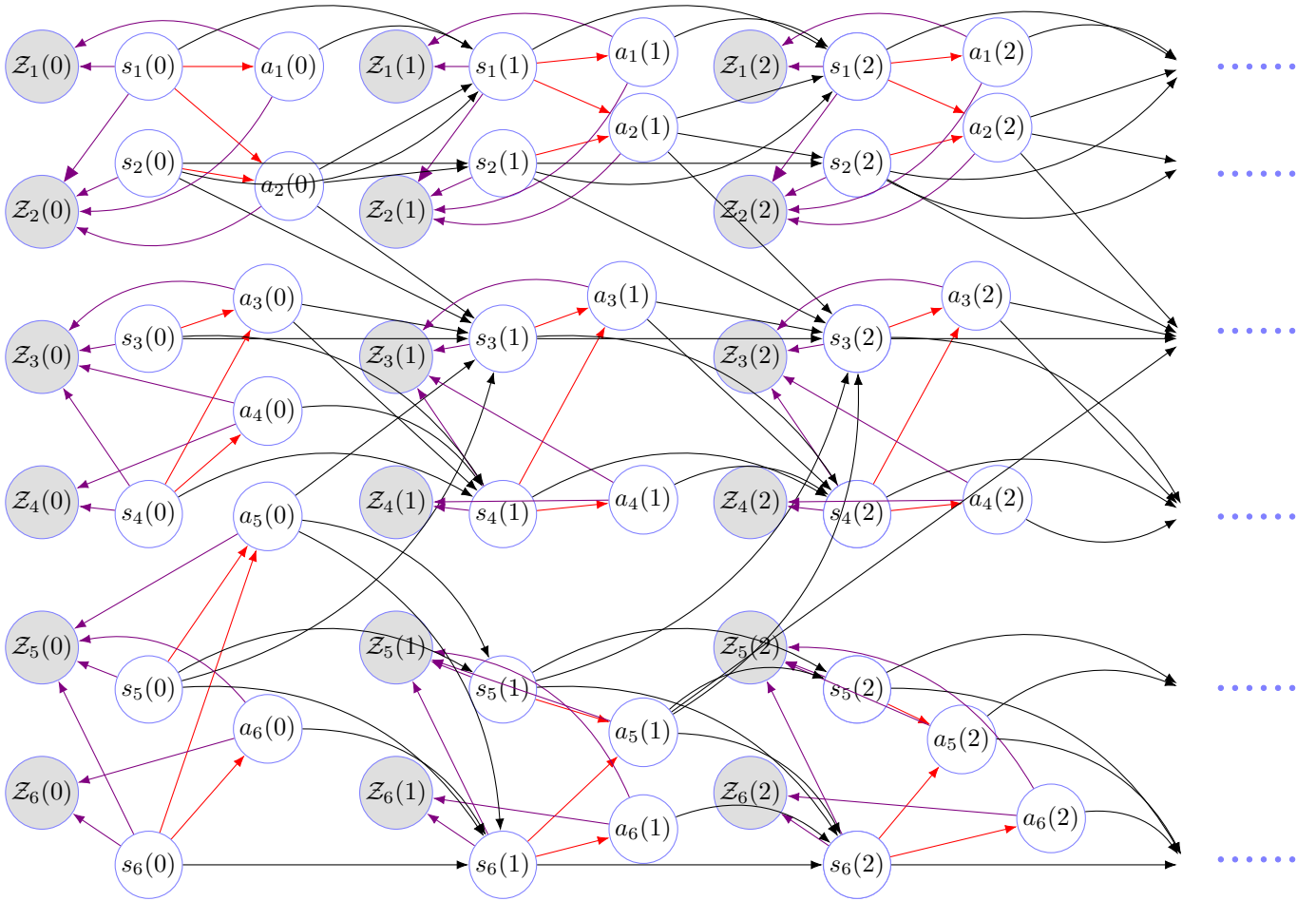
\begin{figure}[htpb]

\begin{tikzpicture}

\begin{scope}[xshift=10cm]
    \node[latent,draw=blue!50, yshift= 1.5cm,xshift = 1cm] (A') {$s_1(0)$};
    \node[latent,draw=blue!50, below=of A', yshift=0.6cm] (B') {$s_2(0)$};
    \node[latent,draw=blue!50 ,below=of B', yshift=-0.5cm] (C') {$s_3(0)$};
    \node[latent,draw=blue!50 ,below=of C', yshift=-0.5cm] (D') {$s_4(0)$};
    \node[latent,draw=blue!50 ,below=of D', yshift=-0.5cm] (E') {$s_5(0)$};
    \node[latent,draw=blue!50 ,below=of E', yshift=-0.5cm] (F') {$s_6(0)$};
    
    \node[latent,draw=blue!50, right=of A'] (G') {$a_1(0)$};
    \node[latent,draw=blue!50, below=of G',yshift=0.3cm] (H') {$a_2(0)$};
    \node[latent,draw=blue!50, below=of H',yshift=0.4cm,xshift=-0.3cm] (I') {$a_3(0)$};
    \node[latent,draw=blue!50, below=of I', yshift= 0.4cm] (J') {$a_4(0)$};
    \node[latent,draw=blue!50, below=of J',yshift=0.5cm] (K') {$a_5(0)$};
    \node[latent,draw=blue!50, below=of K', yshift= -1cm] (L') {$a_6(0)$};
    
     \node[latent,draw=blue!50, right=of A', xshift= 3cm] (A'') {$s_1(1)$};
    \node[latent,draw=blue!50, right=of B', xshift= 3cm] (B'') {$s_2(1)$};
   \node[latent,draw=blue!50 , right=of C', xshift= 3cm] (C'') {$s_3(1)$};
    \node[latent,draw=blue!50 , right=of D', xshift= 3cm] (D'') {$s_4(1)$};
    \node[latent,draw=blue!50, right=of E', xshift= 3cm] (E'') {$s_5(1)$};
    \node[latent,draw=blue!50, right=of F', xshift= 3cm] (F'') {$s_6(1)$};
    \node[latent,draw=blue!50, right=of A'',yshift=0.2cm] (G'') {$a_1(1)$};
    \node[latent,draw=blue!50, right=of B'',yshift=0.5cm] (H'') {$a_2(1)$};
    \node[latent,draw=blue!50, right=of C'',yshift=0.6cm,xshift=-0.3cm] (I'') {$a_3(1)$};
    \node[latent,draw=blue!50, right=of D'', yshift= 0.2cm] (J'') {$a_4(1)$};
    \node[latent,draw=blue!50, right=of E'',yshift=-0.6cm] (K'') {$a_5(1)$};
    \node[latent,draw=blue!50, right=of F'', yshift= 0.5cm] (L'') {$a_6(1)$};

    \node[latent,draw=blue!50, right=of A'', xshift= 3cm] (A''') {$s_1(2)$};
    \node[latent,draw=blue!50, right=of B'', xshift= 3cm] (B''') {$s_2(2)$};
   \node[latent,draw=blue!50 , right=of C'', xshift= 3cm] (C''') {$s_3(2)$};
    \node[latent,draw=blue!50 , right=of D'', xshift= 3cm] (D''') {$s_4(2)$};
    \node[latent,draw=blue!50, right=of E'', xshift= 3cm] (E''') {$s_5(2)$};
    \node[latent,draw=blue!50, right=of F'', xshift= 3cm] (F''') {$s_6(2)$};
    \node[latent,draw=blue!50,right=of A''',yshift=0.2cm] (G''') {$a_1(2)$};
    \node[latent,draw=blue!50, right=of B''',yshift=0.5cm] (H''') {$a_2(2)$};
    \node[latent,draw=blue!50, right=of C''',yshift=0.6cm,xshift=-0.3cm] (I''') {$a_3(2)$};
    \node[latent,draw=blue!50, right=of D''', yshift= 0.2cm] (J''') {$a_4(2)$};
    \node[latent,draw=blue!50, right=of E''',yshift=-0.7cm, xshift= -0.5cm] (K''') {$a_5(2)$};
    \node[latent,draw=blue!50, right=of F''', yshift= 0.65cm, xshift= 0.75cm] (L''') {$a_6(2)$};

    \node[latent,draw=blue!50, right=of A'', xshift= 3cm] (A''') {$s_1(2)$};
    \node[latent,draw=blue!50, right=of B'', xshift= 3cm] (B''') {$s_2(2)$};
   \node[latent,draw=blue!50 , right=of C'', xshift= 3cm] (C''') {$s_3(2)$};
    \node[latent,draw=blue!50 , right=of D'', xshift= 3cm] (D''') {$s_4(2)$};
    \node[latent,draw=blue!50, right=of E'', xshift= 3cm] (E''') {$s_5(2)$};
    \node[latent,draw=blue!50, right=of F'', xshift= 3cm] (F''') {$s_6(2)$};

    \node[right=of A''', xshift= 3cm] (empty1) {$ $};
    \node[right=of B''', xshift= 3cm] (empty2) {$ $};
   \node[right=of C''', xshift= 3cm] (empty3) {$ $};
    \node[right=of D''', xshift= 3cm] (empty4) {$ $};
    \node[right=of E''', xshift= 3cm] (empty5) {$ $};
    \node[right=of F''', xshift= 3cm] (empty6) {$ $};
    
    \node[obs,draw=blue!50, left=of A', xshift= 0.5cm] (N') {$\mZ_1(0)$};
    \node[obs,draw=blue!50, below=of N'] (O') {$\mZ_2(0)$};
    \node[obs,draw=blue!50, below=of O'] (P') {$\mZ_3(0)$};
    \node[obs,draw=blue!50, below=of P'] (Q') {$\mZ_4(0)$};
    \node[obs,draw=blue!50, below=of Q'] (R') {$\mZ_5(0)$};
    \node[obs,draw=blue!50, below=of R'] (S') {$\mZ_6(0)$};
    \node[obs,draw=blue!50, left=of A'', xshift= 0.5cm] (N'') {$\mZ_1(1)$};
    \node[obs,draw=blue!50, below=of N''] (O'') {$\mZ_2(1)$};
    \node[obs,draw=blue!50, below=of O''] (P'') {$\mZ_3(1)$};
    \node[obs,draw=blue!50, below=of P''] (Q'') {$\mZ_4(1)$};
    \node[obs,draw=blue!50, below=of Q''] (R'') {$\mZ_5(1)$};
    \node[obs,draw=blue!50, below=of R''] (S'') {$\mZ_6(1)$};
    \node[obs,draw=blue!50, left=of A''', xshift= 0.5cm] (N''') {$\mZ_1(2)$};
    \node[obs,draw=blue!50, below=of N'''] (O''') {$\mZ_2(2)$};
    \node[obs,draw=blue!50, below=of O'''] (P''') {$\mZ_3(2)$};
    \node[obs,draw=blue!50, below=of P'''] (Q''') {$\mZ_4(2)$};
    \node[obs,draw=blue!50, below=of Q'''] (R''') {$\mZ_5(2)$};
    \node[obs,draw=blue!50, below=of R'''] (S''') {$\mZ_6(2)$};
    \edge [>={Latex[black]}]{H'} {C''};
    \edge [>={Latex[black]}]{B'} {C''};
    \edge [>={Latex[black]}]{C'} {C''};
    \edge [>={Latex[black]}]{I'} {C''};
    \edge [>={Latex[black]}]{K'} {C''};
    \edge [draw=red,>={Latex[red]}]{A'} {G'};
    \edge[draw=red,>={Latex[red]}] {B'} {H'};
    \edge [draw=red,>={Latex[red]}]{C'} {I'};
    \edge[draw=red,>={Latex[red]}] {D'} {J'};
    \edge[draw=red,>={Latex[red]}] {E'} {K'};
    \edge[draw=red,>={Latex[red]}] {F'} {L'};
    \edge[draw=red,>={Latex[red]}] {A'} {H'};
    \edge[draw=red,>={Latex[red]}] {D'} {I'};
    \edge[draw=red,>={Latex[red]}] {F'} {K'};
    \path [draw,->,>={Latex[black]}] (E') edge [bend right] node [right] {} (C'');

    \edge [draw=violet,>={Latex[violet]}]{A'} {N'};\edge[draw=violet,>={Latex[violet]}] {B'} {O'};\edge [draw=violet]{A'} {O'};\edge[draw=violet,>={Latex[violet]}] {C'} {P'};\edge[draw=violet,>={Latex[violet]}] {D'} {Q'};\edge[draw=violet,>={Latex[violet]}] {J'} {Q'};\edge[draw=violet,>={Latex[violet]}] {E'} {R'};\edge[draw=violet,>={Latex[violet]}] {K'} {R'};\edge[draw=violet,>={Latex[violet]}] {F'} {S'};\edge[draw=violet,>={Latex[violet]}] {L'} {S'};
    \edge[draw=violet,>={Latex[violet]}] {D'} {P'};\edge[draw=violet,>={Latex[violet]}] {J'} {P'};\edge[draw=violet,>={Latex[violet]}] {F'} {R'};

    \path [draw,->,>={Latex[violet]}] (L') edge [bend right, draw=violet] node [right] {} (R');
    \path [draw,->,>={Latex[violet]}] (I') edge [bend right, draw=violet] node [right] {} (P');
    \path [draw,->,>={Latex[violet]}] (H') edge [bend left, draw=violet] node [right] {} (O');
    \path [draw,->,>={Latex[violet]}] (G') edge [bend right, draw=violet] node [right] {} (N');
    \path [draw,->,>={Latex[violet]}] (G') edge [bend left, draw=violet] node [right] {} (O');

        \edge [draw=red,>={Latex[red]}]{A''} {G''};
    \edge[draw=red,>={Latex[red]}] {B''} {H''};
    \edge [draw=red,>={Latex[red]}]{C''} {I''};
    \edge[draw=red,>={Latex[red]}] {D''} {J''};
    \edge[draw=red,>={Latex[red]}] {E''} {K''};
    \edge[draw=red,>={Latex[red]}] {F''} {L''};
    \edge[draw=red,>={Latex[red]}] {A''} {H''};
    \edge[draw=red,>={Latex[red]}] {D''} {I''};
    \edge[draw=red,>={Latex[red]}] {F''} {K''};
 \path [draw,->,>={Latex[black]}] (K'') edge [bend right] node [right] {} (C''');
    \edge [draw=violet,>={Latex[violet]}]{A''} {N''};
    \edge[draw=violet,>={Latex[violet]}] {B''} {O''};
\edge [draw=violet]{A''} {O''};
 \edge[draw=violet,>={Latex[violet]}] {C''} {P''};
 \edge[draw=violet,>={Latex[violet]}] {D''} {Q''};
 \edge[draw=violet,>={Latex[violet]}] {J''} {Q''};
 \edge[draw=violet,>={Latex[violet]}] {E''} {R''};
 \edge[draw=violet,>={Latex[violet]}] {K''} {R''};
 \edge[draw=violet,>={Latex[violet]}] {F''} {S''};
 \edge[draw=violet,>={Latex[violet]}] {L''} {S''};
    \edge[draw=violet,>={Latex[violet]}] {D''} {P''};\edge[draw=violet,>={Latex[violet]}] {J''} {P''};\edge[draw=violet,>={Latex[violet]}] {F''} {R''};

    \path [draw,->,>={Latex[violet]}] (L'') edge [bend right, draw=violet] node [right] {} (R'');
    \path [draw,->,>={Latex[violet]}] (I'') edge [bend right, draw=violet] node [right] {} (P'');
    \path [draw,->,>={Latex[violet]}] (H'') edge [bend left, draw=violet] node [right] {} (O'');
    \path [draw,->,>={Latex[violet]}] (G'') edge [bend right, draw=violet] node [right] {} (N'');
    \path [draw,->,>={Latex[violet]}] (G'') edge [bend left, draw=violet] node [right] {} (O'');

     \path [draw,->,>={Latex[black]}] (G') edge [bend left, draw] node [right] {} (A'');
     \path [draw,->,>={Latex[black]}] (A') edge [bend left, draw] node [right] {} (A'');
     \path [draw,->,>={Latex[black]}] (H') edge  node [right] {} (A'');
     \path [draw,->,>={Latex[black]}] (B') edge [bend right] node [right] {} (A'');

\path [draw,->,>={Latex[black]}] (H') edge  node[right] {} (B'');
     \path [draw,->,>={Latex[black]}] (B') edge [] node [right] {} (B'');
     
     \path [draw,->,>={Latex[black]}] (C') edge [bend left, draw] node [right] {} (D'');
     \path [draw,->,>={Latex[black]}] (D') edge [bend left, draw] node [right] {} (D'');
     \path [draw,->,>={Latex[black]}] (I') edge  node [right] {} (D'');
     \path [draw,->,>={Latex[black]}] (J') edge [bend left] node [right] {} (D'');

     \path [draw,->,>={Latex[black]}] (K') edge [bend left, draw] node [right] {} (E'');
     \path [draw,->,>={Latex[black]}] (E') edge [bend left, draw] node [right] {} (E'');

     \path [draw,->,>={Latex[black]}] (K') edge [bend left, draw] node [right] {} (F'');
     \path [draw,->,>={Latex[black]}] (E') edge [bend left, draw] node [right] {} (F'');
     \path [draw,->,>={Latex[black]}] (L') edge [bend left, draw] node [right] {} (F'');
     \path [draw,->,>={Latex[black]}] (F') edge node [right] {} (F'');

\edge [>={Latex[black]}]{H''} {C'''};
    \edge [>={Latex[black]}]{B''} {C'''};
    \edge [>={Latex[black]}]{C''} {C'''};
    \edge [>={Latex[black]}]{I''} {C'''};
       \edge [draw=red,>={Latex[red]}]{A'''} {G'''};
    \edge[draw=red,>={Latex[red]}] {B'''} {H'''};
    \edge [draw=red,>={Latex[red]}]{C'''} {I'''};
    \edge[draw=red,>={Latex[red]}] {D'''} {J'''};
    \edge[draw=red,>={Latex[red]}] {E'''} {K'''};
    \edge[draw=red,>={Latex[red]}] {F'''} {L'''};
    \edge[draw=red,>={Latex[red]}] {A'''} {H'''};
    \edge[draw=red,>={Latex[red]}] {D'''} {I'''};
    \edge[draw=red,>={Latex[red]}] {F'''} {K'''};
    \path [draw,->,>={Latex[black]}] (E'') edge [bend right] node [right] {} (C''');

    \edge [draw=violet,>={Latex[violet]}]{A'''} {N'''};
    \edge[draw=violet,>={Latex[violet]}] {B'''} {O'''};
    \edge [draw=violet]{A'''} {O'''};\edge[draw=violet,>={Latex[violet]}] {C'''} {P'''};\edge[draw=violet,>={Latex[violet]}] {D'''} {Q'''};\edge[draw=violet,>={Latex[violet]}] {J'''} {Q'''};\edge[draw=violet,>={Latex[violet]}] {E'''} {R'''};\edge[draw=violet,>={Latex[violet]}] {K'''} {R'''};\edge[draw=violet,>={Latex[violet]}] {F'''} {S'''};\edge[draw=violet,>={Latex[violet]}] {L'''} {S'''};
    \edge[draw=violet,>={Latex[violet]}] {D'''} {P'''};\edge[draw=violet,>={Latex[violet]}] {J'''} {P'''};\edge[draw=violet,>={Latex[violet]}] {F'''} {R'''};

    \path [draw,->,>={Latex[violet]}] (L''') edge [bend right, draw=violet] node [right] {} (R''');
    \path [draw,->,>={Latex[violet]}] (I''') edge [bend right, draw=violet] node [right] {} (P''');
    \path [draw,->,>={Latex[violet]}] (H''') edge [bend left, draw=violet] node [right] {} (O''');
    \path [draw,->,>={Latex[violet]}] (G''') edge [bend right, draw=violet] node [right] {} (N''');
    \path [draw,->,>={Latex[violet]}] (G''') edge [bend left, draw=violet] node [right] {} (O''');

     \path [draw,->,>={Latex[black]}] (G'') edge [bend left, draw] node [right] {} (A''');
     \path [draw,->,>={Latex[black]}] (A'') edge [bend left, draw] node [right] {} (A''');
     \path [draw,->,>={Latex[black]}] (H'') edge  node [right] {} (A''');
     \path [draw,->,>={Latex[black]}] (B'') edge [bend right] node [right] {} (A''');
     \path [draw,->,>={Latex[black]}] (H'') edge  node [right] {} (B''');
     \path [draw,->,>={Latex[black]}] (B'') edge [] node [right] {} (B''');

     \path [draw,->,>={Latex[black]}] (C'') edge [bend left, draw] node [right] {} (D''');
     \path [draw,->,>={Latex[black]}] (D'') edge [bend left, draw] node [right] {} (D''');
     \path [draw,->,>={Latex[black]}] (I'') edge  node [right] {} (D''');
     \path [draw,->,>={Latex[black]}] (J'') edge [bend left] node [right] {} (D''');

     \path [draw,->,>={Latex[black]}] (K'') edge [bend left, draw] node [right] {} (E''');
     \path [draw,->,>={Latex[black]}] (E'') edge [bend left, draw] node [right] {} (E''');

     \path [draw,->,>={Latex[black]}] (K'') edge [bend left, draw] node [right] {} (F''');
     \path [draw,->,>={Latex[black]}] (E'') edge [bend left, draw] node [right] {} (F''');
     \path [draw,->,>={Latex[black]}] (L'') edge [bend left, draw] node [right] {} (F''');
     \path [draw,->,>={Latex[black]}] (F'') edge node [right] {} (F''');

\path [draw,->,>={Latex[black]}] (G''') edge [bend left, draw] node [right] {} (empty1);
     \path [draw,->,>={Latex[black]}] (A''') edge [bend left, draw] node [right] {} (empty1);
     \path [draw,->,>={Latex[black]}] (H''') edge  node [right] {} (empty1);
     \path [draw,->,>={Latex[black]}] (B''') edge [bend right] node [right] {} (empty1);
     \path [draw,->,>={Latex[black]}] (H''') edge  node [right] {} (empty2);
     \path [draw,->,>={Latex[black]}] (B''') edge [bend right] node [right] {} (empty2);
     \edge [>={Latex[black]}]{H'''} {empty3};
    \edge [>={Latex[black]}]{B'''} {empty3};
    \edge [>={Latex[black]}]{C'''} {empty3};
    \edge [>={Latex[black]}]{I'''} {empty3};
    \path [draw,->,>={Latex[black]}] (K'') edge [] node [right] {} (empty3);
    \path [draw,->,>={Latex[black]}] (C''') edge [bend left, draw] node [right] {} (empty4);
     \path [draw,->,>={Latex[black]}] (D''') edge [bend left, draw] node [right] {} (empty4);
     \path [draw,->,>={Latex[black]}] (I''') edge  node [right] {} (empty4);
     \path [draw,->,>={Latex[black]}] (J''') edge [bend right] node [right] {} (empty4);
    \path [draw,->,>={Latex[black]}] (K''') edge [bend left, draw] node [right] {} (empty5);
     \path [draw,->,>={Latex[black]}] (E''') edge [bend left, draw] node [right] {}(empty5);
    \path [draw,->,>={Latex[black]}] (K''') edge [bend left, draw] node [right] {} (empty6);
     \path [draw,->,>={Latex[black]}] (E''') edge [bend left, draw] node [right] {} (empty6);
     \path [draw,->,>={Latex[black]}] (L''') edge [bend left, draw] node [right] {} (empty6);
     \path [draw,->,>={Latex[black]}] (F''') edge node [right] {} (empty6);

\node[mark size=1pt,color=blue!50] at (16,1.5) {\pgfuseplotmark{*}};
    \node[mark size=1pt,color=blue!50] at (16.2,1.5) {\pgfuseplotmark{*}};
    \node[mark size=1pt,color=blue!50] at (16.4,1.5) {\pgfuseplotmark{*}};
    \node[mark size=1pt,color=blue!50] at (16.6,1.5) {\pgfuseplotmark{*}};
    \node[mark size=1pt,color=blue!50] at (16.8,1.5) {\pgfuseplotmark{*}};
    \node[mark size=1pt,color=blue!50] at (17,1.5) {\pgfuseplotmark{*}};
    \node[mark size=1pt,color=blue!50] at (17.2,1.5) {\pgfuseplotmark{}};

\node[mark size=1pt,color=blue!50] at (16,0) {\pgfuseplotmark{*}};
    \node[mark size=1pt,color=blue!50] at (16.2,0) {\pgfuseplotmark{*}};
    \node[mark size=1pt,color=blue!50] at (16.4,0) {\pgfuseplotmark{*}};
    \node[mark size=1pt,color=blue!50] at (16.6,0) {\pgfuseplotmark{*}};
    \node[mark size=1pt,color=blue!50] at (16.8,0) {\pgfuseplotmark{*}};
    \node[mark size=1pt,color=blue!50] at (17,0) {\pgfuseplotmark{*}};
    \node[mark size=1pt,color=blue!50] at (17.2,0) {\pgfuseplotmark{}};

 \node[mark size=1pt,color=blue!50] at (16,-2.2) {\pgfuseplotmark{*}};
    \node[mark size=1pt,color=blue!50] at (16.2,-2.2) {\pgfuseplotmark{*}};
    \node[mark size=1pt,color=blue!50] at (16.4,-2.2) {\pgfuseplotmark{*}};
    \node[mark size=1pt,color=blue!50] at (16.6,-2.2) {\pgfuseplotmark{*}};
    \node[mark size=1pt,color=blue!50] at (16.8,-2.2) {\pgfuseplotmark{*}};
    \node[mark size=1pt,color=blue!50] at (17,-2.2) {\pgfuseplotmark{*}};
    \node[mark size=1pt,color=blue!50] at (17.2,-2.2) {\pgfuseplotmark{}};

    \node[mark size=1pt,color=blue!50] at (16,-4.8) {\pgfuseplotmark{*}};
    \node[mark size=1pt,color=blue!50] at (16.2,-4.8) {\pgfuseplotmark{*}};
    \node[mark size=1pt,color=blue!50] at (16.4,-4.8) {\pgfuseplotmark{*}};
    \node[mark size=1pt,color=blue!50] at (16.6,-4.8) {\pgfuseplotmark{*}};
    \node[mark size=1pt,color=blue!50] at (16.8,-4.8) {\pgfuseplotmark{*}};
    \node[mark size=1pt,color=blue!50] at (17,-4.8) {\pgfuseplotmark{*}};
    \node[mark size=1pt,color=blue!50] at (17.2,-4.8) {\pgfuseplotmark{}};

     \node[mark size=1pt,color=blue!50] at (16,-7.2) {\pgfuseplotmark{*}};
    \node[mark size=1pt,color=blue!50] at (16.2,-7.2) {\pgfuseplotmark{*}};
    \node[mark size=1pt,color=blue!50] at (16.4,-7.2) {\pgfuseplotmark{*}};
    \node[mark size=1pt,color=blue!50] at (16.6,-7.2) {\pgfuseplotmark{*}};
    \node[mark size=1pt,color=blue!50] at (16.8,-7.2) {\pgfuseplotmark{*}};
    \node[mark size=1pt,color=blue!50] at (17,-7.2) {\pgfuseplotmark{*}};
    \node[mark size=1pt,color=blue!50] at (17.2,-7.2) {\pgfuseplotmark{}};

    \node[mark size=1pt,color=blue!50] at (16,-9.6) {\pgfuseplotmark{*}};
    \node[mark size=1pt,color=blue!50] at (16.2,-9.6) {\pgfuseplotmark{*}};
    \node[mark size=1pt,color=blue!50] at (16.4,-9.6) {\pgfuseplotmark{*}};
    \node[mark size=1pt,color=blue!50] at (16.6,-9.6) {\pgfuseplotmark{*}};
    \node[mark size=1pt,color=blue!50] at (16.8,-9.6) {\pgfuseplotmark{*}};
    \node[mark size=1pt,color=blue!50] at (17,-9.6) {\pgfuseplotmark{*}};
    \node[mark size=1pt,color=blue!50] at (17.2,-9.6) {\pgfuseplotmark{}};
\end{scope}

\end{tikzpicture}
\caption{The full MABN $\mathcal{G}_B$ of the MAS shown in Figs~\ref{fig:BNs}-~\ref{fig:BNr}.}
\label{fig:BNfull}
\end{figure}
\end{center}

\begin{center}

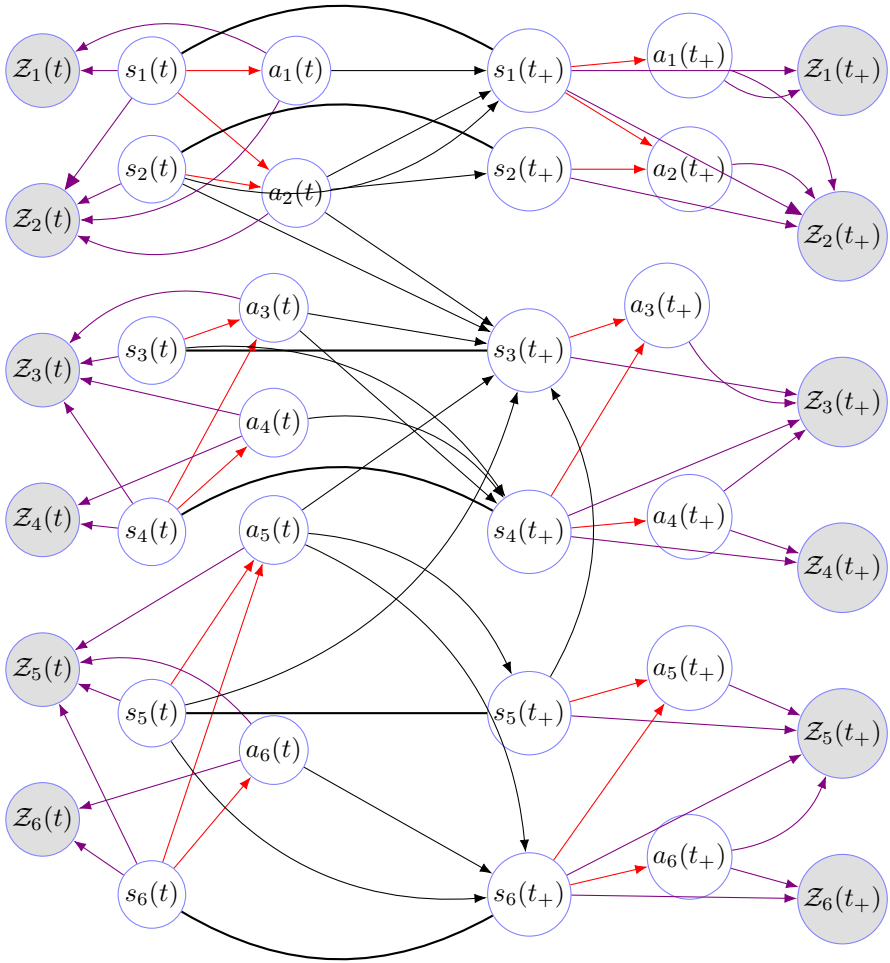
\begin{figure}[htpb]

\begin{tikzpicture}

\begin{scope}[xshift=10cm]
    \node[latent,draw=blue!50, yshift= 1.5cm,xshift = 1cm] (A') {$s_1(t)$};
    \node[latent,draw=blue!50, below=of A', yshift=0.6cm] (B') {$s_2(t)$};
    \node[latent,draw=blue!50 ,below=of B', yshift=-0.5cm] (C') {$s_3(t)$};
    \node[latent,draw=blue!50 ,below=of C', yshift=-0.5cm] (D') {$s_4(t)$};
    \node[latent,draw=blue!50 ,below=of D', yshift=-0.5cm] (E') {$s_5(t)$};
    \node[latent,draw=blue!50 ,below=of E', yshift=-0.5cm] (F') {$s_6(t)$};
    
    \node[latent,draw=blue!50, right=of A'] (G') {$a_1(t)$};
    \node[latent,draw=blue!50, below=of G',yshift=0.3cm] (H') {$a_2(t)$};
    \node[latent,draw=blue!50, below=of H',yshift=0.4cm,xshift=-0.3cm] (I') {$a_3(t)$};
    \node[latent,draw=blue!50, below=of I', yshift= 0.4cm] (J') {$a_4(t)$};
    \node[latent,draw=blue!50, below=of J',yshift=0.5cm] (K') {$a_5(t)$};
    \node[latent,draw=blue!50, below=of K', yshift= -1cm] (L') {$a_6(t)$};
    
     \node[latent,draw=blue!50, right=of A', xshift= 3cm] (A'') {$s_1(t_+)$};
    \node[latent,draw=blue!50, right=of B', xshift= 3cm] (B'') {$s_2(t_+)$};
   \node[latent,draw=blue!50 , right=of C', xshift= 3cm] (C'') {$s_3(t_+)$};
    \node[latent,draw=blue!50 , right=of D', xshift= 3cm] (D'') {$s_4(t_+)$};
    \node[latent,draw=blue!50, right=of E', xshift= 3cm] (E'') {$s_5(t_+)$};
    \node[latent,draw=blue!50, right=of F', xshift= 3cm] (F'') {$s_6(t_+)$};
    \node[latent,draw=blue!50, right=of A'',yshift=0.2cm] (G'') {$a_1(t_+)$};
    \node[latent,draw=blue!50, right=of B'',yshift=0cm] (H'') {$a_2(t_+)$};
    \node[latent,draw=blue!50, right=of C'',yshift=0.6cm,xshift=-0.3cm] (I'') {$a_3(t_+)$};
    \node[latent,draw=blue!50, right=of D'', yshift= 0.2cm] (J'') {$a_4(t_+)$};
    \node[latent,draw=blue!50, right=of E'',yshift=0.6cm] (K'') {$a_5(t_+)$};
    \node[latent,draw=blue!50, right=of F'', yshift= 0.5cm] (L'') {$a_6(t_+)$};

    \node[obs,draw=blue!50, left=of A', xshift= 0.5cm] (N') {$\mZ_1(t)$};
    \node[obs,draw=blue!50, below=of N'] (O') {$\mZ_2(t)$};
    \node[obs,draw=blue!50, below=of O'] (P') {$\mZ_3(t)$};
    \node[obs,draw=blue!50, below=of P'] (Q') {$\mZ_4(t)$};
    \node[obs,draw=blue!50, below=of Q'] (R') {$\mZ_5(t)$};
    \node[obs,draw=blue!50, below=of R'] (S') {$\mZ_6(t)$};
    \node[obs,draw=blue!50, right=of A'', xshift= 2cm] (N'') {$\mZ_1(t_+)$};
    \node[obs,draw=blue!50, below=of N''] (O'') {$\mZ_2(t_+)$};
    \node[obs,draw=blue!50, below=of O''] (P'') {$\mZ_3(t_+)$};
    \node[obs,draw=blue!50, below=of P''] (Q'') {$\mZ_4(t_+)$};
    \node[obs,draw=blue!50, below=of Q''] (R'') {$\mZ_5(t_+)$};
    \node[obs,draw=blue!50, below=of R''] (S'') {$\mZ_6(t_+)$};

    \edge [>={Latex[black]}]{H'} {C''};
    \edge [>={Latex[black]}]{B'} {C''};
    \edge [-,>={Latex[black]}, thick]{C'} {C''};
    \edge [>={Latex[black]}]{I'} {C''};
    \edge [>={Latex[black]}]{K'} {C''};
    \edge [draw=red,>={Latex[red]}]{A'} {G'};
    \edge[draw=red,>={Latex[red]}] {B'} {H'};
    \edge [draw=red,>={Latex[red]}]{C'} {I'};
    \edge[draw=red,>={Latex[red]}] {D'} {J'};
    \edge[draw=red,>={Latex[red]}] {E'} {K'};
    \edge[draw=red,>={Latex[red]}] {F'} {L'};
    \edge[draw=red,>={Latex[red]}] {A'} {H'};
    \edge[draw=red,>={Latex[red]}] {D'} {I'};
    \edge[draw=red,>={Latex[red]}] {F'} {K'};
    \path [draw,->,>={Latex[black]}] (E') edge [bend right] node [right] {} (C'');

    \edge [draw=violet,>={Latex[violet]}]{A'} {N'};\edge[draw=violet,>={Latex[violet]}] {B'} {O'};\edge [draw=violet]{A'} {O'};\edge[draw=violet,>={Latex[violet]}] {C'} {P'};\edge[draw=violet,>={Latex[violet]}] {D'} {Q'};\edge[draw=violet,>={Latex[violet]}] {J'} {Q'};\edge[draw=violet,>={Latex[violet]}] {E'} {R'};\edge[draw=violet,>={Latex[violet]}] {K'} {R'};\edge[draw=violet,>={Latex[violet]}] {F'} {S'};\edge[draw=violet,>={Latex[violet]}] {L'} {S'};
    \edge[draw=violet,>={Latex[violet]}] {D'} {P'};\edge[draw=violet,>={Latex[violet]}] {J'} {P'};\edge[draw=violet,>={Latex[violet]}] {F'} {R'};

    \path [draw,->,>={Latex[violet]}] (L') edge [bend right, draw=violet] node [right] {} (R');
    \path [draw,->,>={Latex[violet]}] (I') edge [bend right, draw=violet] node [right] {} (P');
    \path [draw,->,>={Latex[violet]}] (H') edge [bend left, draw=violet] node [right] {} (O');
    \path [draw,->,>={Latex[violet]}] (G') edge [bend right, draw=violet] node [right] {} (N');
    \path [draw,->,>={Latex[violet]}] (G') edge [bend left, draw=violet] node [right] {} (O');

        \edge [draw=red,>={Latex[red]}]{A''} {G''};
    \edge[draw=red,>={Latex[red]}] {B''} {H''};
    \edge [draw=red,>={Latex[red]}]{C''} {I''};
    \edge[draw=red,>={Latex[red]}] {D''} {J''};
    \edge[draw=red,>={Latex[red]}] {E''} {K''};
    \edge[draw=red,>={Latex[red]}] {F''} {L''};
    \edge[draw=red,>={Latex[red]}] {A''} {H''};
    \edge[draw=red,>={Latex[red]}] {D''} {I''};
    \edge[draw=red,>={Latex[red]}] {F''} {K''};
    \path [draw,->,>={Latex[black]}] (E'') edge [bend right] node [right] {} (C'');

    \edge [draw=violet,>={Latex[violet]}]{A''} {N''};\edge[draw=violet,>={Latex[violet]}] {B''} {O''};\edge [draw=violet]{A''} {O''};\edge[draw=violet,>={Latex[violet]}] {C''} {P''};\edge[draw=violet,>={Latex[violet]}] {D''} {Q''};\edge[draw=violet,>={Latex[violet]}] {J''} {Q''};\edge[draw=violet,>={Latex[violet]}] {E''} {R''};\edge[draw=violet,>={Latex[violet]}] {K''} {R''};\edge[draw=violet,>={Latex[violet]}] {F''} {S''};\edge[draw=violet,>={Latex[violet]}] {L''} {S''};
    \edge[draw=violet,>={Latex[violet]}] {D''} {P''};\edge[draw=violet,>={Latex[violet]}] {J''} {P''};\edge[draw=violet,>={Latex[violet]}] {F''} {R''};

    \path [draw,->,>={Latex[violet]}] (L'') edge [bend right, draw=violet] node [right] {} (R'');
    \path [draw,->,>={Latex[violet]}] (I'') edge [bend right, draw=violet] node [right] {} (P'');
    \path [draw,->,>={Latex[violet]}] (H'') edge [bend left, draw=violet] node [right] {} (O'');
    \path [draw,->,>={Latex[violet]}] (G'') edge [bend right, draw=violet] node [right] {} (N'');
    \path [draw,->,>={Latex[violet]}] (G'') edge [bend left, draw=violet] node [right] {} (O'');

     \path [draw,->,>={Latex[black]}] (G') edge node [right] {} (A'');
     \path [draw,-,>={Latex[black]},thick] (A') edge [bend left, draw] node [right] {} (A'');
     \path [draw,->,>={Latex[black]}] (H') edge  node [right] {} (A'');
     \path [draw,->,>={Latex[black]}] (B') edge [bend right] node [right] {} (A'');

\path [draw,->,>={Latex[black]}] (H') edge  node[right] {} (B'');
\path [draw,-,>={Latex[black]},thick] (B') edge [bend left, draw] node [right] {} (B'');
     
     \path [draw,->,>={Latex[black]}] (C') edge [bend left, draw] node [right] {} (D'');
     \path [draw,-,>={Latex[black]},thick] (D') edge [bend left, draw] node [right] {} (D'');
     \path [draw,->,>={Latex[black]}] (I') edge  node [right] {} (D'');
     \path [draw,->,>={Latex[black]}] (J') edge [bend left] node [right] {} (D'');

     \path [draw,->,>={Latex[black]}] (K') edge [bend left, draw] node [right] {} (E'');
     \path [draw,-,>={Latex[black]},thick] (E') edge [] node [right] {} (E'');

     \path [draw,->,>={Latex[black]}] (K') edge [bend left, draw] node [right] {} (F'');
     \path [draw,->,>={Latex[black]}] (E') edge [bend right, draw] node [right] {} (F'');
     \path [draw,->,>={Latex[black]}] (L') edge [ draw] node [right] {} (F'');
     \path [draw,-,>={Latex[black]},thick] (F') edge[bend right]  node [right] {} (F'');

\end{scope}

\end{tikzpicture}
\caption{{The folded MABN $\mathcal{G}_F$ obtained by making the edges $s_i(t)$ to $s_i(t_+)$ bidirectional ({\color{black}\textbf{black thick}} edges) for the MAS shown in Fig~\ref{fig:BNs}.}}
\label{fig:BNfolded}
\end{figure}
\end{center}

\begin{center}

\begin{figure}[htpb]

\begin{tikzpicture}
\begin{scope}[xshift= 12cm]
    \node[latent,draw=blue!50,xshift= 5cm] (A) {$1$};
    \node[latent,draw=blue!50, below=of A] (B) {$4$};
    \node[latent,draw=blue!50 ,right=of A] (C) {$2$};
    \node[latent,draw=blue!50, below=of C] (D) {$3$};
    \node[latent,draw=blue!50, right=of C] (E) {$6$};
    \node[latent,draw=blue!50 , below=of E] (G) {$5$};
    \node[above=of C] {$\mathcal{G}_{\text{VD}}$};
    \path (A) edge [draw= blue, >= {Latex[blue]},loop above] node (A'){}(A);
    \path (B) edge [draw= blue, >= {Latex[blue]},loop below] node (B'){}(B);
    \path (C) edge [draw= blue, >= {Latex[blue]},loop above] node (C'){}(C);
    \path (D) edge [draw= blue, >= {Latex[blue]},loop below] node (D'){}(D);
    \path (E) edge [draw= blue, >= {Latex[blue]},loop above] node (E'){}(E);
    \path (G) edge [draw= blue, >= {Latex[blue]},loop below] node (G'){}(G);
    
    \edge[draw= blue,>={Latex[blue]}] {A} {D};
    \path [draw= blue,->,>={Latex[blue]}] (B) edge [bend left] node [right] {} (D);
    \edge[draw= blue,>={Latex[blue]}] {C} {D};
    \edge[draw= blue,>={Latex[blue]}] {E} {D};
    \edge[draw= blue,>={Latex[blue]}] {G} {D};
    
    \edge[draw= blue,>={Latex[blue]}] {A} {B};
    \edge[draw= blue,>={Latex[blue]}] {C} {B};
    \edge[draw= blue,>={Latex[blue]}] {D} {B};
    \edge[draw= blue,>={Latex[blue]}] {E} {B};
    \path [draw= blue,->,>={Latex[blue]}] (G) edge [bend left] node [left] {} (B);
    
    \path [draw= blue,->,>={Latex[blue]}] (A) edge [bend left] node [right] {} (C);
    \path [draw= blue,->,>={Latex[blue]}] (C) edge [bend left] node [right] {} (A);
    
    \path [draw= blue,->,>={Latex[blue]}] (E) edge [bend left] node [right] {} (G);
    \path [draw= blue,->,>={Latex[blue]}] (G) edge [bend left] node [right] {} (E);
\end{scope}
\begin{scope}[yshift=-3cm]
\draw[dashed] (23,-1.5) -- (23,7);
\end{scope}

\begin{scope}[xshift=20cm]
    \node[latent,draw=blue!50,xshift= 5cm] (A) {$1$};
    \node[latent,draw=blue!50, below=of A] (B) {$4$};
    \node[latent,draw=blue!50 ,right=of A] (C) {$2$};
    \node[latent,draw=blue!50, below=of C] (D) {$3$};
    \node[latent,draw=blue!50, right=of C] (E) {$6$};
    \node[latent,draw=blue!50 , below=of E] (G) {$5$};
    \node[above=of C] {$\mathcal{G}_{\text{GD}}$};
    \path (A) edge [draw= orange, >= {Latex[orange]},loop above] node (A'){}(A);
    \path (B) edge [draw= orange, >= {Latex[orange]},loop below] node (B'){}(B);
    \path (C) edge [draw= orange, >= {Latex[orange]},loop above] node (C'){}(C);
    \path (D) edge [draw= orange, >= {Latex[orange]},loop below] node (D'){}(D);
    \path (E) edge [draw= orange, >= {Latex[orange]},loop above] node (E'){}(E);
    \path (G) edge [draw= orange, >= {Latex[orange]},loop below] node (G'){}(G);
    
    \edge[draw= orange,>={Latex[orange]}]{D}{A};
    \path [draw= orange,->,>={Latex[orange]}] (D) edge [bend left] node [left] {} (B);
    \edge[draw= orange,>={Latex[orange]}] {D} {C};
    \edge[draw= orange,>={Latex[orange]}] {D} {E};
    \edge[draw= orange,>={Latex[orange]}] {D} {G};
    
    \edge[draw= orange,>={Latex[orange]}] {B} {A};
    \edge[draw= orange,>={Latex[orange]}] {B} {C};
    \edge[draw= orange,>={Latex[orange]}] {B} {D};
    \edge[draw= orange,>={Latex[orange]}] {B} {E};
    \path [draw= orange,->,>={Latex[orange]}] (B) edge [bend left] node [left] {} (G);
    
    \path [draw= orange,->,>={Latex[orange]}] (A) edge [bend left] node [right] {} (C);
    \path [draw= orange,->,>={Latex[orange]}] (C) edge [bend left] node [right] {} (A);
    
    \path [draw= orange,->,>={Latex[orange]}] (E) edge [bend left] node [right] {} (G);
    \path [draw= orange,->,>={Latex[orange]}] (G) edge [bend left] node [right] {} (E);
\end{scope}

\end{tikzpicture}
\caption{$\mathcal{G}_{\text{VD}}$ (left) and $\mathcal{G}_{\text{GD}}$ (right) for the 6-agent example shown in Figs.~\ref{fig:BNs}-~\ref{fig:BNfolded}.}
\label{fig:Gvdgd}
\end{figure}
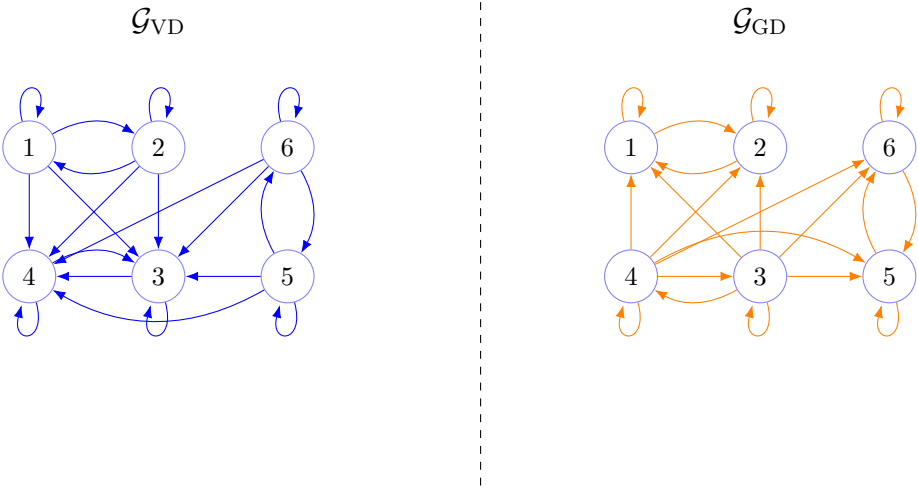
\end{center}

\newpage

 \section{Proof of Theorem~\ref{thm:Qsetdecomp}}
\label{sec:Qsetdecomp}
\begin{proof}
Consider the finite-horizon case of the {partially-observable stochastic cooperative game} described in Section~\ref{sec:coopmarl} with time-varying inter-agent couplings. For any agent $i\in \mathcal{V}$ at time $\tau$, the local reward $r_i$ is deterministic given $\{s_j(\tau),a_j(\tau)\}_{j\in \I^i_R(\tau)}.$ The state transition of agent $i$ at time $\tau$ depends only on $\{s_j(\tau\!-\!1),a_j(\tau\!-\!1)\}_{j\in \I^i_S(\tau)}$ and the action of agent $i\in \mathcal{V}$ depends only on $o_j(\tau) = \{s_j(\tau)\}_{j\in \I^i_O(\tau)}.$ Therefore, the innermost expectation in{~\eqref{eq:QZeq}} is fully defined over $(s_{U^T_i}(T),~a_{U^T_i}(T)),$ where $U^T_i =\{j \cup \I^j_O(T)\}_{j\in \I^i_R(T)}.$ Similarly, the expectation at time $T\!-\!1$ is taken over $(s_{U^{T\!-\!1}_i}(T\!-\!1),~a_{U^{T\!-\!1}_i}(T\!-\!1)),$ where $U^{T\!-\!1}_i = \{j \cup \I^j_O(T\!-\!1)\}_{j\in \I^i_R(T\!-\!1)} \cup \{\{k \cup \I^k_O(T\!-\!1)\}_{k \in\I^j_S(T)}\}_{j\in U^T_i}.$ Therefore, at any time $\tau,~t\leq \tau < T,$ the corresponding expectation in {~\eqref{eq:QZeq}} is taken over $(s_{U^{\tau}_i}(\tau),~a_{U^{\tau}_i}(\tau)),$ where $U^{\tau}_i = \{j \cup \I^j_O(\tau)\}_{j\in \I^i_R(\tau)} \cup \{\{k \cup \I^k_O(\tau)\}_{k \in\I^j_S(\tau\!+\!1)}\}_{j\in U^{\tau\!+\!1}_i}.$
Then,{~\eqref{eq:QZeq}} can be rewritten as
\begin{align}
&Q^\pi_i(s(t),a(t)) =  r_i(s_{\mathcal{I}^i_R}(t), a_{\mathcal{I}^i_R}(t)) + \nonumber\\
&\underset{\substack{s_{U^{t\!+\!1}_i}(t\!+\!1)\sim \mathcal{P}_{U^{t\!+\!1}_i}\\ a_{U^{t\!+\!1}_i}(t\!+\!1) \sim \pi_{\theta_{U^{t\!+\!1}_i}}}}{\mathbb{E}}\bigg[ \gamma r_i(s_{\mathcal{I}^i_R}(t\!+\!1), a_{\mathcal{I}^i_R}(t\!+\!1)) +\cdots +\underset{\substack{s_{U^{T\!-\!1}_i}(T\!-\!1)\sim \mathcal{P}_{U^{T-1}_i}\\ a_{U^{t+1}_i}(T\!-\!1) \sim \pi_{\theta_{U^{T\!-\!1}_i}}}}{\mathbb{E}}\bigg[ \gamma^{T-t-1} r_i(s_{\mathcal{I}^i_R}(T\!-\!1), a_{\mathcal{I}^i_R}(T\!-\!1)) \nonumber\\
&\hspace{250pt}+ \underset{\substack{s_{U^{T}_i}(T)\sim \mathcal{P}_{U^{T}_i}\\ a_{U^{T}_i}(T) \sim \pi_{\theta_{U^{T}_i}}}}{\mathbb{E}}\big[ \gamma^T r_i(s_{\mathcal{I}^i_R}(T), a_{\mathcal{I}^i_R}(T))\big] \bigg]\bigg].
\label{eq:QZ}
\end{align}
From \eqref{eq:QZ}, it is evident that the individual action value of agent $i$ at time $t$, $Q^\pi_i(s(t),a(t))$, depends only on $\I^i_Q(t) \triangleq U^t_i\cup U^{t+1}_i\cup \cdots \cup U^T_i \subseteq \mathcal{V}$ which is governed by the time-varying inter-agent couplings.
\end{proof}

\section{Proof of Theorem~\ref{thm:graddecomp}}
\label{sec:graddecomp}
\begin{proof}
Recall that
\begin{align}
&Q^\pi(s,a) = \mathbb{E}_\pi\left[ \sum_{i=1}^N \sum_{t=0}^\infty \gamma^t r_i(s_{\mathcal{I}^i_R}(t),a_{\mathcal{I}^i_R}(t)) | s(0) = s, a(0) = a\right]\nonumber\\
   &=\mathbb{E}_\pi\bigg[ \sum_{j \in \mathcal{I}_{\text{GD}}^i} \sum_{t=0}^\infty \gamma^t r_j(s_{\mathcal{I}^j_R}(t),a_{\mathcal{I}^j_R}(t)) + \sum_{j \backslash \mathcal{I}_{\text{GD}}^i} \sum_{t=0}^\infty \gamma^t r_j(s_{\mathcal{I}^j_R}(t),a_{\mathcal{I}^j_R}(t)) \bigg| s(0) = s, a(0) = a\bigg].
   \label{eq:rewdecomp}
   \end{align}
{   The first term in~\eqref{eq:rewdecomp} can be rewritten as 
   \begin{align}
   &\mathbb{E}_\pi\bigg[ \sum_{j \in \mathcal{I}_{\text{GD}}^i} \sum_{t=0}^\infty \gamma^t r_j(s_{\mathcal{I}^j_R}(t),a_{\mathcal{I}^j_R}(t)) | s(0) = s, a(0) = a\bigg]
   =\nonumber\\& \int_{s(1),r(1)} p(s(1),r(1)|s,a) \sum_{j \in \mathcal{I}_{\text{GD}}^i}\bigg[ r_j(s_{\mathcal{I}^j_R},a_{\mathcal{I}^j_R})+\nonumber\\&\hspace{90pt}\int_{\pi(a(1)|s(1))} \mathbb{E}_\pi\big[ \sum_{t=1}^\infty \gamma^t r_j(s_{\mathcal{I}^j_R}(t),a_{\mathcal{I}^j_R}(t)) |s(1), a(1)\big]da(1) \bigg]dr(1)ds(1)
   \end{align}
Assuming a deterministic global reward given global state and global action yields  
   \begin{align}
   &= \sum_{j \in \mathcal{I}_{\text{GD}}^i} \bigg[r_j(s_{\mathcal{I}^j_R},a_{\mathcal{I}^j_R})+\gamma\int_{s(1)} p(s(1)|s,a) \int_{\pi(a(1)|s(1))} Q^\pi_j(s(1),a(1))da(1) ds(1)\bigg]\nonumber\\
   &= \sum_{j \in \mathcal{I}_{\text{GD}}^i} \left[r_j(s_{\mathcal{I}^j_R},a_{\mathcal{I}^j_R})+\gamma\int_{s(1)} p(s(1)|s,a)  V^\pi_j(s(1))ds(1)\right].
   \label{eq:lavf}
   \end{align}
   Similarly, the second term in~\eqref{eq:rewdecomp} can be rewritten as
   
   \begin{align}
   &\mathbb{E}_\pi\bigg[ \sum_{j \backslash \mathcal{I}_{\text{GD}}^i} \sum_{t=0}^\infty \gamma^t r_j(s_{\mathcal{I}^j_R}(t),a_{\mathcal{I}^j_R}(t)) | s(0) = s, a(0) = a\bigg]=\nonumber\\
   &\hspace{160pt}\sum_{j \backslash \mathcal{I}_{\text{GD}}^i} \left[r_j(s_{\mathcal{I}^j_R},a_{\mathcal{I}^j_R})+\gamma\int_{s(1)} p(s(1)|s,a)  V^\pi_j(s(1))ds(1)\right].
   \label{eq:notlavf}
   \end{align}
   Using~\eqref{eq:lavf} and~\eqref{eq:notlavf},~\eqref{eq:rewdecomp} is rewritten as
   \begin{align}
   Q^\pi(s,a) &= \sum_{j \in \mathcal{I}_{\text{GD}}^i} [r_j(s_{\mathcal{I}_R^j}, a_{\mathcal{I}_R^j}) + \gamma  \int_{s(1) \in \mathcal{S}} p(s(1)|s, \pi_\theta(s)) V^{\pi_{\theta_j}}(s(1))]\nonumber \\&\hspace{50pt}+\sum_{k \notin \mathcal{I}_{\text{GD}}^i} [r_k(s_{\mathcal{I}_R^k}, a_{\mathcal{I}_R^k}) + \gamma  \int_{s' \in \mathcal{S}} p(s(1)|s, \pi_{\theta}(s)) V^{\pi_{\theta_k}}(s(1))].
   \label{eq:graddecomp}
\end{align}
}
{
Using the linearity of expectation,{~\eqref{eq:graddecomp} can be rewritten as}
\begin{align}
    &Q^\pi(s,a)=\sum_{j \in \mathcal{I}_{\text{GD}}^i}\mathbb{E}_\pi\bigg[  \sum_{t=0}^\infty \gamma^t r_j(s_{\mathcal{I}^j_R}(t),a_{\mathcal{I}^j_R}(t)) | s(0) = s, a(0) = a\bigg] \hspace{100pt} \nonumber\\&\hspace{60pt}+ \sum_{{k \not\in \mathcal{I}_{\text{GD}}^i}}
   \mathbb{E}_\pi\bigg[  \sum_{t=0}^\infty \gamma^t r_k(s_{\mathcal{I}^k_R}(t),a_{\mathcal{I}^k_R}(t)) | s(0) = s, a(0) = a\bigg]\\
   &= \sum_{j \in \mathcal{I}_{\text{GD}}^i} Q^\pi_j(s,a) + \sum_{k \not\in \mathcal{I}_{\text{GD}}^i} Q^\pi_k(s,a)   \label{eq:qdecomp}\\
 &= \widehat{Q}^\pi_i(s,a) + \bar{Q}^\pi_i(s,a),
   \label{eq:QbarQhat}
\end{align}
where $\bar{Q}^\pi_i(s,a)  = \sum_{k \not\in\mathcal{I}_{\text{GD}}^i} Q^\pi_k(s,a)$.
}
    From Theorem~\ref{thm:Qsetdecomp},
the reward of each agent $i \in \mathcal{V}$ depends on $s_j(t)$, $a_j(t)$ $\forall$ $j \in \I^i_Q$ and $\mathcal{E}_{\text{GD}} = \mathcal{E}^\intercal_{\text{VD}}$ by the definition of $\mathcal{G}_{\text{GD}}$. Therefore, if $k \notin \mathcal{I}^i_{\text{GD}}$, then $i \notin \I^k_Q$. Hence, $\sum_{k \not\in \mathcal{I}^i_{\text{GD}}} r_k(s_{\mathcal{I}_R^k}(t), a_{\mathcal{I}_R^k}(t))$ is independent of $a_i(t)$ and thus $\theta_i$. It then follows that $ Q^\pi_k(\cdot)$, $ V^\pi_k(\cdot)$ are independent of $\theta_i$, $\forall$ $k \notin \mathcal{I}^i_{\text{GD}}$, which implies
\begin{align}
    \nabla_{\theta_i} \  \bar{Q}^\pi_i &=  \nabla_{\theta_i} \sum_{k \not\in \mathcal{I}^i_{\text{GD}}} \bigg[r_k(s_{\mathcal{I}_R^k}, a_{\mathcal{I}_R^k})+ \gamma  \int_{s' \in \mathcal{S}} p(s'|s, \pi_\theta(s)) V^\pi_k(s')\bigg]\nonumber\\
     &=  \sum_{k \not\in \mathcal{I}^i_{\text{GD}}} \bigg[\nabla_{\theta_i} r_k(s_{\mathcal{I}_R^k}, a_{\mathcal{I}_R^k})+ \gamma  \int_{s' \in \mathcal{S}} p(s'|s, \pi_\theta(s)) \nabla_{\theta_i} [V^\pi_k(s')] \bigg] = 0,
  \label{eq:gradindependence}
\end{align}
 where the second line in~\eqref{eq:gradindependence} is obtained by interchanging the derivative and integral assuming that each $V^\pi_k(s')$ is sufficiently smooth in $s'$.
Then the gradient of the global action value function with respect to $\theta_i$ is given by
\begin{align}
   \nabla_{\theta_i}  Q^\pi(s,a) &=  
   \nabla_{\theta_i}  [\widehat{Q}^\pi_i + \bar{Q}^\pi_i ]= \nabla_{\theta_i} \widehat{Q}^\pi_i.
   \label{eq:gradlvf}
\end{align}
\end{proof}

\section{Proof of Theorem~\ref{thm:pgt}}
\label{sec:pgt}

\begin{proof}
    \begin{enumerate}[(a)]
      \item  
 We follow a similar approach as the single agent deterministic policy gradient theorem in~\cite{silver14} except that we leverage the topology of the network defining the multi-agent couplings. The regularity conditions required for the proof are stated as and when needed.
To compute the policy gradient for an agent $i \in \mathcal{V}$, consider the gradient of the state value function with respect to $\theta_i$: 
\begin{align}
    &\nabla_{\theta_i }V^{\pi_\theta}(s) = \nabla_{\theta_i} Q^{\pi_\theta}(s,\pi_\theta(s))
\end{align}
Using Theorem~\ref{thm:graddecomp}, the gradient of GVF w.r.t $\theta_i$ can be rewritten 
\begin{align}
&\nabla_{\theta_i }V^{\pi_\theta}(s) = \nabla_{\theta_i} \widehat{Q}^{\pi_\theta}(s,\pi_\theta(s))\\
&=\nabla_{\theta_i }\sum_{j \in \mathcal{I}_{\text{GD}}^i} \bigg[r_j(s_{\mathcal{I}_R^j}, \pi_{\theta_{\mathcal{I}_R^j}}(s_{\mathcal{I}^j_O}))+ \gamma  \int_{s' \in \mathcal{S}} p(s'|s, \pi_\theta(s)) V^{\pi_{\theta_j}}(s')ds'\bigg].
\end{align}
In the remainder of the proof, we suppress the notation $a_i = \pi_{\theta_i}(s_{\mathcal{I}^j_O})$ for brevity. Assuming $p(s'|s, \pi_\theta(s)),~\pi_\theta(s),~V^{\pi_{\theta_j}}(s')$, $j \in \mathcal{I}_{\text{GD}}^i$ and their derivatives are sufficiently smooth in $s,~a,~\theta$, interchanging the gradient and integral yields
\begin{align}
  &\nabla_{\theta_i }V^{\pi_\theta}(s)  =\nonumber\\
  &~~\nabla_{\theta_i} \pi_{\theta_i}(s_{\mathcal{I}^j_O}) \sum_{j \in \mathcal{I}^i_{\text{GD}}} \nabla_{a_i}r_j(s_{\mathcal{I}_R^j}, a_{\mathcal{I}_R^j})\bigg|_{a_i} + \gamma \sum_{j \in \mathcal{I}_{\text{GD}}^i} \int_{s' \in \mathcal{S}} \nabla_{\theta_i} [p(s'|s, \pi_\theta(s))  V^{\pi_{\theta_j}}(s')]ds'\nonumber\\
  &= \nabla_{\theta_i} \pi_{\theta_i}(s_{\mathcal{I}^j_O}) \sum_{j \in \mathcal{I}^i_{\text{GD}}} \nabla_{a_i}r_j(s_{\mathcal{I}_R^j}, a_{\mathcal{I}_R^j})\bigg|_{a_i} + \gamma \sum_{j \in \mathcal{I}_{\text{GD}}^i} \int_{s' \in \mathcal{S}} p(s'|s, \pi_\theta(s)) [ \nabla_{\theta_i} V^{\pi_{\theta_j}}(s')]ds'\nonumber\\
  &~~+ \gamma \sum_{j \in \mathcal{I}_{\text{GD}}^i} \int_{s' \in \mathcal{S}} \nabla_{\theta_i} \pi_{\theta_i}(s_{\mathcal{I}^j_O})\nabla_{a_i}p(s'|s, \pi_\theta(s))\bigg|_{a_i} V^{\pi_{\theta_j}}(s')ds' \nonumber\\
  &= \nabla_{\theta_i} \pi_{\theta_i}(s_{\mathcal{I}^j_O}) \sum_{j \in \mathcal{I}^i_{\text{GD}}} \nabla_{a_i}\bigg[r_j(s_{\mathcal{I}_R^j}, a_{\mathcal{I}_R^j})+ \gamma \int_{s' \in \mathcal{S}} p(s'|s, \pi_\theta(s)) V^{\pi_{\theta_j}}(s') ds'\bigg]_{a_i} \nonumber\\
  &~~+ \gamma \sum_{j \in \mathcal{I}_{\text{GD}}^i} \int_{s' \in \mathcal{S}} p(s'|s, \pi_\theta(s)) \nabla_{\theta_i} V^{\pi_{\theta_j}}(s')ds'\\
  &= \nabla_{\theta_i} \pi_{\theta_i}(s_{\mathcal{I}^j_O}) \nabla_{a_i} \widehat{Q}^\pi_i(s,a)\big|_{a_i}+ \gamma  \int_{s' \in \mathcal{S}} p(s'|s, \pi_\theta(s)) \nabla_{\theta_i} \sum_{j \in \mathcal{I}_{\text{GD}}^i}V^{\pi_{\theta_j}}(s')ds'.
\label{eq:recursivegrad}
\end{align}

Let $\rho^{\pi_\theta}(s \rightarrow x,k)$ denote the visitation probability of transitioning from $s$ to $x$ in $k$ steps under policy $\pi_\theta$. The transition from state $s$ to $x$ in $k$ steps can be broken down into transition from $s$ to an intermediate state $p$ in $k-1$ steps and transition from $p$ to $x$ in the last step. Hence, the visitation probability can be expressed recursively as
\begin{align}
    \rho^{\pi_\theta}(s \rightarrow x,k) = \int_p \rho^{\pi_\theta}(s \rightarrow p,k-1)\rho^{\pi_\theta}(p \rightarrow x,1) dp.
    \label{eq:visitprob1}
\end{align}
For brevity, define $\phi(s) = \nabla_{\theta_i} \pi_{\theta_i}(s_{\mathcal{I}^j_O}) \nabla_{a_i} \widehat{Q}^\pi_i(s,a)\big|_{a_i}$. Using \eqref{eq:visitprob1}, the gradient in \eqref{eq:recursivegrad} is expressed in a recursive form as
\begin{align}
    \nabla_{\theta_i} V^{\pi_\theta}(s) &= \phi(s)+ \gamma  \int_{s' \in \mathcal{S}} \rho^{\pi_\theta}(s \rightarrow s',1) \nabla_{\theta_i} V^{\pi_{\theta}}(s')ds'\nonumber\\
    &= \phi(s) + \gamma  \int_{s' \in \mathcal{S}} \rho^{\pi_\theta}(s \rightarrow s',1) \phi(s') ds'\nonumber\\&\hspace{40pt}+ \gamma^2  \int_{s' \in \mathcal{S}} \rho^{\pi_\theta}(s \rightarrow s',1)\int_{s'' \in \mathcal{S}} \rho^{\pi_\theta}(s' \rightarrow s'',1) \nabla_{\theta_i} V^{\pi_{\theta}}(s'')ds'' ds'.
    \end{align}
    For the sake of analysis, we assume that the state space $\mathcal{S}$ is compact. Hence, we assume that $|\nabla_{\theta_i} V^{\pi_{\theta}}(s)|$ is finite and apply Fubini's theorem to exchange the order of integration which yields  
    \begin{align}
    \nabla_{\theta_i} V^{\pi_\theta}(s)&= \phi(s)\nonumber+ \gamma  \int_{s' \in \mathcal{S}} \rho^{\pi_\theta}(s \rightarrow s',1) \phi(s')\big|_{a'_i} ds'\nonumber\nonumber\\&\hspace{40pt}+ \gamma^2 \int_{s'' \in \mathcal{S}} \int_{s' \in \mathcal{S}} \rho^{\pi_\theta}(s \rightarrow s',1)  \rho^{\pi_\theta}(s' \rightarrow s'',1) \nabla_{\theta_i} V^{\pi_{\theta}}(s'') ds' ds''\\
    &= \phi(s)+ \gamma  \int_{s' \in \mathcal{S}} \rho^{\pi_\theta}(s \rightarrow s',1) \phi(s') \gamma^2  \int_{s'' \in \mathcal{S}} \rho^{\pi_\theta}(s \rightarrow s'',2)\phi(s'') + \cdots\\
    &= \int_{s'\in S} \sum_{k=0}^\infty \gamma^t \rho^{\pi_\theta}(s \rightarrow s',k) \phi(s') ds'.
    \label{eq:pgtheorem1}
    \end{align}
Hence, the gradient of the objective can be rewritten as
\begin{align}
    \nabla_{\theta_i} J(\theta) &= \nabla_{\theta_i }\int_{s' \in \mathcal{S}}p(s)V^{\pi_\theta}(s') ds'   
\end{align}
Assuming $V^{\pi_\theta}(s')$, $p(s)$ and their derivatives with respect to $\theta_i$ are sufficiently smooth, applying Leibniz rule to exchange the derivative and integral and applying Fubini's theorem to exchange the order of integration assuming $|\nabla_{\theta_i} V^{\pi_{\theta}}(s)|$ is finite yields
\begin{align}
\nabla_{\theta_i} J(\theta) &= \int_{s \in \mathcal{S}}p(s)\int_{s'\in S} \sum_{k=0}^\infty \gamma^k \rho^{\pi_\theta}(s \rightarrow s',k) \phi(s') ds' ds\\
&=  \int_{s'\in \mathcal{S}} \int_{s \in \mathcal{S}} \gamma^k p(s) \sum_{k=0}^\infty \rho^{\pi_\theta}(s \rightarrow s',k)\phi(s') ds ds'.
\end{align}
Denote $\eta(s') = \sum_{k=0}^\infty \rho^{\pi_\theta}(s \rightarrow s',k)$
\begin{align}
    \nabla_\theta J(\theta) &= \int_{s'\in \mathcal{S}}\left(\int_{s'\in \mathcal{S}} \eta(s') \right) \int_{s\in \mathcal{S}}p(s)\frac{\eta(s')}{ \int_{s'\in \mathcal{S}} \eta(s')} \phi(s')ds ds'\\
    &\propto \int_{s'\in \mathcal{S}} \int_{s\in \mathcal{S}}p(s) \frac{\eta(s')}{ \int_{s'\in \mathcal{S}} \eta(s')} \phi(s')ds'\\
    &= \int_{s'\in \mathcal{S}} d^\pi(s') \nabla_{\theta_i} \pi_{\theta_i}(s'_{\mathcal{I}^i_O}) \nabla_{a'_i} \widehat{Q}_i(s',a')\big|_{a'_i = \pi_{\theta_i}(s'_{\mathcal{I}^i_O})} ds'
    \label{eq:mapgt1}
\end{align}

The last two equations follow from the fact that $\int_{s'\in S} \eta(s')$ is a constant (equal to the length of episode in episodic case and 1 in continuous case) and $d^\pi(s') = \int_{s \in \mathcal{S}} p(s)\frac{\eta(s')}{ \int_{s'\in S} \eta(s')}$ is a stationary distribution. Define $\I^i_{\widehat{Q}} = \bigcup_{j\in \I^i_{\text{GD}}} \I^j_Q$. Using Theorem~\ref{thm:Qsetdecomp}, \eqref{eq:mapgt1} can be rewritten as
\begin{align}
   \nabla_{\theta_i} J(\theta) &=\int_{s'_{\I^i_{\widehat{Q}}}\in \mathcal{S}} d^\pi(s'_{\I^i_{\widehat{Q}}}) \nabla_{\theta_i} \pi_{\theta_i}(s'_{\mathcal{I}^i_O}) \nabla_{a'_i} \widehat{Q}^\pi_i(s'_{\I^i_{\widehat{Q}}},a'_{\I^i_{\widehat{Q}}})\big|_{a'_i = \pi_{\theta_i}(s'_{\mathcal{I}^i_O})} ds'\nonumber\\
   &=\underset{s'_{\I^i_{\widehat{Q}}}\sim d^\pi(s'_{\I^i_{\widehat{Q}}})}{\mathbb{E}} \left[ \nabla_{\theta_i} \pi_{\theta_i}(s'_{\mathcal{I}^i_O}) \nabla_{a'_i} \widehat{Q}^\pi_i(s'_{\I^i_{\widehat{Q}}},a'_{\I^i_{\widehat{Q}}})\big|_{a'_i = \pi_{\theta_i}(s'_{\mathcal{I}^i_O})}\right].
\end{align}
Replacing $s'$ by $s$ and $a'$ by $a$ gives the result in Theorem~\ref{thm:pgt} a.
\item 
    
 We follow a similar approach as the single agent stochastic policy gradient theorem in~\cite{sutton2018reinforcement} except that we leverage the topology of the network defining the multi-agent couplings. The regularity conditions required for the proof are stated as and when needed.
To compute the policy gradient for an agent $i \in \mathcal{V}$, consider the gradient of the state value function with respect to $\theta_i$: 
\begin{align}
    &\nabla_{\theta_i }V^{\pi_\theta}(s) = \nabla_{\theta_i} \int_{a\in \A} \pi_{\theta}(a|s) Q^{\pi_\theta}(s,a) da\\
    &=\nabla_{\theta_i} \int_{a\in \A} \pi_{\theta}(a|s) \left(\sum_{j \in \mathcal{I}_{\text{GD}}^i} Q^\pi_j(s,a) + \sum_{k \backslash \mathcal{I}_{\text{GD}}^i} Q^\pi_k(s,a)\right) da\\
    &=\nabla_{\theta_i} \int_{a\in \A} \pi_{\theta}(a|s) \sum_{j \in \mathcal{I}_{\text{GD}}^i} Q^\pi_j(s,a) da + \nabla_{\theta_i} \int_{a\in \A} \pi_{\theta}(a|s) \sum_{k \backslash \mathcal{I}_{\text{GD}}^i} Q^\pi_k(s,a) da
\end{align}
Assume that $\pi_\theta(a|s)$, $Q^{\pi_\theta}(s,a)$ are sufficiently smooth and the $\pi_\theta(a|s) = \prod_{i=1}^N \pi_{\theta_i}(a_i|s_{\I^i_O}) .$ Then, interchanging the derivative and integral and using Theorem~\ref{thm:graddecomp} yields 
\begin{align}
&\nabla_{\theta_i }V^{\pi_\theta}(s) = \nonumber\\
&\sum_{j \in \mathcal{I}_{\text{GD}}^i} \int_{a\in \A} \left(\nabla_{\theta_i}\pi_{\theta_i}(a_i|s_{\I^i_O})\left( \prod_{j\setminus i} \pi_{\theta_j}(a_j|s_{\I^j_O}) \right) Q^\pi_j(s,a) + \pi_{\theta}(a|s) \nabla_{\theta_i}Q^\pi_j(s,a) \right)da.
\label{eq:V}
\end{align}


Recall that
\begin{align}
Q^\pi_j(s,a) &= r_j(s_{\I^j_R}, a_{\I^j_R}) + \gamma  \int_{s' \in \mathcal{S}} p(s'|s, a) V^{\pi_{\theta_j}}(s')ds'.
\end{align}

Assuming a deterministic reward and a sufficiently smooth $V^{\pi_{\theta_j}}(s)$ and its derivative with respect to $\theta_i$ $\forall~j$ yields 
\begin{align}
\int_{a\in \A}\nabla_{\theta_i}Q^\pi_{j}(s,a) &=\int_{a\in \A} \gamma  \int_{s' \in \mathcal{S}} p(s'|s, a)\nabla_{\theta_i } V^{\pi_{\theta_j}}(s')ds'.
\label{eq:Qdecomp}
\end{align}

Substituting \eqref{eq:Qdecomp} in \eqref{eq:V} yields
\begin{align}
\nabla_{\theta_i }V^{\pi_\theta}(s) &=  \sum_{j \in \mathcal{I}_{\text{GD}}^i} \int_{a\in \A} \nabla_{\theta_i}\pi_{\theta_i}(a_i|s_{\I^i_O})\left( \prod_{j\setminus i} \pi_{\theta_j}(a_j|s_{\I^j_O}) \right) Q^\pi_j(s,a) da\nonumber\\&\hspace{50pt} + \gamma \sum_{j \in \mathcal{I}_{\text{GD}}^i} \int_{a\in \A}  \pi_{\theta}(a|s) \int_{s' \in \mathcal{S}} p(s'|s, a)\nabla_{\theta_i } V^{\pi_{\theta_j}}(s')ds' da.
\label{eq:Vdecomp}
\end{align}

Let $\rho^{\pi_\theta}(s \rightarrow x,k)$ denote the visitation probability of transitioning from $s$ to $x$ in $k$ steps under policy $\pi_\theta$. The transition from state $s$ to $x$ in $k$ steps can be broken down into transition from $s$ to an intermediate state $p$ in $k-1$ steps and transition from $p$ to $x$ in the last step. Hence, the visitation probability can be expressed recursively as
\begin{align}
    \rho^{\pi_\theta}(s \rightarrow x,k) = \int_p \rho^{\pi_\theta}(s \rightarrow p,k-1)\rho^{\pi_\theta}(p \rightarrow x,1) dp.
    \label{eq:visitprob}
\end{align}
Define $\phi(s) = \sum_{j \in \mathcal{I}_{\text{GD}}^i}\int_{a\in \A} \nabla_{\theta_i}\pi_{\theta_i}(a_i|s_{\I^i_O})\left( \prod_{j\setminus i} \pi_{\theta_j}(a_j|s_{\I^j_O}) \right) Q^\pi_{j}(s,a) da$. For the sake of analysis, we assume that the state space $\mathcal{S}$ is compact which implies $|\nabla_{\theta_i} V^{\pi_{\theta}}(s)|$ is finite. Therefore, in the remainder of the proof, we apply Fubini's theorem to exchange the order of integration when necessary. Then, using  \eqref{eq:visitprob}, the gradient in \eqref{eq:Vdecomp} is expressed in a recursive form as
\begin{align}
    \nabla_{\theta_i} V^{\pi_\theta}(s) &= \phi(s)+ \gamma  \sum_{j \in \mathcal{I}_{\text{GD}}^i} \int_{a\in \A}  \pi_{\theta}(a|s) \int_{s' \in \mathcal{S}} p(s'|s, a)\nabla_{\theta_i } V^{\pi_{\theta_j}}(s')ds' da\nonumber \\
    &= \phi(s)+ \gamma  \sum_{j \in \mathcal{I}_{\text{GD}}^i}  \int_{s' \in \mathcal{S}} \int_{a\in \A}  \pi_{\theta}(a|s) p(s'|s, a)\nabla_{\theta_i } V^{\pi_{\theta_j}}(s') da ds'\nonumber \\
    &= \phi(s)+ \gamma  \int_{s' \in \mathcal{S}} \rho^{\pi_\theta}(s \rightarrow s',1) \nabla_{\theta_i} V^{\pi_{\theta}}(s')ds'\nonumber\\&(\text{Since, $\nabla_{\theta_i} V^{\pi_{\theta}}(s) = \sum_{j \in \mathcal{I}_{\text{GD}}^i}\nabla_{\theta_i} V^\pi_{j}(s)$ (Lemma 2, \cite{jing2024distributed}})\nonumber\\
    &= \phi(s) + \gamma  \int_{s' \in \mathcal{S}} \rho^{\pi_\theta}(s \rightarrow s',1) \phi(s') ds'\nonumber\\&\hspace{40pt}+ \gamma^2  \int_{s' \in \mathcal{S}} \rho^{\pi_\theta}(s \rightarrow s',1)\int_{s'' \in \mathcal{S}} \rho^{\pi_\theta}(s' \rightarrow s'',1) \nabla_{\theta_i} V^{\pi_{\theta}}(s'')ds'' ds'.
    \end{align}
    \begin{align}
    \nabla_{\theta_i} V^\pi(s)&= \phi(s)+ \gamma  \int_{s' \in \mathcal{S}} \rho^{\pi_\theta}(s \rightarrow s',1) \phi(s')ds'\nonumber\\&\hspace{40pt}+ \gamma^2 \int_{s'' \in \mathcal{S}} \int_{s' \in \mathcal{S}} \rho^{\pi_\theta}(s \rightarrow s',1)  \rho^{\pi_\theta}(s' \rightarrow s'',1) \nabla_{\theta_i} V^{\pi_{\theta}}(s'') ds' ds''\\
    &= \phi(s)+ \gamma  \int_{s' \in \mathcal{S}} \rho^{\pi_\theta}(s \rightarrow s',1) \phi(s') +\gamma^2  \int_{s'' \in \mathcal{S}} \rho^{\pi_\theta}(s \rightarrow s'',2)\phi(s'') + \cdots\nonumber\\
    &= \int_{s'\in S} \sum_{k=0}^\infty \gamma^t \rho^{\pi_\theta}(s \rightarrow s',k) \phi(s') ds'
    \label{eq:pgtheorem}
    \end{align}
Hence, the gradient of the objective can be rewritten as
\begin{align}
    \nabla_{\theta_i} J(\theta) &= \nabla_{\theta_i }\int_{s' \in \mathcal{S}}p(s)V^{\pi_\theta}(s') ds'  \\ 
&= \int_{s \in \mathcal{S}}p(s)\int_{s'\in S} \sum_{k=0}^\infty \gamma^k \rho^{\pi_\theta}(s \rightarrow s',k) \phi(s') ds' ds\\
&=  \int_{s'\in \mathcal{S}} \int_{s \in \mathcal{S}}  p(s) \sum_{k=0}^\infty \gamma^k \rho^{\pi_\theta}(s \rightarrow s',k)\phi(s') ds ds'
\end{align}
Denote $\eta(s') = \sum_{k=0}^\infty \gamma^k \rho^{\pi_\theta}(s \rightarrow s',k)$
\begin{align}
    \nabla_{\theta_i} J(\theta) &= \int_{s'\in S}\left(\int_{s'\in S} \eta(s') \right) \int_{s\in \mathcal{S}}p(s)\frac{\eta(s')}{ \int_{s'\in S} \eta(s')} \phi(s')ds ds'\\
    &\propto \int_{s'\in S} \int_{s\in \mathcal{S}}p(s) \frac{\eta(s')}{ \int_{s'\in S} \eta(s')} \phi(s')ds'\\
    &= \int_{s'\in S} d^\pi(s') \int_{a'\in \A} \nabla_{\theta_i}\pi_{\theta_i}(a'_i|s'_{\I^i_O})\left( \prod_{j\setminus i} \pi_{\theta_j}(a'_j|s'_{\I^j_O}) \right) \left(\sum_{j \in \mathcal{I}_{\text{GD}}^i} Q^\pi_{j}(s',a')\right) da' ds'\nonumber\\
    &=\int_{s'\in S} d^\pi(s') \int_{a'\in \A} \nabla_{\theta_i}\pi_{\theta_i}(a'_i|s'_{\I^i_O})\left( \prod_{j\setminus i} \pi_{\theta_j}(a'_j|s'_{\I^j_O}) \right) \widehat{Q}^\pi_{i}(s',a') da' ds'
    \label{eq:mapgt}
\end{align}

The last two equations follow from the fact that $\int_{s'\in S} \eta(s')$ is a constant (equal to the length of episode in episodic case and 1 in continuous case) and $d^\pi(s') = \int_{s \in \mathcal{S}} p(s)\frac{\eta(s')}{ \int_{s'\in S} \eta(s')}$ is a stationary distribution. Define $\I^i_{\widehat{Q}} = \bigcup_{j\in \I^i_{\text{GD}}} \I^j_Q$. From the definition of $Q-$set, $Q^j(s,a)$ only depends on $s_k, a_k$ such that $k \in \I^j_Q$. Hence, \eqref{eq:mapgt} can be rewritten as
\begin{align}
   \nabla_{\theta_i} J(\theta) &=\int_{s'\in S} d^\pi(s') \int_{a'\in \A} \nabla_{\theta_i}\pi_{\theta_i}(a'_i|s'_{\I^i_O})\left( \prod_{j \in \I^i_{\widehat{Q}}\setminus i} \pi_{\theta_j}(a'_j|s_{\I^j_O}) \right) \widehat{Q}^\pi_{i}(s'_{\I^i_{\widehat{Q}}},a'_{\I^i_{\widehat{Q}}}) da' ds'\nonumber\\
   &=\int_{s'\in S} d^\pi(s') \int_{a'\in \A} \dfrac{\nabla_{\theta_i}\pi_{\theta_i}(a'_i|s'_{\I^i_O})}{\pi_{\theta_i}(a'_i|s'_{\I^i_O})} \underbrace{\left(\prod_{j \in \I^i_{\widehat{Q}}} \pi_{\theta_j}(a'_j|s'_{\I^j_O})\right)}_{\hat{\pi}^i}  \widehat{Q}^\pi_{i}(s'_{\I^i_{\widehat{Q}}},a'_{\I^i_{\widehat{Q}}}) da' ds'\nonumber\\
   &=\underset{s'\sim d^\pi(s'),a' \sim \hat{\pi}^i}{\mathbb{E}} \left[ \widehat{Q}^\pi_{i}(s'_{\I^i_{\widehat{Q}}},a'_{\I^i_{\widehat{Q}}} )\nabla_{\theta_i}\ln\pi_{\theta_i}(a'_i|s'_{\I^i_O}) \right].
\end{align}

Replacing $s'$ by $s$ and $a'$ by $a$ gives the result in Theorem~\ref{thm:pgt} b.
\end{enumerate}

\end{proof}

 \section{Schematic of the structured actor critic (MAStAC) algorithm}\label{sec:alg}
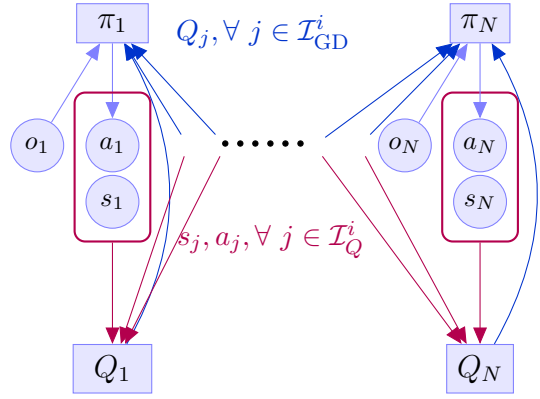
\begin{figure}[htpb]
\begin{center}
\begin{tikzpicture}[scale=1, transform shape]
    \node[rectangle,draw=blue!50, fill=blue!10,xshift = 7cm] (p1) {$~\pi_1~$};
    \node[latent,draw=blue!50,fill=blue!10, below=of p1, xshift= -1cm] (o1) {$o_1$};
    \node[latent,draw=blue!50 ,fill=blue!10,below=of p1,xshift=0cm] (a1) {$a_1$};
    \node[latent,draw=blue!50,fill=blue!10, below=of p1,xshift=0cm, yshift=-0.75cm] (s1) {$s_1$};
    \node[rectangle,draw=blue!50, fill=blue!10,below= of p1, yshift= -3 cm] (q1) {$~Q_1~$};
    \node[rectangle,draw=blue!50, fill=blue!10,right= of p1,xshift=3cm] (pN) {$\pi_N$};
    \node[latent,draw=blue!50,fill=blue!10, below=of pN, xshift= -1cm] (oN) {$o_N$};
    \node[latent,draw=blue!50 ,fill=blue!10,below=of pN,xshift=0cm] (aN) {$a_N$};
    \node[latent,draw=blue!50,fill=blue!10, below=of pN,xshift=0cm, yshift=-0.75cm] (sN) {$s_N$};
    \node[rectangle,draw=blue!50, fill=blue!10,below= of pN, yshift= -3 cm] (qN) {$Q_N$};
    \node[mark size=1pt,color=black] at (8.5,-1.6) {\pgfuseplotmark{*}};
    \node[mark size=1pt,color=black] at (8.7,-1.6) {\pgfuseplotmark{*}};
    \node[mark size=1pt,color=black] at (8.9,-1.6) {\pgfuseplotmark{*}};
    \node[mark size=1pt,color=black] at (9.1,-1.6) {\pgfuseplotmark{*}};
    \node[mark size=1pt,color=black] at (9.3,-1.6) {\pgfuseplotmark{*}};
    \node[mark size=1pt,color=black] at (9.5,-1.6) {\pgfuseplotmark{*}};
    \node[mark size=1pt,color=black] at (2.5,-1.6) {\pgfuseplotmark{}};
    \node[right=of a1] (empty) {$ $};
    \node[right=of a1,xshift= -0.5cm] (empty1) {$ $};
    \node[above=of empty,xshift= 0.5cm] (name) {\color{blue!80!green}{\small$ Q_j, \forall~j \in \I^i_{\text{GD}}$}};
    \node[below=of name,xshift= 0.1cm,yshift= -1cm] (name1) {\color{red!70!blue}{\small$ s_j,a_j,\forall~j \in \I^i_{Q}$}};
    \edge[draw= blue!50] {o1} {p1};
    \edge [draw= blue!50] {p1} {a1};
    \node[left=of aN,xshift= -0.1cm] (empty2) {$ $};
    \node[left=of aN,xshift= -0.7cm] (empty3) {$ $};
    \plate[draw=red!70!blue,  thick, yshift=0.2cm] {plate1} { %
    (a1) %
    (s1) %
  } {} 
  \plate[draw=red!70!blue,  thick, yshift=0.2cm] {plate2} { %
    (aN) %
    (sN) %
  } {} 
  \edge [draw= red!70!blue]{plate1} {q1};
  \edge [draw= red!70!blue]{plate2} {qN};
    \path [draw,->] (q1) edge [bend right,draw= blue!80!green] node [right] {} (p1);
    \edge[draw= blue!50] {oN} {pN};
    \edge [draw= blue!50] {pN} {aN};
    \edge [draw= blue!80!green]{empty} {p1};
    \edge [draw= blue!80!green]{empty1} {p1};
    \edge [draw= blue!80!green]{empty2} {pN};
    \edge [draw= blue!80!green]{empty3} {pN};
    \edge [draw= red!70!blue]{empty} {q1};
    \edge [draw= red!70!blue]{empty1} {q1};
    \edge [draw= red!70!blue]{empty2} {qN};
    \edge [draw= red!70!blue]{empty3} {qN};
    \path [draw,->] (qN) edge [bend right,draw= blue!80!green] node [right] {} (pN);
\end{tikzpicture}
\end{center}
\caption{The schematic of the multi-agent structured actor critic (MAStAC) algorithm.}
\label{fig:NN}
\end{figure}
\newpage
\section{Proof of Theorem~\ref{thm:vardiff}}
\label{sec:vardiff}

\begin{proof}
    Let 
    \begin{align*}
        \mathbf{g}^i_C &= (Q^\pi(s,a)-\dq)\nabla_{\theta_i} \ln{\pi_{\theta_i}(a_i|s_{\I^i_O})}\\
        \mathbf{g}^i_Q &= (\widehat{Q}^\pi_i(\hat{s}_i,\hat{a}_i) - \dqh)\nabla_{\theta_i} \ln{\pi_{\theta_i}(a_i|s_{\I^i_O})}
    \end{align*}
     be the centralized and the decomposed Q-estimators over the distribution $s\sim d^\pi(s),a  \sim {\pi_\theta}$, respectively. Let $\gpi = \nabla_{\theta_i} \ln{\pi_{\theta_i}(a_i|s_{\I^i_O})}.$ Then, 
     $$\gc = (Q^\pi(s,a)-\dq)\gpi,~~\gq= (\widehat{Q}^\pi_i(\hat{s}_i,\hat{a}_i) - \dqh)\gpi.$$
 Using the law of total variance, we express the difference in the variance of the $\gc$ and $\gq$ as
     \begin{align}
         &\var_{s\sim d^\pi(s),a  \sim {\pi}, \dq \sim p(\mu_Q, \sigma^2_Q)} [\gc] - \var_{s\sim d^\pi(s),a  \sim {\pi},\dqh \sim p(\mu_{\widehat{Q}}, \sigma^2_{\widehat{Q}})} [\gq] = \nonumber\\&\hspace{40pt}\left[\var_{s\sim d^\pi(s), \dq \sim p(\mu_Q, \sigma^2_Q)}\left[\mathbb{E}_{a  \sim {\pi}} [\gc]\right] + \mathbb{E}_{s\sim d^\pi(s), \dq \sim p(\mu_Q, \sigma^2_Q)}\left[\var_{a  \sim {\pi}} [\gc]\right]\right] -\nonumber\\&\hspace{40pt}
         \left[\var_{s\sim d^\pi(s),\dqh \sim p(\mu_{\widehat{Q}}, \sigma^2_{\widehat{Q}})}\left[\mathbb{E}_{a  \sim {\pi}} [\gq]\right] + \mathbb{E}_{s\sim d^\pi(s),\dqh \sim p(\mu_{\widehat{Q}}, \sigma^2_{\widehat{Q}})}\left[\var_{a  \sim {\pi}} [\gq]\right]\right].
         \nonumber\\&
         =\var_{s\sim d^\pi(s)} \left[\mathbb{E}_{\dq \sim p(\mu_{{Q}}, \sigma^2_{{Q}})}\left[\mathbb{E}_{a  \sim {\pi}} [\gc]\right]\right] + \mathbb{E}_{s\sim d^\pi(s)} \left[\var_{\dq \sim p(\mu_{{Q}}, \sigma^2_{{Q}})}\left[\mathbb{E}_{a  \sim {\pi}} [\gc]\right]\right]\nonumber\\ &\quad
         + \mathbb{E}_{s\sim d^\pi(s),\dq \sim p(\mu_{{Q}}, \sigma^2_{{Q}})}\left[\var_{a  \sim {\pi}} [\gc]\right] -\var_{s\sim d^\pi(s)} \left[\mathbb{E}_{\dqh \sim p(\mu_{\widehat{Q}}, \sigma^2_{\widehat{Q}})}\left[\mathbb{E}_{a  \sim {\pi}} [\gq]\right]\right]\nonumber\\ &\quad - \mathbb{E}_{s\sim d^\pi(s)} \left[\var_{\dqh \sim p(\mu_{\widehat{Q}}, \sigma^2_{\widehat{Q}})}\left[\mathbb{E}_{a  \sim {\pi}} [\gq]\right]\right]
         - \mathbb{E}_{s\sim d^\pi(s),\dqh \sim p(\mu_{\widehat{Q}}, \sigma^2_{\widehat{Q}})}\left[\var_{a  \sim {\pi}} [\gq]\right].
         \label{eq:gc-gq}
     \end{align}

Assuming that $S$ is compact and $|\gc|,~|\gq|$ are finite, we employ Fubini's theorem to exchange the order of integration as and when needed. Consider the first and fourth terms of~\eqref{eq:gc-gq}:
{
\begin{align}
    &\var_{s\sim d^\pi(s)} \left[\mathbb{E}_{\dq \sim p(\mu_{{Q}}, \sigma^2_{{Q}})}\left[\mathbb{E}_{a  \sim {\pi}} [\gc]\right]\right] - \var_{s\sim d^\pi(s)} \left[\mathbb{E}_{\dqh \sim p(\mu_{\widehat{Q}}, \sigma^2_{\widehat{Q}})}\left[\mathbb{E}_{a  \sim {\pi}} [\gq]\right]\right]\nonumber\\
    =&\var_{s\sim d^\pi(s)} \left[\mathbb{E}_{a  \sim {\pi}}\left[\mathbb{E}_{\dq \sim p(\mu_{{Q}}, \sigma^2_{{Q}})} [\gc]\right]\right] - \var_{s\sim d^\pi(s)} \left[\mathbb{E}_{a  \sim {\pi}}\left[\mathbb{E}_{\dqh \sim p(\mu_{\widehat{Q}}, \sigma^2_{\widehat{Q}})}[\gq]\right]\right]\nonumber\\
    =&\var_{s\sim d^\pi(s)} \bigg[\mathbb{E}_{a  \sim {\pi}}\bigg[\mathbb{E}_{\dq \sim p(\mu_{{Q}}, \sigma^2_{{Q}})} [(Q^\pi(s,a)-\dq)]\gpi\bigg]\bigg]- \nonumber\\
    &\hspace{120pt}\var_{s\sim d^\pi(s)} \bigg[\mathbb{E}_{a  \sim {\pi}}\bigg[ \mathbb{E}_{\dqh \sim p(\mu_{\widehat{Q}}, \sigma^2_{\widehat{Q}})}[(\widehat{Q}^\pi_i(\hat{s}_i,\hat{a}_i) - \dqh)]\gpi\bigg]\bigg]\nonumber\\
    =&\var_{s\sim d^\pi(s)} \bigg[\mathbb{E}_{a  \sim {\pi}}\bigg[(Q^\pi(s,a)-\mu_Q)\gpi\bigg]\bigg]- \var_{s\sim d^\pi(s)} \bigg[\mathbb{E}_{a  \sim {\pi}}\bigg[ (\widehat{Q}^\pi_i(\hat{s}_i,\hat{a}_i) - \mu_{\widehat{Q}})]\gpi\bigg]\bigg]\nonumber\\
\label{eq:gcgq1}
\end{align}
We note that
\begin{align}
     &\var_{s\sim d^\pi(s)} \bigg[\mathbb{E}_{a  \sim {\pi}}\bigg[(Q^\pi(s,a)-\mu_Q)\gpi\bigg]\bigg]= \nonumber\\
     &\text{Tr}\bigg(\mathbb{E}_{s\sim d^\pi(s)}\bigg[\mathbb{E}_{a  \sim {\pi}}[(Q^\pi(s,a)-\mu_Q)\gpi]\mathbb{E}_{a  \sim {\pi}}[(Q^\pi(s,a)-\mu_Q)(\gpi)^\intercal] \bigg] \nonumber\\
     &\hspace{70pt}- \mathbb{E}_{s\sim d^\pi(s)}\mathbb{E}_{a  \sim {\pi}}[(Q^\pi(s,a)-\mu_Q)\gpi]\mathbb{E}_{s\sim d^\pi(s)}\mathbb{E}_{a  \sim {\pi}}[(Q^\pi(s,a)-\mu_Q)(\gpi)^\intercal] \bigg)\nonumber\\
     &=\text{Tr}\bigg(\mathbb{E}_{s\sim d^\pi(s)}\bigg[\mathbb{E}_{a  \sim {\pi}}[Q^\pi(s,a)\gpi]\mathbb{E}_{a  \sim {\pi}}[Q^\pi(s,a)(\gpi)^\intercal] + \mu_Q^2 \mathbb{E}_{a  \sim {\pi}}[\gpi]\mathbb{E}_{a  \sim {\pi}}[(\gpi)^\intercal] \nonumber\\
     &\hspace{70pt}- \mu_Q \mathbb{E}_{a  \sim {\pi}}[Q^\pi(s,a)\gpi]\mathbb{E}_{a  \sim {\pi}}[(\gpi)^\intercal] - \mu_Q \mathbb{E}_{a  \sim {\pi}}[\gpi]\mathbb{E}_{a  \sim {\pi}}[Q^\pi(s,a)(\gpi)^\intercal]\bigg]\nonumber\\
     &\hspace{40pt}- \mathbb{E}_{s\sim d^\pi(s)}\mathbb{E}_{a  \sim {\pi}}[(Q^\pi(s,a)-\mu_Q)\gpi]\mathbb{E}_{s\sim d^\pi(s)}\mathbb{E}_{a  \sim {\pi}}[(Q^\pi(s,a)-\mu_Q)(\gpi)^\intercal] \bigg)\nonumber\\
    &=\text{Tr}\bigg(\mathbb{E}_{s\sim d^\pi(s)}\bigg[\mathbb{E}_{a  \sim {\pi}}[(Q^\pi(s,a))^2\gpi(\gpi)^\intercal] - \Var_{a\sim \pi}(Q^\pi(s,a)\gpi) + \mu_Q^2 \mathbb{E}_{a  \sim {\pi}}[\gpi(\gpi)^\intercal] - \mu_Q^2 \Var_{a\sim \pi}(\gpi)\nonumber\\
    &\hspace{70pt}+ \mu_Q \textbf{Cov}(Q^\pi(s,a)\gpi,(\gpi)^\intercal) - \mu_Q \mathbb{E}_{a  \sim {\pi}}[Q^\pi(s,a)\gpi (\gpi)^\intercal] + \mu_Q \textbf{Cov}(\gpi,Q^\pi(s,a)(\gpi)^\intercal) \nonumber\\
     &\hspace{40pt}- \mu_Q \mathbb{E}_{a  \sim {\pi}}[Q^\pi(s,a)\gpi (\gpi)^\intercal]\bigg]- \mathbb{E}_{s\sim d^\pi(s)}\mathbb{E}_{a  \sim {\pi}}[(Q^\pi(s,a)-\mu_Q)\gpi]\mathbb{E}_{s\sim d^\pi(s)}\mathbb{E}_{a  \sim {\pi}}[(Q^\pi(s,a)-\mu_Q)(\gpi)^\intercal] \bigg)\nonumber\\
     &=\text{Tr}\bigg(\mathbb{E}_{s\sim d^\pi(s)}\bigg[- \Var_{a\sim \pi}(Q^\pi(s,a)\gpi) - \mu_Q^2 \Var_{a\sim \pi}(\gpi) + \mu_Q \textbf{Cov}_{a\sim \pi}(Q^\pi(s,a)\gpi,(\gpi)^\intercal)\nonumber\\
     &\hspace{20pt}+ \mu_Q \textbf{Cov}_{a\sim \pi}(\gpi,Q^\pi(s,a)(\gpi)^\intercal)\bigg] + \mu_Q^2\mathbb{E}_{s\sim d^\pi(s), a  \sim {\pi}}[\gpi(\gpi)^\intercal]-\mu_Q^2\mathbb{E}_{s\sim d^\pi(s), a  \sim {\pi}}[\gpi]\mathbb{E}_{s\sim d^\pi(s), a  \sim {\pi}}[(\gpi)^\intercal]\nonumber\\
     &+ \mathbb{E}_{s\sim d^\pi(s), a  \sim {\pi}}[(Q^\pi(s,a))^2\gpi(\gpi)^\intercal] - \mathbb{E}_{s\sim d^\pi(s), a  \sim {\pi}}[Q^\pi(s,a)\gpi]\mathbb{E}_{s\sim d^\pi(s), a  \sim {\pi}}[Q^\pi(s,a)(\gpi)^\intercal]\nonumber\\
     &- \mu_Q\mathbb{E}_{s\sim d^\pi(s), a  \sim {\pi}}[Q^\pi(s,a)\gpi (\gpi)^\intercal] + \mu_Q \mathbb{E}_{s\sim d^\pi(s), a  \sim {\pi}}[Q^\pi(s,a)\gpi] \mathbb{E}_{s\sim d^\pi(s), a  \sim {\pi}}[(\gpi)^\intercal]\nonumber\\
     &- \mu_Q\mathbb{E}_{s\sim d^\pi(s), a  \sim {\pi}}[Q^\pi(s,a)\gpi (\gpi)^\intercal] + \mu_Q \mathbb{E}_{s\sim d^\pi(s), a  \sim {\pi}}[\gpi] \mathbb{E}_{s\sim d^\pi(s), a  \sim {\pi}}[Q^\pi(s,a)(\gpi)^\intercal]\bigg)\nonumber\\
      &=\text{Tr}\bigg(\mathbb{E}_{s\sim d^\pi(s)}\bigg[- \Var_{a\sim \pi}(Q^\pi(s,a)\gpi) - \mu_Q^2 \Var_{a\sim \pi}(\gpi) + \mu_Q \textbf{Cov}_{a\sim \pi}(Q^\pi(s,a)\gpi,(\gpi)^\intercal)\nonumber\\
     &\hspace{20pt}+ \mu_Q \textbf{Cov}_{a\sim \pi}(\gpi,Q^\pi(s,a)(\gpi)^\intercal)\bigg] + \Var_{s\sim d^\pi(s),a\sim \pi}(Q^\pi(s,a)\gpi) + \mu_Q^2 \Var_{s\sim d^\pi(s), a\sim \pi}(\gpi) \nonumber\\
     &\hspace{20pt}- \mu_Q \textbf{Cov}_{s\sim d^\pi(s),a\sim \pi}(Q^\pi(s,a)\gpi,(\gpi)^\intercal)- \mu_Q \textbf{Cov}_{s\sim d^\pi(s),a\sim \pi}(\gpi,Q^\pi(s,a)(\gpi)^\intercal)\bigg).
     \label{eq:gcgq1simpa}
\end{align}
Similarly,
\begin{align}
    &\var_{s\sim d^\pi(s)} \bigg[\mathbb{E}_{a  \sim {\pi}}\bigg[(\widehat{Q}^\pi_i(\hat{s}_i,\hat{a}_i)-\mu_{\widehat{Q}})\gpi\bigg]\bigg]= \nonumber\\
    &\text{Tr}\bigg(\mathbb{E}_{s\sim d^\pi(s)}\bigg[- \Var_{a\sim \pi}(\widehat{Q}^\pi_i(\hat{s}_i,\hat{a}_i)\gpi) - \mu_{\widehat{Q}}^2 \Var_{a\sim \pi}(\gpi) + \mu_{\widehat{Q}} \textbf{Cov}_{a\sim \pi}(\widehat{Q}^\pi_i(\hat{s}_i,\hat{a}_i)\gpi,(\gpi)^\intercal)\nonumber\\
     &\hspace{20pt}+ \mu_{\widehat{Q}} \textbf{Cov}_{a\sim \pi}(\gpi,\widehat{Q}^\pi_i(\hat{s}_i,\hat{a}_i)(\gpi)^\intercal)\bigg] + \Var_{s\sim d^\pi(s),a\sim \pi}(\widehat{Q}^\pi_i(\hat{s}_i,\hat{a}_i)\gpi) + \mu_{\widehat{Q}}^2 \Var_{s\sim d^\pi(s), a\sim \pi}(\gpi) \nonumber\\
     &\hspace{20pt}- \mu_{\widehat{Q}} \textbf{Cov}_{s\sim d^\pi(s),a\sim \pi}(\widehat{Q}^\pi_i(\hat{s}_i,\hat{a}_i)\gpi,(\gpi)^\intercal)- \mu_{\widehat{Q}} \textbf{Cov}_{s\sim d^\pi(s),a\sim \pi}(\gpi,\widehat{Q}^\pi_i(\hat{s}_i,\hat{a}_i)(\gpi)^\intercal)\bigg).
     \label{eq:gcgq1simpb}
\end{align}
}
    Consider
    \begin{align}
        &\var_{\dq \sim p(\mu_{{Q}}, \sigma^2_{{Q}})}\left[\mathbb{E}_{a  \sim {\pi}} [\gc]\right] =\nonumber\\&
        \tr{\mathbb{E}_{\dq \sim p(\mu_{{Q}}, \sigma^2_{{Q}})}[\mathbb{E}_{a  \sim {\pi}} [\gc](\mathbb{E}_{a  \sim {\pi}} [\gc])^\intercal] - \mathbb{E}_{\dq \sim p(\mu_{{Q}}, \sigma^2_{{Q}}), a  \sim {\pi}} [\gc](\mathbb{E}_{\dq \sim p(\mu_{{Q}}, \sigma^2_{{Q}}),a  \sim {\pi}} [\gc])^\intercal}\nonumber\\
        &=\text{Tr}\bigg(\mathbb{E}_{\dq \sim p(\mu_{{Q}}, \sigma^2_{{Q}})}\bigg[(\mathbb{E}_{a  \sim {\pi}}[Q^\pi(s,a)\gpi])(\mathbb{E}_{a  \sim {\pi}}[Q^\pi(s,a)\gpi])^\intercal + \delta^2_Q (\mathbb{E}_{a  \sim {\pi}}[\gpi])(\mathbb{E}_{a  \sim {\pi}}[\gpi])^\intercal\nonumber\\&\hspace{20pt} - 2\delta_Q (\mathbb{E}_{a  \sim {\pi}}[Q^\pi(s,a)\gpi])(\mathbb{E}_{a  \sim {\pi}}[\gpi])^\intercal\bigg] -(\mathbb{E}_{a  \sim {\pi}}[Q^\pi(s,a)\gpi])(\mathbb{E}_{a  \sim {\pi}}[Q^\pi(s,a)\gpi])^\intercal\nonumber\\
        &~~- \left(\mu^2_Q (\mathbb{E}_{a  \sim {\pi}}[\gpi])(\mathbb{E}_{a  \sim {\pi}}[\gpi])^\intercal- 2\mu_Q (\mathbb{E}_{a  \sim {\pi}}[Q^\pi(s,a)\gpi])\right)(\mathbb{E}_{a  \sim {\pi}}[\gpi])^\intercal\bigg)\nonumber\\
        &= \tr{\sigma^2_Q(\mathbb{E}_{a  \sim {\pi}} [\gpi])(\mathbb{E}_{a  \sim {\pi}} [\gpi])^\intercal}~{(\because \sigma^2_Q = \mathbb{E}_{\dq \sim p(\mu_{{Q}}, \sigma^2_{{Q}})}[\delta^2_Q] - \mu^2_Q)}.
        \label{eq:dvargc}
    \end{align}
Similarly, it is straightforward to show that 
\begin{align}
    \var_{\dqh \sim p(\mu_{\widehat{Q}}, \sigma^2_{\widehat{Q}})}\left[\mathbb{E}_{a  \sim {\pi}} [\gq]\right] &= \tr{\sigma^2_{\widehat{Q}}(\mathbb{E}_{a  \sim {\pi}} \gpi)(\mathbb{E}_{a  \sim {\pi}} \gpi)^\intercal}.
    \label{eq:dvargq}
\end{align}
Then, using \eqref{eq:dvargc} and~\eqref{eq:dvargq}, the second term and the fifth term in \eqref{eq:gc-gq} can be expressed as
\begin{align}
    \mathbb{E}_{s\sim d^\pi(s)} \left[\var_{\dq \sim p(\mu_{{Q}}, \sigma^2_{{Q}})}\left[\mathbb{E}_{a  \sim {\pi}} [\gc]\right] - \var_{\dqh \sim p(\mu_{\widehat{Q}}, \sigma^2_{\widehat{Q}})}\left[\mathbb{E}_{a  \sim {\pi}} [\gq]\right]\right]\qquad&\nonumber\\
    =(\sigma^2_{{Q}} -\sigma^2_{\widehat{Q}} )\mathbb{E}_{s\sim d^\pi(s)} \left[\tr{(\mathbb{E}_{a  \sim {\pi}} \gpi)(\mathbb{E}_{a  \sim {\pi}} \gpi)^\intercal }\right]&.
    \label{eq:gcgqsimp2}
\end{align}

For the third term in~\eqref{eq:gc-gq}, we note 
     \begin{align}
        &\mathbb{E}_{\dq \sim p(\mu_{{Q}}, \sigma^2_{{Q}})}\left[\var_{a  \sim {\pi}} [\gc]\right]=\mathbb{E}_{\dq \sim p(\mu_{{Q}}, \sigma^2_{{Q}})}\left[\tr{\mathbb{E}_{a  \sim {\pi}}\left[ \gc (\gc)^\intercal\right] - \mathbb{E}_{a  \sim {\pi}}\left[ \gc \right]\mathbb{E}\left[\gc\right]^\intercal}\right]\nonumber\\
        &= \tr{\mathbb{E}_{\dq \sim p(\mu_{{Q}}, \sigma^2_{{Q}})}\left[\mathbb{E}_{a  \sim {\pi}}\left[ \gc (\gc)^\intercal\right] - \mathbb{E}_{a  \sim {\pi}}\left[ \gc \right]\mathbb{E}\left[\gc\right]^\intercal\right]}\nonumber\\
        &= \text{Tr}\bigg(\mathbb{E}_{\dq \sim p(\mu_{{Q}}, \sigma^2_{{Q}})}\bigg[\mathbb{E}_{a  \sim {\pi}}\left[(Q^\pi(s,a) -\dq)^2\left(\gpi\right)\left(\gpi\right)^\intercal\right] \nonumber\\&\hspace{120pt}- \mathbb{E}_{a  \sim {\pi}}\left[ (Q^\pi(s,a) -\dq) \gpi\right]\mathbb{E}_{a  \sim {\pi}}\left[(Q^\pi(s,a) -\dq) \gpi\right]^\intercal\bigg]\bigg)\nonumber\\
        &= \text{Tr}\bigg(\mathbb{E}_{a  \sim {\pi}}\left[(Q^\pi(s,a))^2\left(\gpi\right)\left(\gpi\right)^\intercal\right] -2\mu_Q \mathbb{E}_{a  \sim {\pi}}\left[Q^\pi(s,a)\left(\gpi\right)\left(\gpi \right)^\intercal\right] \nonumber\\
        &\qquad+ (\sigma^2_Q +\mu^2_Q)\left(\mathbb{E}_{a  \sim {\pi}}\left[\left(\gpi \right)\left(\gpi \right)^\intercal\right] -\mathbb{E}_{a  \sim {\pi}}\left[\gpi \right]\mathbb{E}_{a  \sim {\pi}}\left[\gpi \right]^\intercal\right)\nonumber\\
        &\qquad- \mathbb{E}_{a  \sim {\pi}}\left[Q^\pi(s,a)\gpi \right]\mathbb{E}_{a  \sim {\pi}}\left[Q^\pi(s,a)\gpi \right]^\intercal
        + \mu_Q \bigg[\mathbb{E}_{a  \sim {\pi}}\left[\gpi \right]\mathbb{E}_{a  \sim {\pi}}\left[Q^\pi(s,a)\gpi \right]^\intercal \bigg] \nonumber\\
        &\hspace{50pt}+ \mu_Q \bigg[ \mathbb{E}_{a  \sim {\pi}}\left[Q^\pi(s,a)\gpi\right]\mathbb{E}_{a  \sim {\pi}}\left[\gpi\right]^\intercal\bigg]\bigg)\nonumber\\
        &= \text{Tr}\bigg(\Var_{a  \sim {\pi}}\left[Q^\pi(s,a)\gpi\right]+ (\sigma^2_Q +\mu^2_Q)\Var_{a  \sim {\pi}}\left[\gpi\right]\nonumber\\
        &\hspace{50pt} -\mu_Q \textbf{Cov}_{a  \sim {\pi}}\left[Q^\pi(s,a)\gpi,\left(\gpi\right)^\intercal\right] -\mu_Q \textbf{Cov}_{a  \sim {\pi}}\left[\gpi,\left(Q^\pi(s,a)\gpi\right)^\intercal\right] \bigg).
        \label{eq:avarc}
     \end{align}
Similarly, for the sixth term in~\eqref{eq:gc-gq}, it follows that
\begin{align}
    &\mathbb{E}_{\dqh \sim p(\mu_{\widehat{Q}}, \sigma^2_{\widehat{Q}})}\left[\var_{a  \sim {\pi}} [\gq]\right]=\nonumber\\
    &= \text{Tr}\bigg(\Var_{a  \sim {\pi}}\left[\widehat{Q}^\pi(s,a)\gpi\right]+ (\sigma^2_{\widehat{Q}} +\mu^2_{\widehat{Q}})\Var_{a  \sim {\pi}}\left[\gpi\right]\nonumber\\
        &\hspace{50pt} -\mu_{\widehat{Q}} \textbf{Cov}_{a  \sim {\pi}}\left[\widehat{Q}^\pi(s,a)\gpi,\left(\gpi\right)^\intercal\right] -\mu_{\widehat{Q}} \textbf{Cov}_{a  \sim {\pi}}\left[\gpi,\left(\widehat{Q}^\pi(s,a)\gpi\right)^\intercal\right] \bigg).
        \label{eq:avarq}
\end{align}
Using \eqref{eq:avarc} and~\eqref{eq:avarq}, we obtain
\begin{align}
    &\mathbb{E}_{\dq \sim p(\mu_{{Q}}, \sigma^2_{{Q}})}\left[\var_{a  \sim {\pi}} [\gc]\right] - \mathbb{E}_{\dqh \sim p(\mu_{\widehat{Q}}, \sigma^2_{\widehat{Q}})}\left[\var_{a  \sim {\pi}} [\gq]\right]=\nonumber\\
    &\text{Tr}\bigg(\Var_{a  \sim {\pi}}\left[Q^\pi(s,a)\gpi\right] - \Var_{a  \sim {\pi}}\left[\widehat{Q}(s,a)\gpi\right]\bigg)+\text{Tr}\bigg((\sigma^2_Q  -\sigma^2_{\widehat{Q}} + \mu^2_Q -\mu^2_{\widehat{Q}})\Var_{a  \sim {\pi}}\left[\gpi\right]\bigg)\nonumber\\
    &-\text{Tr}\bigg(\mu_Q \textbf{Cov}_{a  \sim {\pi}}\left[Q^\pi(s,a)\gpi,\left(\gpi\right)^\intercal\right]- \mu_{\widehat{Q}}\textbf{Cov}_{a  \sim {\pi}}\left[\widehat{Q}^\pi(s,a)\gpi,\left(\gpi\right)^\intercal\right]\bigg)\nonumber\\
        & -\text{Tr}\bigg(\mu_Q \textbf{Cov}_{a  \sim {\pi}}\left[\gpi,\left(Q^\pi(s,a)\gpi\right)^\intercal\right]- \mu_{\widehat{Q}} \textbf{Cov}_{a  \sim {\pi}}\left[\gpi,\left(\widehat{Q}^\pi(s,a)\gpi\right)^\intercal\right]\bigg).
        \label{eq:gcgqsimp3}
\end{align}
{
Substituting~\eqref{eq:gcgq1simpa},~\eqref{eq:gcgq1simpb},~\eqref{eq:gcgqsimp2}, and~\eqref{eq:gcgqsimp3} into~\eqref{eq:gc-gq} yields
\begin{align}
         &\var_{s\sim d^\pi(s),a  \sim {\pi}, \dq \sim p(\mu_Q, \sigma^2_Q)} [\gc] - \var_{s\sim d^\pi(s),a  \sim {\pi},\dqh \sim p(\mu_{\widehat{Q}}, \sigma^2_{\widehat{Q}})} [\gq] = \nonumber\\&
         \text{Tr}\bigg(\mathbb{E}_{s\sim d^\pi(s)}\bigg[ \cancel{\Var_{a\sim \pi}(\widehat{Q}^\pi_i(\hat{s}_i,\hat{a}_i)\gpi)}- \cancel{\Var_{a\sim \pi}(Q^\pi(s,a)\gpi)} - \cancel{\mu_Q^2 \Var_{a\sim \pi}(\gpi)} + \cancel{\mu_{\widehat{Q}}^2 \Var_{a\sim \pi}(\gpi)}\nonumber\\
         &+ \cancel{\mu_Q \textbf{Cov}_{a\sim \pi}(Q^\pi(s,a)\gpi,(\gpi)^\intercal)} - \cancel{\mu_{\widehat{Q}} \textbf{Cov}_{a\sim \pi}(\widehat{Q}^\pi_i(\hat{s}_i,\hat{a}_i)\gpi,(\gpi)^\intercal)}+ \cancel{\mu_Q \textbf{Cov}_{a\sim \pi}(\gpi,Q^\pi(s,a)(\gpi)^\intercal)} \nonumber\\
     &-  \cancel{\mu_{\widehat{Q}} \textbf{Cov}_{a\sim \pi}(\gpi,\widehat{Q}^\pi_i(\hat{s}_i,\hat{a}_i)(\gpi)^\intercal)}\bigg] + \Var_{s\sim d^\pi(s),a\sim \pi}(Q^\pi(s,a)\gpi)  - \Var_{s\sim d^\pi(s),a\sim \pi}(\widehat{Q}^\pi_i(\hat{s}_i,\hat{a}_i)\gpi)\nonumber\\
     &+ \mu_Q^2 \Var_{s\sim d^\pi(s), a\sim \pi}(\gpi) - \mu_{\widehat{Q}}^2 \Var_{s\sim d^\pi(s), a\sim \pi}(\gpi) \nonumber\\
     &- \mu_Q \textbf{Cov}_{s\sim d^\pi(s),a\sim \pi}(Q^\pi(s,a)\gpi,(\gpi)^\intercal)+ \mu_{\widehat{Q}} \textbf{Cov}_{s\sim d^\pi(s),a\sim \pi}(\widehat{Q}^\pi_i(\hat{s}_i,\hat{a}_i)\gpi,(\gpi)^\intercal)\nonumber\\
     &- \mu_Q \textbf{Cov}_{s\sim d^\pi(s),a\sim \pi}(\gpi,Q^\pi(s,a)(\gpi)^\intercal) + \mu_{\widehat{Q}} \textbf{Cov}_{s\sim d^\pi(s),a\sim \pi}(\gpi,\widehat{Q}^\pi_i(\hat{s}_i,\hat{a}_i)(\gpi)^\intercal)\bigg)\bigg)\nonumber\\
         &+(\sigma^2_{{Q}} -\sigma^2_{\widehat{Q}} )\mathbb{E}_{s\sim d^\pi(s)} \left[\tr{(\mathbb{E}_{a  \sim {\pi}} \gpi)(\mathbb{E}_{a  \sim {\pi}} \gpi)^\intercal }\right] + \nonumber\\
         &\mathbb{E}_{s\sim d^\pi(s)} \bigg[\text{Tr}\bigg(\cancel{\Var_{a  \sim {\pi}}\left[Q(s,a)\gpi\right] }- \cancel{\Var_{a  \sim {\pi}}\left[\widehat{Q}(s,a)\gpi\right]}\bigg)\nonumber\\
         &+\text{Tr}\bigg((\sigma^2_Q  -\sigma^2_{\widehat{Q}} + \cancel{\mu^2_Q }-\cancel{\mu^2_{\widehat{Q}}})\Var_{a  \sim {\pi}}\left[\gpi\right]\bigg)\nonumber\\
    &-\text{Tr}\bigg(\cancel{\mu_Q \textbf{Cov}_{a  \sim {\pi}}\left[Q(s,a)\gpi,\left(\gpi\right)^\intercal\right]}- \cancel{\mu_{\widehat{Q}}\textbf{Cov}_{a  \sim {\pi}}\left[\widehat{Q}(s,a)\gpi,\left(\gpi\right)^\intercal\right]}\bigg)\nonumber\\
        & -\text{Tr}\bigg(\cancel{\mu_Q \textbf{Cov}_{a  \sim {\pi}}\left[\gpi,\left(Q(s,a)\gpi\right)^\intercal\right]}- \cancel{\mu_{\widehat{Q}} \textbf{Cov}_{a  \sim {\pi}}\left[\gpi,\left(\widehat{Q}(s,a)\gpi\right)^\intercal\right]}\bigg)\bigg]\nonumber\\
\end{align}
\begin{align}
        &=(\sigma^2_{{Q}} -\sigma^2_{\widehat{Q}} )\mathbb{E}_{s\sim d^\pi(s)} \left[\tr{(\mathbb{E}_{a  \sim {\pi}} \gpi)(\mathbb{E}_{a  \sim {\pi}} \gpi)^\intercal }\right] +\text{Tr}\bigg((\sigma^2_Q  -\sigma^2_{\widehat{Q}})\Var_{a  \sim {\pi}}\left[\gpi\right]\bigg)+\nonumber\\
        &\text{Tr}\bigg(\Var_{s\sim d^\pi(s),a\sim \pi}(Q^\pi(s,a)\gpi)  - \Var_{s\sim d^\pi(s),a\sim \pi}(\widehat{Q}^\pi_i(\hat{s}_i,\hat{a}_i)\gpi)\nonumber\\
     &+ \mu_Q^2 \Var_{s\sim d^\pi(s), a\sim \pi}(\gpi) - \mu_{\widehat{Q}}^2 \Var_{s\sim d^\pi(s), a\sim \pi}(\gpi) \nonumber\\
     &- \mu_Q \textbf{Cov}_{s\sim d^\pi(s),a\sim \pi}(Q^\pi(s,a)\gpi,(\gpi)^\intercal)+ \mu_{\widehat{Q}} \textbf{Cov}_{s\sim d^\pi(s),a\sim \pi}(\widehat{Q}^\pi_i(\hat{s}_i,\hat{a}_i)\gpi,(\gpi)^\intercal)\nonumber\\
     &- \mu_Q \textbf{Cov}_{s\sim d^\pi(s),a\sim \pi}(\gpi,Q^\pi(s,a)(\gpi)^\intercal) + \mu_{\widehat{Q}} \textbf{Cov}_{s\sim d^\pi(s),a\sim \pi}(\gpi,\widehat{Q}^\pi_i(\hat{s}_i,\hat{a}_i)(\gpi)^\intercal)\bigg)\bigg).
         \label{eq:gc-gqsimp}
     \end{align}
     }
 \noindent
If $\mu_Q = \mu_{\widehat{Q}} = 0$, then
\begin{align}
    &\var_{s\sim d^\pi(s),a  \sim {\pi}, \dq \sim p(\mu_Q, \sigma^2_Q)} [\gc] - \var_{s\sim d^\pi(s),a  \sim {\pi},\dqh \sim p(\mu_{\widehat{Q}}, \sigma^2_{\widehat{Q}})} [\gq] = \nonumber\\
    & (\sigma^2_{{Q}} -\sigma^2_{\widehat{Q}} )\mathbb{E}_{s\sim d^\pi(s)} \left[\tr{(\mathbb{E}_{a  \sim {\pi}} \gpi)(\mathbb{E}_{a  \sim {\pi}} \gpi)^\intercal }\right] + \nonumber\\
         &\text{Tr}\bigg(\Var_{{s\sim d^\pi(s),}a  \sim {\pi}}\left[Q^\pi(s,a)\gpi\right] - \Var_{{s\sim d^\pi(s),}a  \sim {\pi}}\left[\widehat{Q}^\pi(s,a)\gpi\right]\bigg)+\mathbb{E}_{s\sim d^\pi(s)} \bigg[\text{Tr}\bigg((\sigma^2_Q  -\sigma^2_{\widehat{Q}})\Var_{a  \sim {\pi}}\left[\gpi\right]\bigg)\bigg]\nonumber\\
         &=\text{Tr}\bigg(\Var_{{s\sim d^\pi(s),}a  \sim {\pi}}\left[Q^\pi(s,a)\gpi\right] - \Var_{{s\sim d^\pi(s),}a  \sim {\pi}}\left[\widehat{Q}^\pi(s,a)\gpi\right]\bigg) \nonumber\\
         &\hspace{100pt}+(\sigma^2_Q  -\sigma^2_{\widehat{Q}})\text{Tr}\bigg(\mathbb{E}_{s\sim d^\pi(s)} \bigg[\mathbb{E}_{a  \sim {\pi}}\left[\left(\gpi \right)\left(\gpi \right)^\intercal\right]\bigg]\bigg).
\end{align}
{
We note that
 \begin{align*}
     &\mathbb{E}_{s\sim d^\pi(s)}\bigg[\mathbb{E}_{a  \sim {\pi}}\bigg[\left[Q^\pi(s,a)- \widehat{Q}^\pi_i(\hat{s}_i,\hat{a}_i)\right]\gpi\bigg] \bigg]=\nonumber\\ &\int_{s\sim d^\pi(s)} d^\pi(s)\int_{a_{\mathcal{V}\setminus i}} \prod_{j \in \mathcal{V}\setminus i} \pi_{\theta_j}(a_j|s_{\I^j_O})\left[\bar{Q}^\pi(s,a)\right]\int_{a_i \sim \pi_{\theta_i}}  \pi_{\theta_i}(a_i|s_{\I^i_O})\nabla_{\theta_i} \ln{\pi_{\theta_i}(a_i|s_{\I^i_O})} da ds\\&\hspace{50pt}\text{ [$\because$, $\bar{Q}(\cdot)$ $\perp$ $a_i$ from Theorem~\ref{thm:graddecomp}]}\\
     &= \int_{s\sim d^\pi(s)} d^\pi(s)\int_{a_{\mathcal{V}\setminus i}} \prod_{j \in \mathcal{V}\setminus i} \pi_{\theta_j}(a_j|s_{\I^j_O})\left[\bar{Q}^\pi(s,a)\right]\int_{a_i \sim \pi_{\theta_i}}  \nabla_{\theta_i} \pi_{\theta_i}(a_i|s_{\I^i_O}) da ds.
\end{align*}
Assuming that $\pi_{\theta_i}(\cdot)$ is sufficiently smooth, we interchange the derivative and integral to obtain
\begin{align*}
&\hspace{-100pt}= \int_{s\sim d^\pi(s)} d^\pi(s) \int_{a_{\mathcal{V}\setminus i}} \prod_{j \in \mathcal{V}\setminus i} \pi_{\theta_j}(a_j|s_{\I^j_O})\left[\bar{Q}^\pi(s,a)\right]\nabla_{\theta_i}[1] da_{\mathcal{V}\setminus i}ds = 0.
 \end{align*}
 Therefore, 
 \begin{align}
 \mathbb{E}_{s\sim d^\pi(s)}\bigg[\mathbb{E}_{a  \sim {\pi}}\bigg[Q^\pi(s,a)\gpi\bigg] \bigg] = \mathbb{E}_{s\sim d^\pi(s)}\bigg[\mathbb{E}_{a  \sim {\pi}}\bigg[ \widehat{Q}^\pi_i(\hat{s}_i,\hat{a}_i)\gpi\bigg] \bigg],
 \label{eq:QeqQhat}
 \end{align} which means
\begin{align}
&\var_{s\sim d^\pi(s),a  \sim {\pi}, \dq \sim p(\mu_Q, \sigma^2_Q)} [\gc] - \var_{s\sim d^\pi(s),a  \sim {\pi},\dqh \sim p(\mu_{\widehat{Q}}, \sigma^2_{\widehat{Q}})} [\gq] = \nonumber\\
       &=\mathbb{E}_{s\sim d^\pi(s)}\bigg[(\sigma^2_Q  -\sigma^2_{\widehat{Q}})\text{Tr}\bigg(\mathbb{E}_{a  \sim {\pi}}\left[\left(\gpi \right)\left(\gpi \right)^\intercal\right]\bigg)\bigg] + \tr{\mathbb{E}_{s\sim d^\pi(s)} \mathbb{E}_{a  \sim {\pi}}\left[ [(Q^\pi(s,a))^2 \gpi (\gpi)^\intercal\right]} \nonumber\\
&- \tr{\mathbb{E}_{s\sim d^\pi(s)}\mathbb{E}_{a  \sim {\pi}}\left[ Q^\pi(s,a) \gpi\right]\mathbb{E}_{s\sim d^\pi(s)}\mathbb{E}_{a  \sim {\pi}}\left[Q^\pi(s,a)\gpi\right]^\intercal} -\tr{\mathbb{E}_{s\sim d^\pi(s)} \mathbb{E}_{a  \sim {\pi}}\left[ [(\widehat{Q}^\pi_i(\hat{s}_i,\hat{a}_i))^2 \gpi (\gpi)^\intercal\right]} \nonumber\\
&+ \tr{\mathbb{E}_{s\sim d^\pi(s)}\mathbb{E}_{a  \sim {\pi}}\left[ \widehat{Q}^\pi_i(\hat{s}_i,\hat{a}_i)\gpi\right]\mathbb{E}_{s\sim d^\pi(s)}\mathbb{E}_{a  \sim {\pi}}\left[\widehat{Q}^\pi_i(\hat{s}_i,\hat{a}_i) \gpi\right]^\intercal}\nonumber\\
        &= \tr{\mathbb{E}_{s\sim d^\pi(s)}\mathbb{E}_{a  \sim {\pi}}\left[ \left(\left(Q^\pi(s,a)\right)^2 -\left(\widehat{Q}^\pi_i(\hat{s}_i,\hat{a}_i)\right)^2\right)\gpi \left(\gpi\right)^\intercal\right]}\nonumber\\
&\hspace{100pt}+\mathbb{E}_{s\sim d^\pi(s)} \bigg[(\sigma^2_Q  -\sigma^2_{\widehat{Q}})\text{Tr}\bigg(\mathbb{E}_{a  \sim {\pi}}\left[\left(\gpi \right)\left(\gpi \right)^\intercal\right]\bigg)\bigg] ~(\text{from}~\eqref{eq:QeqQhat}).
         \label{eq:case1int}
\end{align}
}
Define $\bar{s}_i,~\bar{a}_i$ such that $\hat{s}_i \cup \bar{s}_i = s$, $\hat{a}_i \cup \bar{a}_i = a$ and $\hat{s}_i \cap \bar{s}_i = \emptyset$, $\hat{a}_i \cap \bar{a}_i = \emptyset.$ Since $\widehat{Q}^\pi_i(\hat{s}_i,\hat{a}_i) = \widehat{Q}^\pi_i(s,\hat{a}_i) = \mathbb{E}_{\bar{a}_i \sim \bar{\pi}} Q^\pi(s,a)$ (due to Theorem~\ref{thm:Qsetdecomp}), \eqref{eq:case1int} can be rewritten as
{
\begin{align}
&=\tr{\mathbb{E}_{s\sim d^\pi(s),\hat{a}_i  \sim \hat{\pi}_i}\left[ \gpi \left(\gpi\right)^\intercal \mathbb{E}_{\bar{a}_i  \sim \bar{\pi}_i}\left[\left(Q^\pi(s,a)\right)^2 -\left(\widehat{Q}^\pi_i(\hat{s}_i,\hat{a}_i)\right)^2\right]\right]}\nonumber\\
&\hspace{170pt} +\mathbb{E}_{s\sim d^\pi(s)} \bigg[(\sigma^2_Q  -\sigma^2_{\widehat{Q}})\text{Tr}\bigg(\mathbb{E}_{a  \sim {\pi}}\left[\left(\gpi \right)\left(\gpi \right)^\intercal\right]\bigg)\bigg]\nonumber\\
       &=\tr{\mathbb{E}_{s\sim d^\pi(s),\hat{a}_i  \sim \hat{\pi}_i}\left[ \gpi \left(\gpi\right)^\intercal \mathbb{E}_{\bar{a}_i  \sim \bar{\pi}_i}\left[\left(Q^\pi(s,a)\right)^2 -2\left(\widehat{Q}^\pi_i(\hat{s}_i,\hat{a}_i)\right)^2+ \left(\widehat{Q}^\pi_i(\hat{s}_i,\hat{a}_i)\right)^2\right]\right]}\nonumber\\
&\hspace{170pt} +\mathbb{E}_{s\sim d^\pi(s)} \bigg[(\sigma^2_Q  -\sigma^2_{\widehat{Q}})\text{Tr}\bigg(\mathbb{E}_{a  \sim {\pi}}\left[\left(\gpi \right)\left(\gpi \right)^\intercal\right]\bigg)\bigg]\nonumber\\
&=\tr{\mathbb{E}_{s\sim d^\pi(s),\hat{a}_i  \sim \hat{\pi}_i}\left[ \gpi \left(\gpi\right)^\intercal \mathbb{E}_{\bar{a}_i  \sim \bar{\pi}_i}\left[\left(Q^\pi(s,a)\right)^2 -2\left(Q^\pi(s,a)\right)\left(\widehat{Q}^\pi_i(\hat{s}_i,\hat{a}_i)\right)+ \left(\widehat{Q}^\pi_i(\hat{s}_i,\hat{a}_i)\right)^2\right]\right]}\nonumber\\
&\hspace{170pt} +\mathbb{E}_{s\sim d^\pi(s)} \bigg[(\sigma^2_Q  -\sigma^2_{\widehat{Q}})\text{Tr}\bigg(\mathbb{E}_{a  \sim {\pi}}\left[\left(\gpi \right)\left(\gpi \right)^\intercal\right]\bigg)\bigg]\nonumber\\
&\left[\because \mathbb{E}_{\bar{a}_i \sim \bar{\pi}}[Q(s,a)] = \widehat{Q}(s, \hat{a}_i)\right]\nonumber\\
    &= \tr{\mathbb{E}_{s\sim d^\pi(s),\hat{a}_i  \sim \hat{\pi}_i}\left[ \gpi \left(\gpi\right)^\intercal \mathbb{E}_{\bar{a}_i  \sim \bar{\pi}_i}\left[\left(Q^\pi(s,a) -\widehat{Q}^\pi_i(\hat{s}_i,\hat{a}_i)\right)^2\right]\right]}\nonumber\\
&\hspace{170pt} +(\sigma^2_Q  -\sigma^2_{\widehat{Q}})\mathbb{E}_{s\sim d^\pi(s)} \bigg[\text{Tr}\bigg(\mathbb{E}_{a  \sim {\pi}}\left[\left(\gpi \right)\left(\gpi \right)^\intercal\right]\bigg)\bigg]\nonumber \\
    &= \tr{\mathbb{E}_{s\sim d^\pi(s),{a}  \sim {\pi}}\left[ \left(Q^\pi(s,a) -\widehat{Q}^\pi_i(\hat{s}_i,\hat{a}_i)\right)^2 \gpi \left(\gpi\right)^\intercal\right]}\nonumber\\
&\hspace{170pt} +(\sigma^2_Q  -\sigma^2_{\widehat{Q}})\mathbb{E}_{s\sim d^\pi(s)} \bigg[\text{Tr}\bigg(\mathbb{E}_{a  \sim {\pi}}\left[\left(\gpi \right)\left(\gpi \right)^\intercal\right]\bigg)\bigg]\nonumber \end{align}
       \begin{align}
    &= \mathbb{E}_{s\sim d^\pi(s),{a}  \sim {\pi}}\left[ \left(Q^\pi(s,a) -\widehat{Q}^\pi_i(\hat{s}_i,\hat{a}_i)\right)^2 ||\gpi||^2\right]\nonumber\\
&\hspace{170pt} +(\sigma^2_Q  -\sigma^2_{\widehat{Q}})\mathbb{E}_{s\sim d^\pi(s)} \bigg[\mathbb{E}_{a  \sim {\pi}}\left[||\gpi ||^2\right]\bigg]\nonumber\\
    &= \mathbb{E}_{s\sim d^\pi(s),{a}  \sim {\pi}}\left[ \left(A_{\mathcal{V}\setminus \I^i_{\widehat{Q}}}(s,\hat{a}_i,\bar{a}_i)\right)^2 ||\gpi||^2\right]\bigg]+(\sigma^2_Q  -\sigma^2_{\widehat{Q}})\mathbb{E}_{s\sim d^\pi(s)} \bigg[\mathbb{E}_{a  \sim {\pi}}\left[||\gpi ||^2\right]\bigg].
    \label{eq:diffeq}
    \end{align}
}
\noindent
Upper bounding $||\gpi||$ in the first term by $M_i \!= \!\underset{s,a}{\text{sup }} ||\gpi||$ 
yields
\begin{align}
&\mathbb{E}_{s\sim d^\pi(s)} \bigg[\mathbb{E}_{{a}  \sim {\pi}}\left[ \left(A_{\mathcal{V}\setminus \I^i_{\widehat{Q}}}(s,\hat{a}_i,\bar{a}_i)\right)^2 ||\gpi||^2\right]\bigg]\leq M^2_i~\mathbb{E}_{s\sim d^\pi(s)} \bigg[\mathbb{E}_{{a}  \sim {\pi}}\left[ \left(A_{\mathcal{V}\setminus \I^i_{\widehat{Q}}}(s,\hat{a}_i,\bar{a}_i)\right)^2\right]\bigg]\\ 
    &=  M^2_i~\mathbb{E}_{s\sim d^\pi(s)} \bigg[\mathbb{E}_{\hat{a}_i  \sim \hat{\pi}_i}\left[ \var_{\bar{a}_i \sim \bar{\pi}_i} \left(A_{\mathcal{V}\setminus \I^i_{\widehat{Q}}}(s,\hat{a}_i,\bar{a}_i)\right)\right]\bigg]\nonumber\\
    &\leq M^2_i~\mathbb{E}_{s\sim d^\pi(s)} \left[\mathbb{E}_{\hat{a}_i  \sim \hat{\pi}_i}\left[ \sum_{j \in \mathcal{V}\setminus \I^i_{\widehat{Q}}} \var_{\bar{a}_i \sim \bar{\pi}_i} \left(A_j(s,a_{-j},{a}_j)\right)\right]\right]\text{[Using Lemma 3,~\cite{kuba2021settling}]}\nonumber\\
    &= M^2_i~\mathbb{E}_{s\sim d^\pi(s)} \left[\sum_{j \in \mathcal{V}\setminus \I^i_{\widehat{Q}}}\mathbb{E}_{\hat{a}_i  \sim \hat{\pi}_i}\left[  \mathbb{E}_{\bar{a}_i \sim \bar{\pi}_i} \left(A_j(s,a_{-j},{a}_j)^2\right)\right]\right].
    \label{eq:case2final}
\end{align}
Assumption~\ref{assume:boundedr} implies that the advantage function is bounded. Therefore, by the completeness axiom, $\epsilon_i = \underset{s,a}{\text{sup }} |A_i(s,a_{-i},a_i)|$ exists.
Thus,
\begin{align}
    M^2_i\mathbb{E}_{s\sim d^\pi(s)} \bigg[\mathbb{E}_{{a}  \sim {\pi}}\left[ \left(A_{\mathcal{V}\setminus \I^i_{\widehat{Q}}}(s,\hat{a}_i,\bar{a}_i)\right)^2\right]\bigg] &\leq M^2_i\mathbb{E}_{s\sim d^\pi(s)} \left[\sum_{j \in \mathcal{V}\setminus \I^i_{\widehat{Q}}}\mathbb{E}_{a  \sim \pi}\left[A_j(s,a_{-j},{a}_j)^2\right]\right]\nonumber\\
    &\leq M^2_i ~\mathbb{E}_{s\sim d^\pi(s)} \left[\sum_{j \in \mathcal{V}\setminus \I^i_{\widehat{Q}}}\epsilon^2_j\right].
    \label{eq:upperbound}
\end{align}
Therefore, from~\eqref{eq:case2final} and~\eqref{eq:upperbound} we obtain 
\begin{align}
    \mathbb{E}_{s\sim d^\pi(s)} \bigg[\mathbb{E}_{{a}  \sim {\pi}}\left[ \left(A_{\mathcal{V}\setminus \I^i_{\widehat{Q}}}(s,\hat{a}_i,\bar{a}_i)\right)^2 ||\gpi||^2\right] &\leq M^2_i \mathbb{E}_{s\sim d^\pi(s)} \left[\sum_{j \in \mathcal{V}\setminus \I^i_{\widehat{Q}}}\epsilon^2_j\right].
    \label{eq:upperbound1}
\end{align}

To characterize the lower bound we note that from the definition of multi-agent advantage function and~\eqref{eq:QbarQhat}, $A_{\mathcal{V}\setminus \I^i_{\widehat{Q}}}(s,\hat{a}_i,\bar{a}_i) = \bar{Q}_i(s,\bar{a}_i)$ and $\bar{Q}_i(\cdot) $ is independent of $a_i$ from Theorem~\ref{thm:graddecomp}. Hence, we rewrite the first term in~\eqref{eq:diffeq}  as
\begin{align}
    \mathbb{E}_{s\sim d^\pi(s)} \bigg[\mathbb{E}_{{a}  \sim {\pi}}\left[ \left(A_{\mathcal{V}\setminus \I^i_{\widehat{Q}}}(s,\hat{a}_i,\bar{a}_i)\right)^2 ||\gpi||^2\right]\bigg] &= \mathbb{E}_{s\sim d^\pi(s)} \left[\mathbb{E}_{a_{-i}  \sim \pi_{-i}} \left[ \left(\bar{Q}^\pi_i(s,\bar{a}_i)\right)^2 \mathbb{E}_{a_{i}  \sim \pi_{i}} ||\gpi||^2\right]\right].
\end{align}
Let $N_i = \underset{s,a_i}{\text{inf }}\mathbb{E}_{a_{i}  \sim \pi_{i}} ||\gpi||^2$. Then
\begin{align}
    &\mathbb{E}_{s\sim d^\pi(s)} \left[\mathbb{E}_{a_{-i}  \sim \pi_{-i}} \left[ \left(\bar{Q}^\pi_i(s,\bar{a}_i)\right)^2 \mathbb{E}_{a_{i}  \sim \pi_{i}} ||\gpi||^2\right]\right] \geq N^2_i ~\mathbb{E}_{s\sim d^\pi(s)} \left[\mathbb{E}_{a_{-i}  \sim \pi_{-i}} \left[ \left(\bar{Q}^\pi_i(s,\bar{a}_i)\right)^2\right]\right]\nonumber\\
    &=  N^2_i ~\mathbb{E}_{s\sim d^\pi(s)} \left[\mathbb{E}_{a_{-i}  \sim \pi_{-i}} \left[ \left(\sum_{j\in \mathcal{V}\setminus \I^i_{\widehat{Q}}}{Q}^\pi_j(s,{a}_{\I^j_Q})\right)^2\right]\right].~\text{(from~\eqref{eq:QbarQhat})}
    \label{eq:lowerbound}
\end{align}
Therefore, from~\eqref{eq:upperbound1} and~\eqref{eq:lowerbound} we conclude that
\begin{align}
   & ~\mathbb{E}_{s\sim d^\pi(s)} \left[N^2_i\mathbb{E}_{a_{-i}  \sim \pi_{-i}} \left[ \left(\sum_{j\in \mathcal{V}\setminus \I^i_{\widehat{Q}}}{Q}^\pi_j(s,{a}_{\I^j_Q})\right)^2\right]+(\sigma^2_Q  -\sigma^2_{\widehat{Q}})\mathbb{E}_{a  \sim {\pi}}\left[||\gpi ||^2\right]\right]\leq \nonumber\\&\hspace{100pt}\var_{s\sim d^\pi(s),a  \sim {\pi}, \dq \sim p(\mu_Q, \sigma^2_Q)} [\gc] - \var_{s\sim d^\pi(s),a  \sim {\pi},\dqh \sim p(\mu_{\widehat{Q}}, \sigma^2_{\widehat{Q}})} [\gq]  \nonumber\\&\hspace{100pt}\leq\mathbb{E}_{s\sim d^\pi(s)} \left[ M^2_i \sum_{j \in \mathcal{V}\setminus \I^i_{\widehat{Q}}}\left(\epsilon_j\right)^2 +(\sigma^2_Q  -\sigma^2_{\widehat{Q}})\mathbb{E}_{a  \sim {\pi}}\left[||\gpi ||^2\right]\right]. 
   \label{eq:bound}
\end{align}

The upper bound implies that if $\sigma^2_Q \geq \sigma^2_{\widehat{Q}}$,  the variance of $\gq$ decreases linearly with the increase of the number of agents in $\mathcal{V}\setminus \I^i_{\widehat{Q}}$. A similar upper bound was obtained in~\cite{kuba2021settling} for a decentralized policy gradient estimator. However, to the best of our knowledge this is the first attempt to analyze the lower bound of the total variance difference. If $N_i \neq 0$ and $\sigma_Q \geq \sigma_{\widehat{Q}},$ then the lower bound in~\eqref{eq:bound} is non-zero which means that $\var[\gq]$ is strictly less than $\var[\gc]$, and the lower bound increases as the variance of $\sum Q_j(\cdot)~\forall~j\in \mathcal{V}\setminus \I^i_{\widehat{Q}}$ increases. For example, when all the agents take exploratory actions, then the variance of $\gc$ is greater than the variance of $\gq$ owing to the decomposition in Theorem~\ref{thm:Qsetdecomp},~\ref{thm:graddecomp}. This signifies the effect of the proposed approach in achieving variance reduction compared to the centralized policy gradient estimator.
\end{proof}

\section{Warehouse resource allocation problem}
\label{sec:warehouse}

Consider a group of $N$-warehouses that can consume, observe and transfer resources among each other based on a set of pre-determined graphs. Given a state graph $\mathcal{G}_S = \{\mathcal{V}, \mathcal{E}_S\}$, an observation graph $\mathcal{G}_O = \{\mathcal{V}, \mathcal{E}_O\}$, and a reward graph $\mathcal{G}_R = \{\mathcal{V}, \mathcal{E}_R\}$, the goal of the resource allocation problem is to find an optimal allocation strategy that guarantees adequate resources for each warehouse to meet its local demand. Each warehouse is denoted by a vertex in the graph. At time $t$, each warehouse $i \in \mathcal{V}$ stores resources of amount $m_i(t) \in \mathbb{R}$, receives a local demand $d_i(t) \in \mathbb{R}$, a local supply of $y_i(t) \in \mathbb{R}$, , sends a partial amount of resources to its out-neighbors $\mathcal{N}^{\text{out},i}_S$ in $\mathcal{G}_S$, and receives a partial amount of resources from its in-neighbors $ \mathcal{N}^{\text{in},i}_S$ in $\mathcal{G}_S$. Let $z_i(t) = y_i(t) - d_i(t)$, then the agent $i$ follows the dynamics
\begin{align}
    m_i(t+1) &= m_i(t) - \sum_{j \in \mathcal{N}^{out,i}_S} \alpha_i(t) b_{ij}(o_i(t))m_i(t) + \sum_{j \in \mathcal{N}^{in,i}_S} \alpha_j(t) b_{ji}(o_j(t))m_j(t) + z_i(t),\nonumber\\
    z_i(t) &= A_i \sin(\omega_i t + \phi) + \omega_i,
\end{align}
where $\alpha_i(t) = \begin{cases}
    1& \text{ , if $m_i(t) \geq 0$}\\ 0& \text{ , otherwise}
\end{cases} $ is the indicator function that allows distribution of resources only if $m_i(t) \geq 0$ $\forall ~ i$, $b_{ij}(o_i(t)) \in [0,1]$ is the fraction of resources that agent $i$ sends to agent $j$ at time $t$, $0 < A_i < m_i(0)$ is a constant, $\omega_i$ is a bounded random quantity $\forall$ $i$, and $\phi$ is a positive scalar. We define the state $s_i(t) = (m_i(t), z_i(t))^\intercal$, action $a_i(t) = (\cdots,b_{ij}(o_i(t)),\cdots)^\intercal_{j \in \mathcal{N}^{i,\text{out}}_S}$, observation $o_i(t) = (\{m_j(t)\}_{j \in \mathcal{I}^i_O}, z_i(t))^\intercal$,and the reward $r_i(t) = \sum_{j \in \mathcal{N}^i_R} \tau_j(t)$, $\forall$ $i \in \mathcal{V}$,  where
\begin{align}
    \tau_j(t) &= \begin{cases}
        0 & \text{ if $m_j(t) \geq 0$}\\
        -m^2_j(t) & \text{ otherwise.}
    \end{cases}
\end{align}

\begin{figure}
    \begin{tikzpicture}
        \begin{scope}
    \node[latent,draw=blue!50] (A) {$1$};
    \node[latent,draw=blue!50, below=of A] (B) {$4$};
    \node[latent,draw=blue!50 ,fill=blue!10,right=of A] (C) {$2$};
    \node[latent,draw=blue!50,fill=blue!10, below=of C] (D) {$3$};
    \node[latent,draw=blue!50, right=of C] (E) {$6$};
    \node[latent,draw=blue!50, right=of E] (F) {$9$};
    \node[latent,draw=blue!50 ,fill=blue!10, below=of E] (G) {$5$};
    \node[latent,draw=blue!50, below=of F] (H) {$7$};
    \node[latent,draw=blue!50, right=of H] (I) {$8$};
    \node[above=of C,yshift= -1cm] {$\mathcal{G}_S$};
    \edge {C} {A};
    \edge [ultra thick]{C} {D};
    \edge {D} {B};
    \edge [ultra thick]{G} {D};
    \edge {G} {E};
    \edge {G} {H};
    \edge {F} {H};
    \edge {H} {I};
    \edge {I} {F};
\end{scope}
\begin{scope}[xshift=8cm]
    \node[latent,draw=blue!50, fill=blue!10] (A) {$1$};
    \node[latent,draw=blue!50,fill=blue!10, below=of A] (B) {$4$};
    \node[latent,draw=blue!50 ,fill=blue!10,right=of A] (C) {$2$};
    \node[latent,draw=blue!50,fill=blue!10, below=of C] (D) {$3$};
    \node[latent,draw=blue!50, fill=blue!10, right=of C] (E) {$6$};
    \node[latent,draw=blue!50, right=of E] (F) {$9$};
    \node[latent,draw=blue!50 ,fill=blue!10, below=of E] (G) {$5$};
    \node[latent,draw=blue!50, below=of F] (H) {$7$};
    \node[latent,draw=blue!50, right=of H] (I) {$8$};
    \node[above=of C,yshift= -1cm] {$\mathcal{G}_O$};
    \edge[draw= red, ultra thick] {A} {C};
    \edge [draw= red, ultra thick] {B} {D};
    \edge [draw= red, ultra thick]{E} {G};
\end{scope}
\begin{scope}[xshift= 5cm, yshift=-4cm]
    \node[latent,draw=blue!50, fill=blue!10] (A) {$1$};
    \node[latent,draw=blue!50,fill=blue!10, below=of A] (B) {$4$};
    \node[latent,draw=blue!50 ,fill=blue!10,right=of A] (C) {$2$};
    \node[latent,draw=blue!50,fill=blue!10, below=of C] (D) {$3$};
    \node[latent,draw=blue!50, fill=blue!10, right=of C] (E) {$6$};
    \node[latent,draw=blue!50, right=of E] (F) {$9$};
    \node[latent,draw=blue!50 ,fill=blue!10, below=of E] (G) {$5$};
    \node[latent,draw=blue!50, below=of F] (H) {$7$};
    \node[latent,draw=blue!50, right=of H] (I) {$8$};
    \node[above=of C,yshift= -1cm] {$\mathcal{G}_R$};
    \edge[draw= violet, ultra thick] {A} {C};
    \edge [draw= violet, ultra thick] {B} {D};
    \edge [draw= violet, ultra thick]{E} {G};
    \edge [draw=violet]{H} {F};
    \edge [draw=violet]{I} {H};
    \edge [draw=violet]{F} {I};
\end{scope}
    \end{tikzpicture}
\caption{$\mathcal{G}_S$, $\mathcal{G}_O$, and $\mathcal{G}_R$ for Example 1.}
\label{fig:warehouse9}
\end{figure}
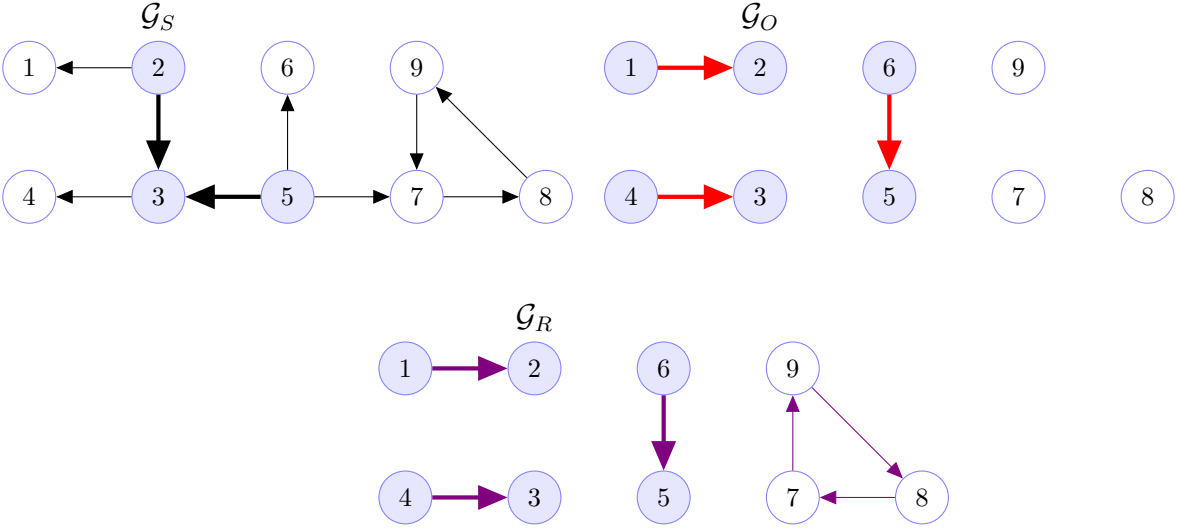


\section{Multi-zone temperature control problem}\label{sec:temp_control}
Consider a building  divided into $N$ zones. 
Given a state graph $\mathcal{G}_S = \{\mathcal{V}, \mathcal{E}_S\}$, an observation graph $\mathcal{G}_O = \{\mathcal{V}, \mathcal{E}_O\}$, and a reward graph $\mathcal{G}_R = \{\mathcal{V}, \mathcal{E}_R\}$, the goal of the multi-zone temperature control problem is for the temperature controller in each zone to find an optimal policy that controls a quantity related to the airflow rate and achieves the desired temperature within each zone. Each zone is denoted by a vertex in the graph. Assuming finite thermal resistance of the walls separating adjacent zones, the temperature of each zone is affected by the heat flow from the adjacent zones through walls leading to a dynamic coupling between the zones i.e.,
$\forall$ $i,j \in \mathcal{V}$, 
if zone $i$ and zone $j$ share a common wall, then  $j \in \I^i_S,~i\in \I^j_S$. The temperature in zone $i$, $i \in \mathcal{V}$, evolves according to
\begin{align}
    x^i_{t+1} &= \underbrace{(1 - \frac{\Delta}{\nu_i \zeta_i}) x^i_t + \frac{\Delta}{\nu^i} u^i_t }_{\text{self interaction}} + \underbrace{\sum_{j \in \mathcal{N}^i} (x^j_t - x^i_t) \frac{\Delta}{\nu^i \zeta^{ij}}}_{\text{neighbor interaction}}  + \underbrace{\frac{\Delta }{\nu^i \zeta^i}\epsilon_0 + \frac{\Delta }{\nu^i}\pi^i}_{\text{environment interaction}} + \underbrace{\frac{\sqrt{\Delta}}{\nu^i} w^i_t}_{\text{process noise}},
    \label{eq:tempzone}
\end{align}
where $x^i \in\mathbb{R}$ is the temperature in the zone $i$, $u^i \in \mathbb{R}$ is the corresponding control input, $\epsilon_0$ is the outdoor temperature, $\Delta$ is the time resolution, $\nu^i$ is the thermal capacitance of zone $i$, $\zeta^i$ denotes the thermal resistance of windows and walls between zone $i$ and the environment, $\zeta^{ij}$ denotes the thermal resistance between zones $i$, $j$, $\pi^i$ represents the constant heat addition from external sources into zone $i$, and $w^i\sim\mathcal{N}(0,1)$ represents the process noise in the evolution of temperature in zone $i$.
Moreover, the controller in each zone $i$ observes $o_i = (x^j)_{j\in \I^i_O}.$ After each step, each zone receives a reward $r_i(t) = -(x^i_t- x_i^*)^2 - \beta_i (u_t^i)^2$. Thus, $\I_R^i=\{i\}.$
 The simulation parameters are summarized in Table~\ref{table:tcsim}.
\begin{table}[h]
\caption{ {Simulation parameters for the temperature control example.}}
\label{table:tcsim}
\centering
   {
\begin{tabular}{ll}
    Simulation parameters & Value\\
      \hline
      Number of zones, $N$ & $40$\\
      Step size, $\Delta$ & $60~\mbox{sec}$\\
      Distribution of initial temperature for each zone, $x^i_0 $& $\mathcal{N}(30,2.5)$\\
      Initial state covariance, $\Sigma_{x_0}$ & $0.0001 \mathbb{I}_{N}$\\
      Target temperature, $z_T$ & $22\degree\mbox{C}$\\
      Outside temperature,$\epsilon_0$ & $30\degree\mbox{C}$\\
      Process noise covariance, $\Sigma_{\eta_t}$ &  $\dfrac{\Delta\times 6.25}{(\nu^i)^2}$\\
      State cost matrix, $Q_t$ &  $\mathbb{I}_{N}$\\
      Terminal cost matrix, $Q_T$ &  $\mathbb{I}_{N}$\\
      Control cost matrix, $R_t$ &  $\beta_i\mathbb{I}_{N}$\\
      Control cost parameter, $\beta_i$ $\forall~i$& 0.01\\
      Constant heat from environment to zone $i$, $\pi^i$ & $1 ~\mbox{kW}$\\
      Thermal capacitance of zone $i$,  $\nu^i$ & $200~\mbox{kJ/}\degree\mbox{C}$\\
      Thermal resistance between zone $i$ and environment, $\zeta^{i}$  & $1\degree\mbox{C/kW}$\\
      Thermal resistance between zones $i$ and $j$, {$\zeta^{ij}$}& {$\begin{cases}
          $1\degree\mbox{C/kW}$,& \text{if $(i,j) \in\mathcal{E}_S$}\\ 
          $0$,& \text{otherwise}
      \end{cases}$}\\
    \hline
  \end{tabular}}
\end{table}

\section{Additional results}\label{sec:additional}

\noindent
\textbf{Effect of $\kappa$ on the rate of convergence of MAStAC algorithm}\label{sec:kappacomp}

To investigate the effect of $\kappa$ on the rate of convergence of the MAStAC algorithm, we run the 40-warehouse example with $\kappa = 2,4,6,8.$ Fig.~\ref{fig:kappacomp} shows the comparison of the mean and standard deviation of total reward for various choices of $\kappa$ in the `MAStAC approximated' algorithm. We observe that as the value of $\kappa$ increases, the rate of convergence decreases which can be attributed to the fact that a higher value of $\kappa$ yields a denser $\mathcal{G}^\kappa_{\text{VD}}.$ 
\begin{figure}[H]
    \centering
    \includegraphics[width=0.5\textwidth]{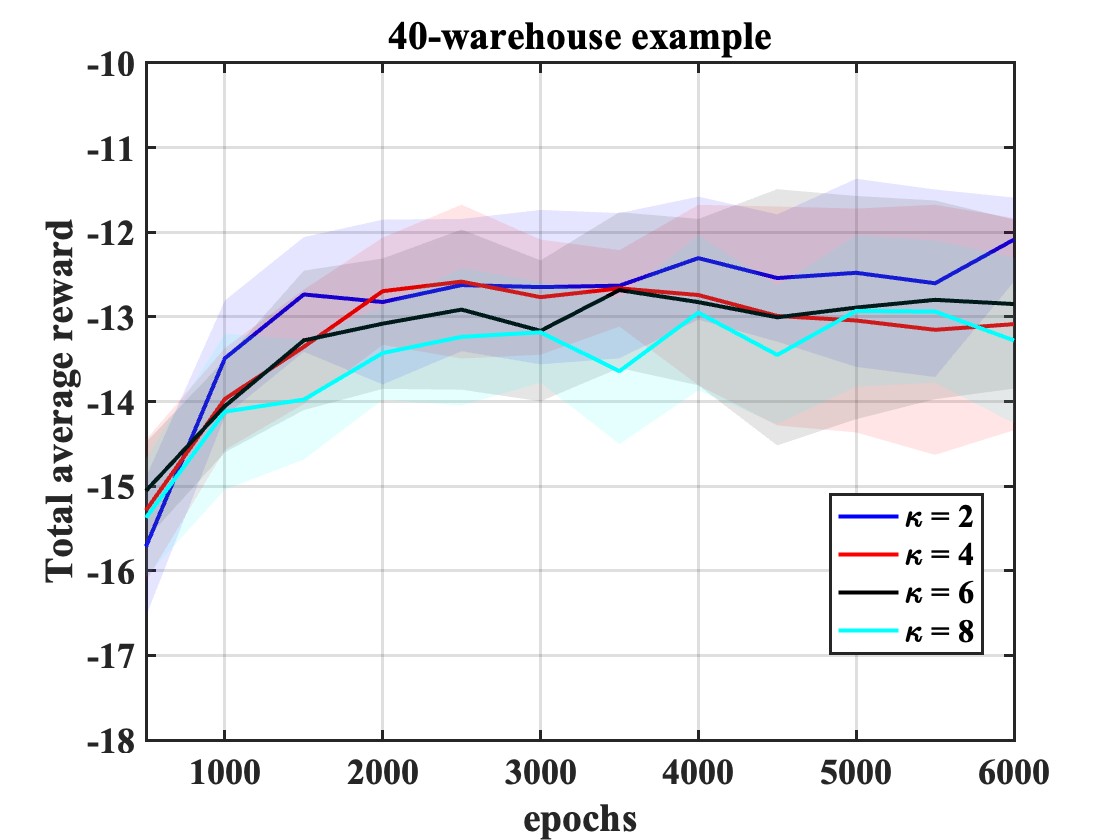}
    \caption{Comparison of the total average reward for 15 MC simulations for $\kappa= 2,4,6,8.$}
    \label{fig:kappacomp}
\end{figure}


\section{Simulation parameters}\label{sec:sim_para}
\begin{table*}[htpb]
\caption{Parameters used in the MAStAC simultaneous algorithm for the three examples.}
\label{table:simparam}
\begin{center}
\begin{tabular}{cccc}
Parameter & 9-warehouse& 40-warehouse& \makecell{40-zone \\temperature control}\\
    \hline
Number hidden layers (actor)& 3&3&\longdash[2]\\
Number hidden layers (critic)&3&3&3\\
Actor output activation& softmax& softmax& tanh\\
Learning rate actor& 1e-4&5e-4&1e-4\\
Learning rate actor& 1e-3&5e-3&1e-3\\
Batch size& 256 &256& 256\\
Episode length& 8& 8& 40\\
Epochs&3500&6000&5000\\
Discount factor $\gamma$& 0.95&0.95&0.9\\
\hline
\end{tabular}
\end{center}
\end{table*}

\end{document}